\newcommand\vldbdoi{XX.XX/XXX.XX}
\newcommand\vldbpages{XXX-XXX}
\newcommand\vldbvolume{14}
\newcommand\vldbissue{1}
\newcommand\vldbyear{2020}
\newcommand\vldbauthors{\authors}
\newcommand\vldbtitle{\shorttitle} 
\newcommand\vldbavailabilityurl{URL_TO_YOUR_ARTIFACTS}
\newcommand\vldbpagestyle{plain}
\titleformat{\subsubsection}{\normalfont\normalsize\bfseries}{\thesubsubsection}{1em}{}
\renewenvironment{cases}[1][\lbrace]{%
  \def\@ldelim{#1}
  \matrix@check\cases\env@cases
}{%
  \endarray\right.%
}
\patchcmd{\env@cases}{\lbrace}{\@ldelim}{}{}
\newcommand{\keys}{\textnormal{\textsf{L}}\xspace}
\newcommand{\owt}{\textsf{OWT}\xspace}
\newcommand{\pan}{\textsf{PAN}\xspace}
\newcommand{\mono}{\texttt{MonoActive}\xspace}
\newcommand{\allkeys}{\texttt{MonoAll}\xspace}
\newcommand{\ala}{\texttt{AllAlign}\xspace}
\def\rec{\kw{rec}\xspace}
\newcommand{\sky}{\textnormal{\textsf{S}}\xspace}
\newcommand{\key}{\textnormal{\textsf{K}}\xspace}
\newcommand{\map}{\texttt{map}\xspace}
\newcommand{\miao}[1]{{{#1}}}
\newcommand{\miaopf}[1]{{{#1}}}
\newcommand{\fmax}{\ensuremath{f_{\textT}}\xspace}
\newcommand{\aset}{\textnormal{\textsf{X}}\xspace}
\newcommand{\freq}[2]{\ensuremath{f(#1, #2)}\xspace}
\newcommand{\freqq}[2]{\ensuremath{f^2(#1, #2)}\xspace}
\newcounter{subtheorem}
\newcommand{\wf}{\ensuremath{w}\xspace}
\newcommand{\dset}{\textnormal{\textsf{D}}\xspace}
\newcommand{\textT}{\textnormal{T}\xspace}
\newcommand{\textS}{\textnormal{S}\xspace}
\newcommand{\data}{\textnormal{T}\xspace}
\newcommand{\query}{\textnormal{Q}\xspace}
\newcommand{\datas}{\textnormal{S}\xspace}
\newcommand{\Jaccard}{\texttt{J}\xspace}
\newcommand{\func}{\Jaccard\xspace}
\newcommand{\hf}{\textit{h}\xspace}
\newcommand{\cn}{\textcolor{red}{[\raisebox{-0.2ex}{\tiny\shortstack{citation\\[-1ex]needed}}]}}
\newcommand{\stitle}[1]{\vspace{1ex}\noindent{\bf #1}}
\newcommand{\eetitle}[1]{\vspace{0.8ex}\noindent{\em\underline{#1}}}
\newcommand{\todo}[1]{\textcolor{red}{TODO: #1}\PackageWarning{TODO:}{#1!}}
\newcommand{\ie}{i.e.,\xspace}
\newcommand{\aka}{a.k.a.,\xspace}
\newcommand{\eat}[1]{\xspace}
\newcommand{\kw}[1]{{\ensuremath {\mathsf{#1}}}\xspace}
\newtheorem{theorem}{Theorem}
\newtheorem{lemma}{Lemma}
\newtheorem{example}{Example}
\newtheorem{definition}{Definition}
\begin{document}

\pagestyle{fancy}
\fancyhf{} 
\renewcommand{\headrulewidth}{0pt} 

\pagenumbering{gobble}
\pagestyle{plain}


\setcounter{section}{0}
\pagenumbering{arabic}
\setcounter{page}{1}


\title{Near-Duplicate Text Alignment under Weighted Jaccard Similarity}

\author{Yuheng Zhang$^1$ \quad Miao Qiao$^2$ \quad Zhencan Peng$^1$ \quad Dong Deng$^1$}\vspace{.25em}
\authornote{Dong Deng is the corresponding author.}
\affiliation{%
  \institution{$^1$Rutgers University \quad $^2$University of Auckland}
}\vspace{.25em}
\email{{yuheng.zhang, zhencan.peng, dong.deng}@cs.rutgers.edu,
    miao.qiao@auckland.edu}

\begin{abstract}

Near-duplicate text alignment is the task of identifying all subsequences (\ie substrings) in a collection of texts that are similar to a given query. Traditional approaches rely on seeding–extension–filtering heuristics, which lack accuracy guarantees and require many hard-to-tune parameters. More recent methods leverage min-hash techniques. They propose to group all the subsequences in each text by their min-hash and index the groups. When a query arrives, they can use the index to find all the min-hash sketches that are similar to the query's sketch and then return the corresponding subsequences as the results efficiently. Thus these methods guarantee to identify all subsequences whose estimated Jaccard similarity with the query exceed a user-provided threshold. However, these methods only support unweighted Jaccard similarity, which cannot capture token importance or frequency, limiting their effectiveness in real-world scenarios where tokens carry weights, such as TF-IDF.

In this paper, we address this limitation by supporting weighted Jaccard similarity using consistent weighted sampling. We design an algorithm \mono to group all subsequences in a text by their consistent weighted sampling. We analyze the complexity of our algorithm. For raw count term frequency (where a token's weight is proportional to its frequency in the text), we prove \mono generates $O(n + n\log f)$ groups (each group occupies $O(1)$ space) in expectation for a text with $n$ tokens, where $f$ is the maximum token frequency in the text. We further prove that our algorithm is optimal, meaning that any algorithm must generate $\Omega(n + n \log f)$ groups in expectation. Extensive experiments show that \mono outperforms the state-of-the-art by up to 4.7$\times$ in speed and reduces index size by up to 30\%, with superior scalability.

\end{abstract}


\maketitle

\pagestyle{\vldbpagestyle}
\begingroup\small\noindent\raggedright\textbf{PVLDB Reference Format:}\\
\vldbauthors. \vldbtitle. PVLDB, \vldbvolume(\vldbissue): \vldbpages, \vldbyear.\\
\href{https://doi.org/\vldbdoi}{doi:\vldbdoi}
\endgroup
\begingroup
\renewcommand\thefootnote{}\footnote{\noindent
This work is licensed under the Creative Commons BY-NC-ND 4.0 International License. Visit \url{https://creativecommons.org/licenses/by-nc-nd/4.0/} to view a copy of this license. For any use beyond those covered by this license, obtain permission by emailing \href{mailto:info@vldb.org}{info@vldb.org}. Copyright is held by the owner/author(s). Publication rights licensed to the VLDB Endowment. \\
\raggedright Proceedings of the VLDB Endowment, Vol. \vldbvolume, No. \vldbissue\ %
ISSN 2150-8097. \\
\href{https://doi.org/\vldbdoi}{doi:\vldbdoi} \\
}\addtocounter{footnote}{-1}\endgroup

\ifdefempty{\vldbavailabilityurl}{}{
\vspace{.3cm}
\begingroup\small\noindent\raggedright\textbf{PVLDB Artifact Availability:}\\
The source code, data, and/or other artifacts have been made available at \url{\vldbavailabilityurl}.
\endgroup
}

\section{Introduction}\label{sec:intro}

This paper studies the near-duplicate text alignment problem~\cite{DBLP:conf/clef/PotthastGHKMOTBGRS12,DBLP:conf/clef/PotthastHBBTRS14,allign,txtalign,DBLP:journals/pacmmod/PengZD24,llmalign}. Given a collection of data texts, the task takes a short query text and returns all subsequences (i.e., substrings) of the data texts that are similar to the query. Near-duplicate text alignment has become increasingly important in the era of large language models (LLMs), with key applications in test set leakage (also known as data contamination) detection~\cite{DBLP:conf/acl/Magar022}, training data deduplication~\cite{betterlm}, and memorization analysis~\cite{quantifymemo,DBLP:conf/emnlp/VuHHS23}. Beyond LLMs, this problem also plays a crucial role in domains such as bioinformatics~\cite{ALTSCHUL1990403}, log analysis~\cite{DBLP:conf/uss/DingZ0M23}, and plagiarism detection~\cite{DBLP:conf/clef/PotthastHGTKRSS13,DBLP:conf/clef/PotthastBESR10}.

 Due to the high computational cost of near-duplicate text alignment,  previous methods often adopt the seeding–extension–filtering heuristic~\cite{DBLP:conf/clef/PotthastBESR10,DBLP:conf/usenix/Manber94,DBLP:journals/cn/BroderGMZ97,DBLP:conf/sigmod/SchleimerWA03,DBLP:conf/www/HamidBCH09,DBLP:conf/sigmod/BrinDG95,DBLP:conf/sigir/SeoC08,DBLP:conf/www/KimCT09,DBLP:books/aw/Baeza-YatesR99,DBLP:journals/jasis/HoadZ03,DBLP:conf/sigmod/WangXQWZI16}. However, these methods lack accuracy guarantees and often involve many hard-to-tune hyperparameters~\cite{DBLP:journals/csur/FoltynekMG20}. To address these limitations, recent studies have proposed to use min-hash techniques~\cite{minhash,DBLP:conf/nips/0001OZ12} for near-duplicate text alignment. These approaches guarantee to retrieve all subsequences whose estimated Jaccard similarities with the query exceed a user-defined threshold~\cite{allign,txtalign,llmalign,DBLP:journals/pacmmod/PengZD24}. However, a key limitation is that they only support the unweighted Jaccard similarity, which treats all tokens equally—regardless of their frequency or importance (note depending on the tokenizer, a token can be a word~\cite{whitespace_splitted_tokens}, a $q$-gram~\cite{n-gram}, a byte-pair-encoding~\cite{bpe}, etc.).

To highlight the issue of unweighted Jaccard similarity, consider $\query = \text{\textit{AAAAAATTTTTTCCCCCC}}$, $\textT = \text{\textit{AAAAAATTTTTGCCCCCC}},$ and $\textS = \text{\textit{AATTGCC}.}$ Intuitively, $\query$ is much more similar to $\textT$ than to $\textS$. Yet, under unweighted Jaccard similarity (with each token as a 2-gram), both $\query$ and $\textT$, as well as $\query$ and $\textS$ yield the same similarity score: $4/7$. In contrast, weighted Jaccard similarity correctly reflects the difference: when each token in a text is weighted by its frequency in the text, $\query$ and $\textT$ have a similarity of $0.8$, while $\query$ and $\textS$ only score $0.2$. Moreover, unweighted Jaccard similarity fails to distinguish between stop words and content words. For example, consider $\query = \text{``\textit{I read about Einstein in a book}''}$, $\textT = \text{``\textit{I studied Einstein through a book}''}$, $\textS = \text{``\textit{I roamed about in a castle}''}.$



\noindent Semantically, $\query$ is clearly closer to $\textT$ than to $\textS$. However, the unweighted Jaccard similarity is the same for both: $4/9$ (with each token as a word). If a meaningful token weighting scheme such as TF-IDF~\cite{singhal2001modern} is applied, then stop words, e.g., \textit{I, in, a, about, through}, could have near-zero weight and the other words may have near-one weight, and thereby drawing clear distinctions. Under this scheme, the weighted Jaccard similarity between $\query$ and $\textT$ is around $0.5$, while that between $\query$ and $\textS$ drops to nearly zero.

\stitle{Existing Methods.} Existing methods propose to group, as the indexing process, the subsequences in each data text based on  their min-hash values. This way, when a query arrives, they use the index to identify all the min-hash sketches that are similar to the query's min-hash sketch and then return the corresponding subsequences as the results efficiently~\cite{allign,llmalign,DBLP:journals/pacmmod/PengZD24}. A key observation made by these methods is that the nearby subsequences of a text tend to share the same min-hash (this is because appending a token to a subsequence most likely would not change the min-hash of the subsequence). 
Given a text with $n$ tokens, they group the $O(n^2)$ subsequences in the text into $O(n)$ groups while represent each group with a tuple of $O(1)$ space.

\stitle{Limitations of Existing Methods.} In many real-world scenarios, tokens are associated with weights, such as TF-IDF (term frequency-inverse document frequency) weights~\cite{DBLP:conf/hicss/PasiB23}. Estimating weighted Jaccard similarity requires consistent weighted sampling (CWS)~\cite{DBLP:conf/icdm/Ioffe10}, as standard min-hash\cite{minhash} applies only to the unweighted case. Existing methods do not support CWS or weighted Jaccard similarity. The only exception is \ala~\cite{allign} which supports multi-set Jaccard, a special case of weighted Jaccard where each token's weight is its frequency in the text.  \ala is a greedy algorithm which groups the subsequences by their ``multi-set min-hash''. However, it lacks a formal complexity analysis, including an upper bound on the number of groups generated and its time complexity. Due to its recursive nature, analyzing its complexity is particularly difficult.

%






\stitle{Our Approach.} In this paper, we propose an efficient algorithm \mono to group the subsequences of a text by their CWS or multi-set min-hash. We rigorously analyze the complexity for \mono. Specifically, for multi-set Jaccard similarity, we prove that \mono produces $O(n + n\log f)$ groups for a text with $n$ tokens in expectation, where $f$ is the maximum token frequency in the text. Each group occupies only $O(1)$ space. Furthermore, we prove that \mono is \textit{worst-case optimal} by presenting a lower bound analysis on the number of groups. That is, \miao{there exists text instances for which any algorithm under the hash-based framework must generate at least $\Omega(n + n\log f)$ groups in expectation}. We further develop an optimization that improves the time complexity of our algorithm from \miao{$O(nf\log n)$ to $O(n\log n + n \log n\log f)$} and space complexity from $O(nf)$ to $O(n + n\log f)$. Finally, we show our algorithm can be generalized to group the subsequences in a text based on their consistent weighted samplings, as long as the weight of a token in a text is monotonically increasing with its frequency in the text and is independent of other properties of the text. For example, for logarithmic term frequency (where the weight of a token $t$ with frequency $f_t$ is proportional to $\log(f_t+1)$), \mono generates $O(n+n\log \log f)$ groups in expectation.


In summary, we make the following contributions in this paper.


\begin{itemize}[leftmargin=2em]
\item We develop \mono, the first algorithm for near-duplicate text alignment under weighted Jaccard similarity. 

\item  We rigorously analyze the complexity of \mono and design optimizations to reduce its time and space complexities. 

\item For the special case of multi-set Jaccard similarity, we prove \mono is optimal, while the greedy algorithm \ala lacks theoretical guarantees. 

\item For multi-set Jaccard similarity, experimental results show that \mono outperforms \ala by up to $26\times$ in index construction time, reduces index size by up to $30\%$, and improves query latency by up to $3\times$. The performance gain increases as the text length $n$ grows, exhibiting superior scalability. 
\end{itemize}


The rest of the paper is organized as follows. We introduce preliminary knowledge in Section~\ref{sec:two} and the framework in Section~\ref{sec:three}. Section~\ref{sec:four} presents our grouping algorithm and Section~\ref{sec:six} extends it to weighted Jaccard similarity. Section~\ref{sec:experiment} show experiment results, Section~\ref{sec:related} reviews related work, and Section~\ref{sec:conclude} concludes the paper.

\section{Preliminaries}\label{sec:two}


We first consider a special case where the tokens are weighted by their frequencies in the text (\ie term frequency). In this case, the weighted Jaccard similarity degrades to multi-set Jaccard similarity.

\subsection{Multi-Set Jaccard Similarity}

We first define notations that will be used in the paper. A text \textT is a sequence of tokens, where $\textT[i]$ is the $i$-th token in the text. We define $\freq{t}{\textT}$ as the frequency of a token $t$ in \textT, \ie the number of occurrences of $t$ in \textT. A text \textT can be uniquely mapped to a set which exclusively contains  all the distinct tokens of \textT as elements. For ease of presentation, we refer to \textT both as a text and a set. We use $|\textT|$ to denote the length of the text $\textT$\eat{ and $||\textT||$ to represent the number of distinct tokens in $\textT$ (\ie the size of the set $\textT$)}. Furthermore, we use $\textT[i,j]$ to represent the subsequence of $\textT$ from the $i$-th token to the $j$-th token (inclusive), where $1\leq i \leq j\leq |\textT|$. Note all the definitions naturally extend to subsequences. Thus $\textT[i,j]$ is both a subsequence and a token set. Given two texts \textT and \textS, their \textit{multi-set Jaccard similarity} is 
$$\Jaccard_{\textT, \textS} = \frac{\sum_{t\in\textT\cup\textS} \min(\freq{t}{\textT}, \freq{t}{\textS})}{\sum_{t\in\textT\cup\textS} \max(\freq{t}{\textT}, \freq{t}{\textS})}.$$

\begin{example}
Consider two texts
$\textT = ABBC$ and $\textS = BCD$ where each letter denotes a token. The union $\textT\cup\textS=\{A, B, C, D\}$. The token frequencies are $\freq{A}{\textT} = 1$, $\freq{B}{\textT} = 2$, $\freq{C}{\textT} = 1$, $\freq{D}{\textT} = 0$, $\freq{A}{\textS} = 0$, $\freq{B}{\textS} = 1$, $\freq{C}{\textS} = 1$, and $\freq{D}{\textS} = 1$. Therefore, their multi-set Jaccard similarity is $\Jaccard_{\textT, \textS} = \frac{2}{5}$.
\end{example}


\subsection{Min-Hash for Multi-Set Jaccard Similarity}\label{sec:multi-set}

The multi-set Jaccard similarity of two texts can be efficiently and accurately estimated by their \textit{min-hash} sketches. Specifically, let $\hf(t,x)$ be a random universal hash function that takes a token $t$ and a positive integer $x$ as input and outputs a non-negative integer hash value. The (multi-set) min-hash of a text \textT is 
\begin{equation}\label{eq:min}
\hf(\data) = \min\{\hf(t, x) \mid t \in \data, 1 \leq x \leq \freq{t}{\data}\}.    
\end{equation}


\begin{example}\label{exp:2}
Consider the text $\textT$ and the hash function $\hf$ from the running example in the caption of Fig.~\ref{fig:run}, we have $\hf(\textT) = \hf(B, 4) = 1$.
\end{example}


Given a random universal hash function \hf and two texts $\data$ and $\datas$, the probability that they share the same min-hash is equivalent to their multi-set Jaccard similarity. Formally, we have 
$$\mathbf{Pr}(\hf(\textT)=\hf(\textS))=\Jaccard_{\textT,\textS}.$$

\noindent This is because there are $\Sigma_{t\in\textT\cup\textS}\max(\freq{t}{\textT}, \freq{t}{\textS})$ unique hash values and $\Sigma_{t\in\textT\cup\textS}\min(\freq{t}{\textT}, \freq{t}{\textS})$ common hash values in the two texts $\data$ and $\datas$. $\data$ and $\datas$ share the same min-hash if and only if the smallest  hash value of all unique hash values is among their common hash values. The probability of the latter is exactly $\Jaccard_{\textT,\textS}$.



\stitle{Min-Hash Sketch.} With $k$ independent random universal hash functions $\hf_1, \cdots, \hf_k$, the min-hash sketch of a text $\textT$ consists of $k$ min-hash values $\hf_1(\textT), \cdots, \hf_k(\textT)$. The multi-set Jaccard similarity of two texts $\textT$ and $\textS$ can be unbiasedly estimated by 
 \begin{equation}\label{eq:jaces}
\hat{\Jaccard}_{\textT, \textS}=\frac{1}{k}\sum_{i=1}^{k}\mathbf{1}\{\hf_i(\textT)=\hf_i(\textS)\}
\end{equation}
where $\mathbf{1}$ is an indicator function~\cite{minhash}. 



\stitle{Implementation Details.} The hash function $\hf(t,x)$ can be drawn from a family $\mathcal{H}$ of universal hash functions $\hf(t, x) = (a_1t + a_2 x + b) \mod p$ where $p$ is a large prime and $a_1, a_2, b$ are randomly chosen integers module $p$ with $a_1 \neq 0, a_2 \neq 0$. We assume there is no hash collision under the universal hash function throughout the paper.

\subsection{Near-Duplicate Text Alignment}

Near-duplicate text alignment is formally defined as below.
    
\begin{definition}\label{def:problem}
Given a collection of texts \dset, a query $\query$, and a similarity threshold $\theta \in [0, 1]$, the near-duplicate text alignment problem returns all the subsequences $\textT[i, j]$, where $\textT \in \dset$ and $1 \leq i \leq j \leq |\textT|$, such that $\hat{\Jaccard}_{\query, \textT[i, j]} \geq \theta$.
\end{definition}

\begin{example}
    Consider a collection of two texts $\dset = \{\textT, \textS\}$, where $\textT = ABBCDE$ and $\textS = BCCDEF$, and a query $\query =ACE$. Let the similarity threshold be $\theta = 0.5$. Suppose the similarity estimation is accurate. The near-duplicate text alignment problem returns three subsequences $\textT[1, 6], \textT[4,6],$ and $\textS[3, 5]$. Note that $\textS[2,5]=CCDE$ is not a result as $\Jaccard_{\query,\textS[2,5]}=\frac{2}{5}<\theta$.
\end{example}


\section{The Framework} \label{sec:three}

\newcommand{\cw}{\ensuremath{\langle \textT, \hf, v, a, b, c, d\rangle}\xspace}
\newcommand{\pt}{\ensuremath{\mathcal{P}}\xspace}
\newcommand{\hset}{\ensuremath{\mathsf{H}}\xspace}



To improve the query performance of near-duplicate text alignment, it is natural to index the min-hash sketches of all subsequences within each text in $\dset$. This way, when a query arrives, one shall use the index to identify all the min-hash sketches that are similar to the query's sketch  and then return the corresponding subsequences as the results efficiently. 



The main challenge of indexing the min-hash sketches is the cost -- the number of min-hash sketches in a text grows quadratically with the length of the text (as there are $O(n^2)$ subsequences in a text with $n$ tokens). To reduce the index size, we have a key observation: the min-hash of adjacent subsequences of a text $\textT$, such as $\textT[i,j]$ and $\textT[i,j+1]$, tend to be the same. Thus, the min-hash of adjacent subsequences can be grouped and compactly represented. For this purpose, we define the  compact window.

\subsection{Compact Window and Partition}

Below, we formally define the compact window~\cite{allign}, a structure used to represent the groups of subsequences where the subsequences in each group share the same min-hash.

\begin{definition}[Compact Window~\cite{allign}]\label{def:cw}
    Given a text $\textT$ and a random universal hash function $\hf$, a compact window in \textT is a tuple $\langle \textT, \hf, v, a, b, c, d\rangle$ satisfying that 

    \begin{itemize}
        \item $1 \leq a \leq b \leq c \leq d \leq |\textT|$ and

        \item $\hf(\data[i, j]) = v$ for every integers $i\in[a,b]$ and $j\in [c,d]$.
    \end{itemize}

\end{definition}

\begin{figure*}[htbp]\vspace{-5em} 
    \centering
    \subfigure[\small{All subsequences' min-hash.}]{
    \label{fig:hashvalues}
    \includegraphics[width=0.19\linewidth]{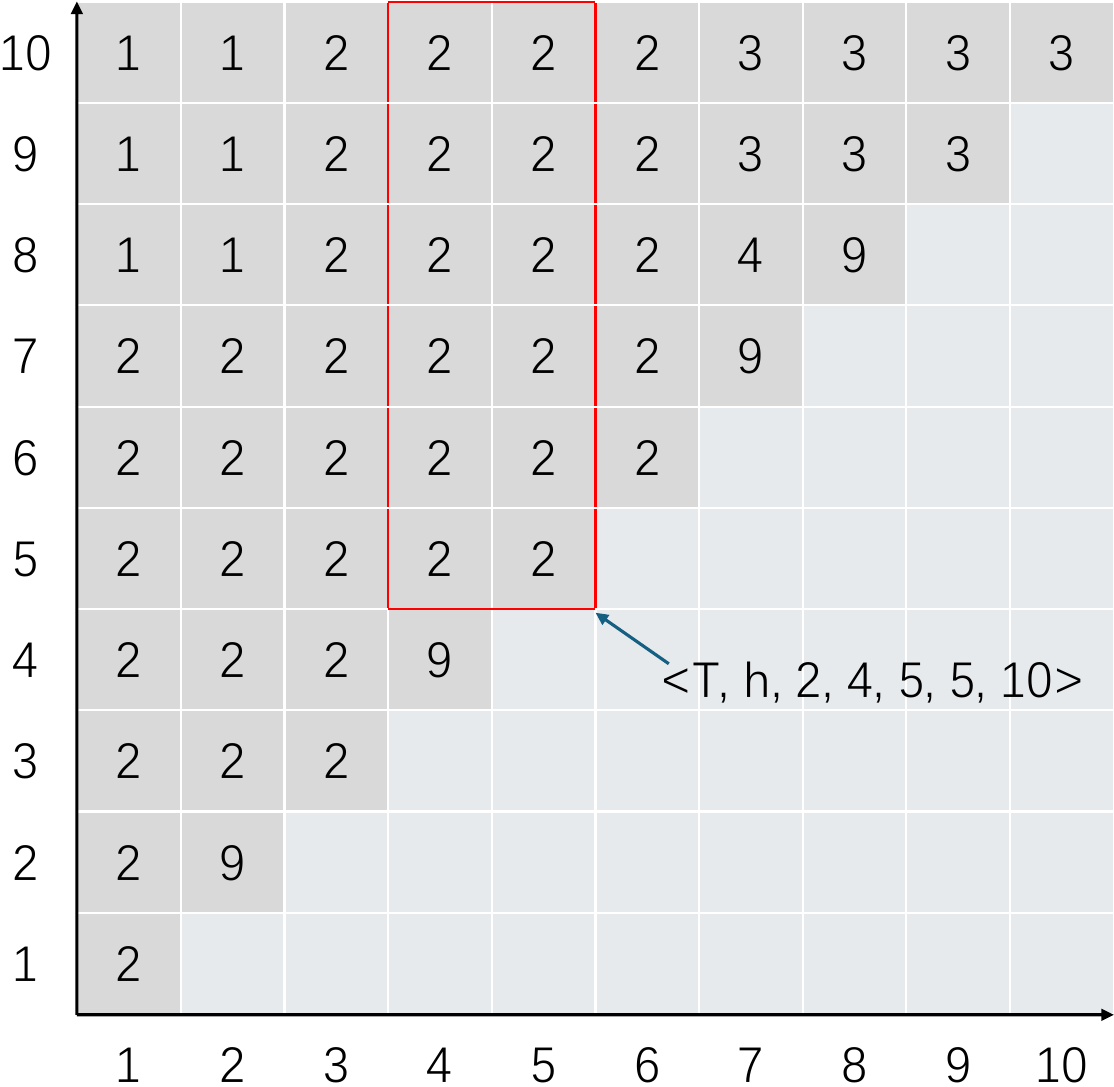}
    }
    \hspace*{-.5em}
    \subfigure[\small{An example partition.}]{
    \label{fig:partition}
    \includegraphics[width=0.19\linewidth]{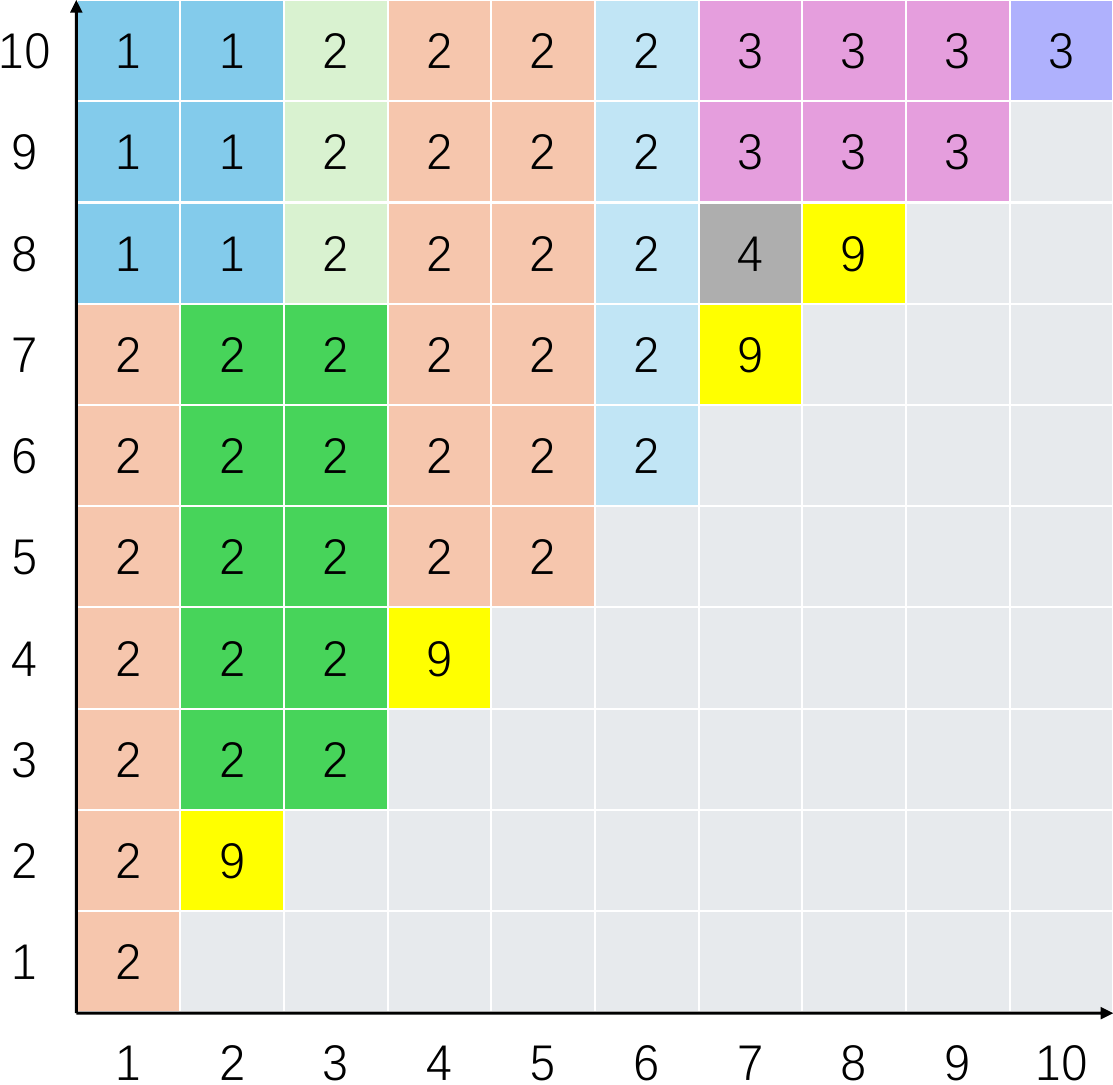}
    }
    \hspace*{-.5em}
    \subfigure[\small{Another example partition.}]{
    \label{fig:grid2}
    \includegraphics[width=0.19\linewidth]{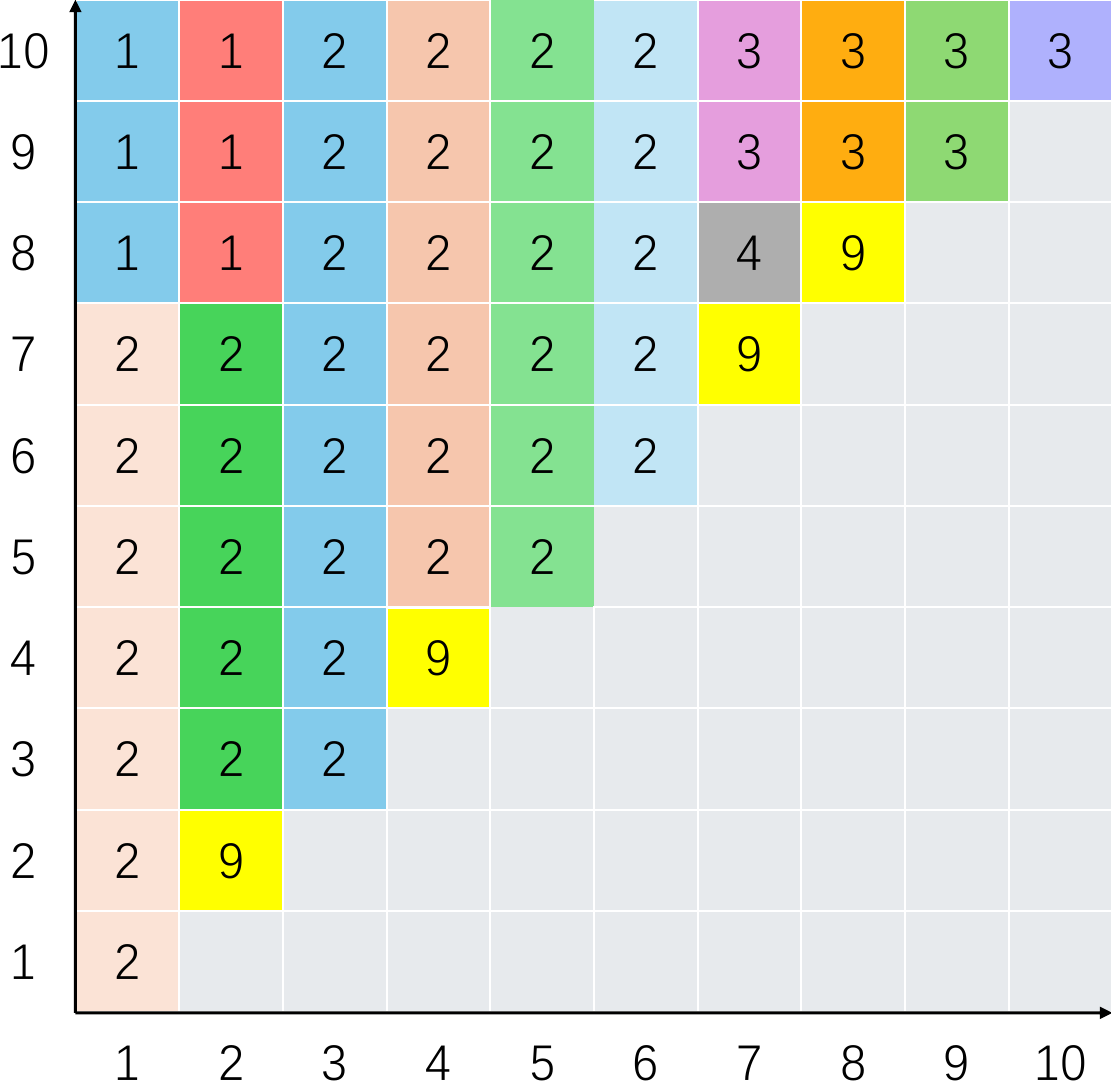}
    }
    \hspace*{-.5em}
    \subfigure[\small{All keys in $\textT$.}]{
    \label{fig:allkeys}
    \includegraphics[width=0.19\linewidth]{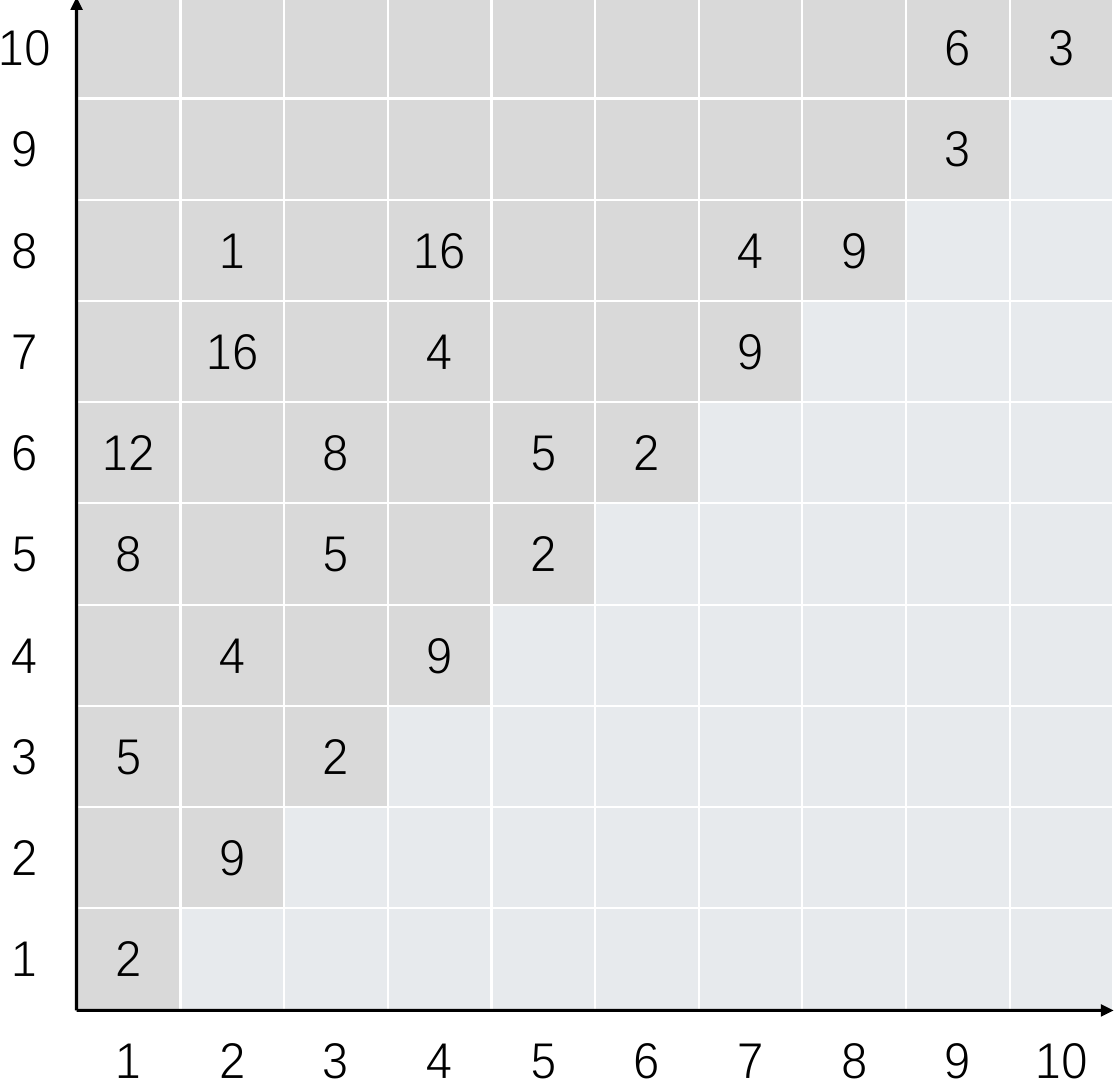}
    }
    \hspace*{-.5em}
    \subfigure[\small{All active keys in $\textT$.}]{
    \label{fig:activekeys}
    \includegraphics[width=0.19\linewidth]{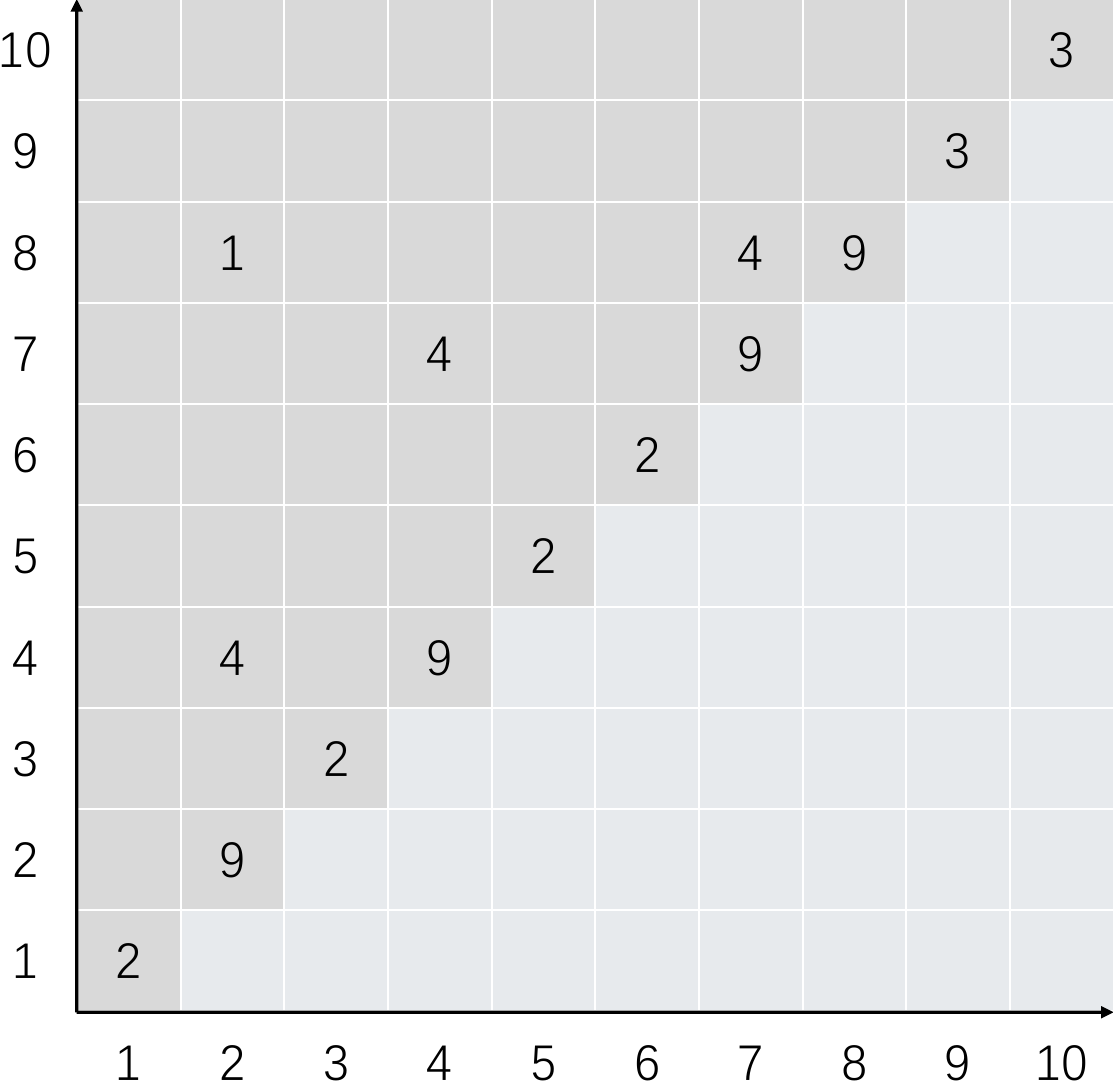}
    }
    \vspace{-1.5em} \\
    \caption{A running example used throughout the paper with a text $\textT=ABABAABBCC$ and a hash function \textnormal{$\hf$} defined as follows: \textnormal{$\hf(A,1) = 2, \hf(A,2) = 5, \hf(A,3) = 8, \hf(A,4) = 12$, $\hf(B,1) = 9, \hf(B,2) = 4, \hf(B,3) = 16, \hf(B,4) = 1$, $\hf(C,1) = 3$,} and \textnormal{$\hf(C,2) = 6$}.}\vspace{-1em} 
    \label{fig:run}
\end{figure*}

\begin{example}
Consider the running example in Figure~\ref{fig:run}. Figure~\ref{fig:hashvalues} shows the min-hash of all the subsequences of the text \textT. Specifically, the integer in the cell $(i,j)$ is the min-hash of the subsequence $\textT[i,j]$. For example, the min-hash of $\textT[1,10]$ is shown in the cell $(1,10)$, which is $1$. As one can verify, both  $\langle\textT,\hf,1,1,2,8,10\rangle$ and $\langle\textT,\hf,2,4,5,5,10\rangle$ (the highlighted rectangle) are compact windows. 
\end{example}

Hereinafter, for two integers $p,q$, we abuse $[p,q]$ to denote the integer set $\{p,p+1, \cdots, q\}$. Let $[a,b]\times[c,d]$ be the set of all integer pairs $(i,j)$ where $i\in[a,b]$ and $j\in[c,d]$. A compact window \cw represents all the subsequences $\textT[i,j]$ where $(i,j)\in[a,b]\times[c,d]$. We aim to generate a set of compact windows such that each subsequence in \textT is represented by one and only one compact window in the set (\ie the set of compact windows is a lossless compression of the min-hash of all subsequences in a text). Formally, we define the concept of \textit{partition} as below.

\begin{definition}[Partition] \label{def:partition}
    Given a text $\textT$ and a hash function $\hf$, a partition $\pt(\textT, \hf)$ is
    a set of compact windows $\{W_1, W_2, \cdots, W_{l}\}$, where $l=|\pt(\textT, \hf)|$ and $W_x = \langle \textT, \hf, v_x, a_x, b_x, c_x, d_x\rangle$, that satisfies
    \begin{itemize}[leftmargin=*]
        \item Disjointness: For any two compact windows $W_x$ and $W_y$ in $\pt(\textT, \hf)$,  $$\left([a_x, b_x] \times [c_x, d_x]\right) \cap \left([a_y, b_y] \times [c_y, d_y]\right) = \phi.$$
        \item Coverage: The union of all compact windows in $\pt(\textT, \hf)$ covers all the subsequences in $\textT$, \ie $$\{(i,j)\mid 1\leq i\leq j\leq |\textT|\} \subseteq \cup_{x=1}^{l} \left([a_x, b_x] \times [c_x, d_x]\right).$$ 
    \end{itemize}
\end{definition}

\begin{example}
Consider the running example in Figure~\ref{fig:run}. Figures~\ref{fig:partition} and~\ref{fig:grid2} show two example  partitions with 13 and 17 compact windows respectively, as outlined by the colors. Each colored rectangle $[a,b]\times[c,d]$ in the figures corresponds to a compact window $\cw$. All the subsequences in this rectangle share the same min-hash $v$.  


\end{example}


\subsection{Indexing and Query Processing}

Algorithm~\ref{algo:index} shows the pseudo-code of indexing in our framework. It takes a collection $\dset$ of data texts and an integer $k$ as input and produces $k$ inverted indexes of compact windows. For this purpose, it first randomly selects $k$ independent hash functions $\hf_1,\hf_2,\cdots,\hf_k$. Then, for each text $\textT$ in $\dset$, and each hash function $\hf_i$, it generates a partition $\pt(\textT,\hf_i)$. The partition generation algorithm will be described in the next section. For each compact window $\langle \textT, h_i, v, a, b, c, d\rangle$ in the partition, it is appended to the inverted list $I_i[v]$. Finally, $k$ inverted indexes $I_1, \cdots, I_k$ are returned.

\begin{figure}[!tbh]
\begin{algorithm}[H]
    \caption{Indexing($\dset, k$)}
    \label{algo:index}
    \KwIn{$\dset$: a collection of data texts; $k$: an integer.}
    \KwOut{$I_1, I_2, \cdots, I_k$: $k$ inverted indexes.}
    
    randomly select $k$ hash functions $\hf_1, \hf_2, \cdots, \hf_k$ from $\mathcal{H}$\;
    \ForEach{$\textT\in\dset$}{ \nllabel{algo:index:2}
        \ForEach{$i\in[1,k]$}{\nllabel{algo:index:3}
            $\mathcal{P}\gets$\textsc{PartitionGeneration}$(\textT, \hf_i)$\; \nllabel{algo:index:4}
            \ForEach{$\langle \textT, \hf_i, v, a, b, c, d\rangle\in\mathcal{P}$}{ \nllabel{algo:index:5}
                append $\langle \textT, h_i, v, a, b, c, d\rangle$ to $I_i[v]$\; \nllabel{algo:index:6}
            }
        }
    }
    \KwRet{$I_1, I_2, \cdots, I_k$} \;\nllabel{algo:index:7}
\end{algorithm}
\begin{algorithm}[H]
    \caption{QueryProcessing($\query, \theta, k, \hf_1, \cdots, \hf_k, I_1, \cdots, I_k$)}
    \label{algo:query}
    \KwIn{$\query$: a query text; $\theta$: a similarity threshold; $k$: an integer; $\hf_1, \cdots, \hf_k$: hash functions; $I_1, \cdots, I_k$: inverted indexes.} 
    \KwOut{All subsequences $\textT[x,y]$ in $\dset$ where $\hat{\Jaccard}_{\query,\textT[x,y]}\geq\theta$.}
    \lForEach{$i\in[1,k]$}{\nllabel{algo:query:1}
        $v_i\gets\hf_i(\query)$\; \nllabel{algo:query:2}
    }
    \KwRet{\textnormal{\textsc{PlaneSweep}}($k, \theta, I_1[v_1], \cdots, I_k[v_k]$);} \nllabel{algo:query:4}
\end{algorithm}\vspace{-.25em} 
\end{figure}

Algorithm~\ref{algo:query} presents the pseudo-code of query processing in our framework. The input consists of a query text $\query$, a multi-set Jaccard similarity threshold $\theta$, the integer $k$, the same $k$ random hash functions $\hf_1, \hf_2, \ldots, \hf_k$ used during indexing, and the $k$ inverted indexes $I_1, \cdots, I_k$. It first calculates the $k$ min-hash $v_1, \cdots, v_k$ of $\query$ and then retrieves the $k$ corresponding inverted lists $I_1[v_1], \cdots, I_k[v_k]$ from the $k$ inverted indexes. The near-duplicate subsequences can be identified by a simple plane sweep algorithm over the compact windows in the $k$ inverted lists. Specifically, the plane sweep algorithm produces all the cells $(x,y)$ where there are at least $\lceil k\theta\rceil$ compact windows $\cw$ in the $k$ inverted lists satisfying $(x,y)\in[a,b]\times[c,d]$. This is because the estimated multi-set Jaccard similarity $\hat{\Jaccard}_{\query,\textT[x,y]}\geq\theta$. The simple plane sweep algorithm is detailed in~\cite{llmalign}, while an optimized version is presented in~\cite{DBLP:journals/pacmmod/PengZD24}. Due to space constraints, we omit the details in this paper.


\section{Partition Generation}\label{sec:four}

Given a text \data and a hash function \hf, there exist many possible partitions. In our framework, both the indexing and query costs scale with the partition size. This raises a natural question: how can we generate a small partition, and what is the smallest possible partition? In this section, we study the partition generation problem. 
\begin{definition}[Partition Generation]
Given a text $\textT$ and a random universal hash function $\hf$, the partition generation problem is to generate a partition $\pt(\textT,\hf)$.
\end{definition}

\subsection{Monotonic Partitioning}      

In this section, we present our partition generation algorithm. To this end, we first define the hash value set of a subsequence, which contains all the hash values of the subsequence.


\begin{definition}[hash value set]\label{def:hset}
Given a random universal hash function $\hf$, the hash value set of a subsequence $\textT[i,j]$ is $$\hset(\textT[i,j],\hf)=\{\hf(t,x)\mid t\in\textT[i,j], 1\leq x\leq \freq{t}{\textT[i,j]}\}.$$ 
\end{definition}


\noindent Based on Equation~\ref{eq:min}, the \emph{min-hash of a subsequence} is the smallest hash value in the hash value set of the subsequence, \ie
\begin{equation}\label{eq:hfmin}
\hf(\textT[i,j])=\min \hset(\textT[i,j],\hf).    
\end{equation}

\noindent We omit the hash function $\hf$ in the hash value set for brevity. 


\begin{figure}[!tbh]\vspace{-1em}
    \centering
    \subfigure[\small{$\hset(\textT[3,6])$.}]{\label{subfig:hset1}
    \includegraphics[width=0.31\linewidth]{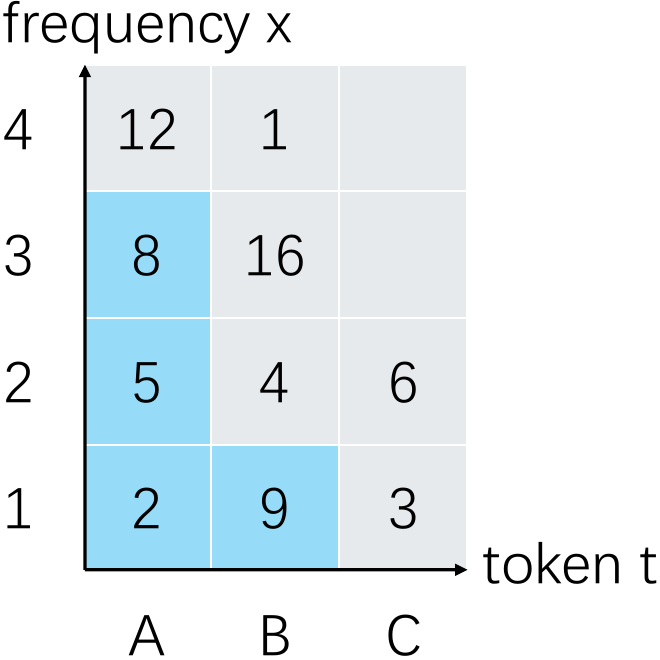}
    }
    \hspace*{-.5em}
    \subfigure[\small{$\hset(\textT)$.}]
    {\label{subfig:hset2}
    \includegraphics[width=0.31\linewidth]{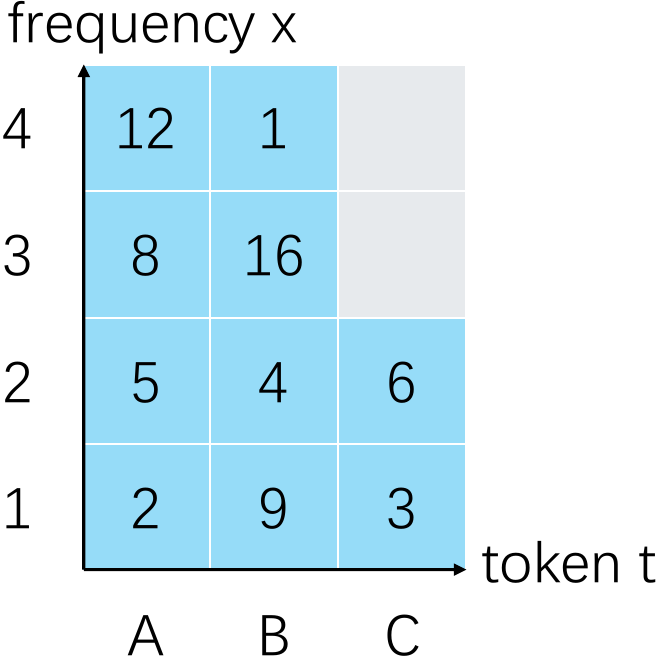}
    }
    \hspace*{-.5em}
    \subfigure[\small{active hash in} $\hset(\textT)$.]
    {\label{subfig:hset3}
    \includegraphics[width=0.31\linewidth]{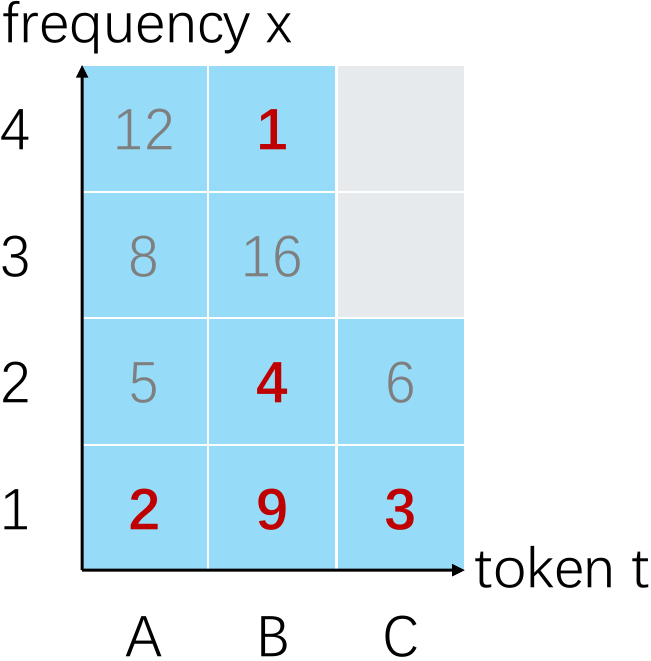}
    }
    \vspace{-1.5em} \\
    \label{fig:histogram}
    \caption{Examples of hash value sets and active hash values.}\vspace{-1em} 
\end{figure}

\begin{example}
Consider the $\textT$ and $\hf$ from the running example. Figures~\ref{subfig:hset1} and~\ref{subfig:hset2} show the hash value sets $\hset(\textT[3,6])$ and $\hset(\textT)$. In the figures, each integer in the highlighted cell $(t,x)$ represents the hash value $\hf(t,x)$ in the corresponding hash value set. Specifically, we have $\hset(\textT[3,6])=\{2, 5, 8, 9\}$. Thus $\hf(\textT[3,6])=\min\hset(\textT[3,6])=2$.

\end{example}

The min-hash of any subsequence of a text $\textT$ must belong to the hash value set $\hset(\textT)$ of $\textT$, as formalized below. 

\begin{lemma}\label{lemma:all}
$\hf(\textT[i,j])\in\hset(\textT)$ for any $1\leq i\leq j\leq |\textT|$.
\end{lemma}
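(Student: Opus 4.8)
The plan is to prove the statement by establishing the set containment $\hset(\textT[i,j]) \subseteq \hset(\textT)$ and then observing that the min-hash of the subsequence, being the minimum of $\hset(\textT[i,j])$, is an actual element of that set and therefore of the larger set $\hset(\textT)$. Everything reduces to a monotonicity-of-frequency argument.

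First I would fix arbitrary $1 \leq i \leq j \leq |\textT|$ and take an arbitrary element $\hf(t,x) \in \hset(\textT[i,j])$. By Definition~\ref{def:hset}, this means $t \in \textT[i,j]$ (viewing the subsequence as a token set) and $1 \leq x \leq \freq{t}{\textT[i,j]}$. The key observation is that $\textT[i,j]$ is a contiguous block of $\textT$, so every occurrence of $t$ inside $\textT[i,j]$ is also an occurrence of $t$ inside $\textT$; consequently $t \in \textT$ and the frequency is monotone, namely $\freq{t}{\textT[i,j]} \leq \freq{t}{\textT}$. Combining these gives $t \in \textT$ and $1 \leq x \leq \freq{t}{\textT[i,j]} \leq \freq{t}{\textT}$, so $\hf(t,x)$ satisfies the defining condition of $\hset(\textT)$ and hence $\hf(t,x) \in \hset(\textT)$. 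Since $\hf(t,x)$ was arbitrary, this establishes $\hset(\textT[i,j]) \subseteq \hset(\textT)$.

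Next I would invoke Equation~\ref{eq:hfmin}, which defines $\hf(\textT[i,j]) = \min \hset(\textT[i,j])$. The minimum of a (finite, nonempty) set is one of its members, so $\hf(\textT[i,j]) \in \hset(\textT[i,j])$; note $\hset(\textT[i,j])$ is nonempty because $i \leq j$ guarantees at least one token in the subsequence. Applying the containment from the previous paragraph then yields $\hf(\textT[i,j]) \in \hset(\textT)$, which is exactly the claim.

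I do not expect a genuine obstacle here, as the result is a direct consequence of frequency monotonicity under taking subsequences. The only point that warrants explicit care is the step $\freq{t}{\textT[i,j]} \leq \freq{t}{\textT}$: one must confirm that restricting to a substring can only remove, never add, occurrences of any token, so the frequency cannot increase. Once this is stated cleanly, the containment and the minimum-membership argument are routine.
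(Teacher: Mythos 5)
Your proof is correct and follows the same route as the paper: establish $\hset(\textT[i,j])\subseteq\hset(\textT)$ from Definition~\ref{def:hset}, then use Equation~\ref{eq:hfmin} to place $\hf(\textT[i,j])$ inside $\hset(\textT[i,j])$ and hence inside $\hset(\textT)$. You merely spell out the frequency-monotonicity step and the nonemptiness of $\hset(\textT[i,j])$, which the paper leaves implicit.
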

\begin{proof}
Based on Definition~\ref{def:hset}, $\hset(\textT[i,j])\subseteq\hset(\textT)$. Based on Equation~\ref{eq:hfmin}, $\hf(\textT[i,j])$$=$$\min\hset(\textT[i,j])$. Thus $\hf(\textT[i,j])\in\hset(\textT)$.
\end{proof}

Based on Lemma~\ref{lemma:all}, the min-hash of any subsequence of $\textT$ belongs to $\hset(\textT)$. Thus, for each hash value $v=\hf(t,x)\in\hset(\textT)$, we propose to generate a few disjoint compact windows to represent all the subsequences whose min-hash is $v$. The compact windows generated along the way must form a partition. 

Specifically, we observe that if the min-hash of a subsequence $\textT[i,j]$ comes from a token $t$ and a frequency $x$ such that $\hf(t,x) = \min\hset(\textT[i,j])$, then $x \leq \freq{t}{\textT[i,j]}$, and thus there must exist a subsequence $\textT[p,q]$ of $\textT[i,j]$ (\ie $i\leq p\leq q\leq j$) such that $\textT[p,q]$ starts and ends with token $t$ and has exactly $x$ occurrences of $t$; otherwise $\hf(t,x)\not\in\hset(\textT[i,j])$ and cannot be the min-hash of $\textT[i,j]$ based on Lemma~\ref{lemma:all}. Next, we  formalize our observation.

\begin{definition}[Key]\label{def:key}
Given a text \textT, a key $s$ is a pair of integers $(p,q)$ with $1\leq p\leq q\leq |\textT|$ such that $\textT[p]=\textT[q]$. We refer to $(p, q)$ as the coordinates of $s$, define $s.x = p$ and $s.y = q$. The hash value of $s$ is $\hf(\textT[q],\freq{\textT[q]}{\textT[p,q]})$, abbreviated as $\hf(\textT, p, q)$.
\end{definition}

\begin{definition}[Key Set]\label{def:keyset}
The key set of a subsequence $\textT[i,j]$, denoted as $\key(\textT[i,j])$ is $\{(p,q)\mid [p,q]\subseteq[i,j] \text{ and } (p,q) \text{ is a key }\}$.
\end{definition}

\begin{example}\label{exp:8}
Consider the running example in Figure~\ref{fig:run}. Figure~\ref{fig:allkeys} shows all the keys in $\key(\textT)$ and their hash values, where the integer in cell $(p, q)$ represents the hash value of key $(p, q)\in\key(\textT)$ (in contrast, Figure~\ref{fig:hashvalues} shows the min-hash of the subsequence $\textT[p, q]$ in cell $(p,q)$). In total, there are 23 keys in $\key(\textT)$. The $x$-value and $y$-value of the key $(1,3)$ are respectively 1 and 3. The hash value of $(1,3)$ is $\hf(\textT,1,3)=\hf(A,\freq{A}{\textT[1,3]})=\hf(A,2)=5$. 
\end{example}

The min-hash of a subsequence is exactly the smallest hash value of all its keys, as formalized below.
\begin{lemma}\label{lemma:keyminhash}
$\hf(\textT[i,j])=\min\{\hf(p,q,\textT)\mid (p,q)\in\textnormal{\key}(\textT[i,j])\}$.
\end{lemma}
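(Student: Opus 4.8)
The plan is to show that the two sets whose minima appear on the two sides of the claimed identity are in fact \emph{equal}; equality of their minima then follows at once. By Equation~\ref{eq:hfmin} the left-hand side satisfies $\hf(\textT[i,j])=\min\hset(\textT[i,j])$, so it suffices to prove the set identity
$$\hset(\textT[i,j]) = \{\hf(\textT, p, q) \mid (p,q)\in\textnormal{\key}(\textT[i,j])\},$$
which I would establish by a double inclusion. Both sets are nonempty for $i\le j$ (each single position $(p,p)$ is a trivial key, and $\textT[i]$ contributes $\hf(\textT[i],1)$ to $\hset$), so the minima are well defined.

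For the inclusion $\supseteq$, I take any key $(p,q)\in\textnormal{\key}(\textT[i,j])$, set $t=\textT[p]=\textT[q]$ and $x=\freq{t}{\textT[p,q]}$, so that by Definition~\ref{def:key} its hash value is exactly $\hf(t,x)$. Since $[p,q]\subseteq[i,j]$, the token $t$ occurs in $\textT[i,j]$, and $1\le x=\freq{t}{\textT[p,q]}\le\freq{t}{\textT[i,j]}$: the lower bound holds because $t$ sits at both endpoints $p$ and $q$, and the upper bound holds because the frequency of $t$ can only grow when the interval is enlarged from $[p,q]$ to $[i,j]$. Hence $\hf(t,x)\in\hset(\textT[i,j])$ by Definition~\ref{def:hset}.

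For the inclusion $\subseteq$, I take any $\hf(t,x)\in\hset(\textT[i,j])$, so $t$ occurs in $\textT[i,j]$ and $1\le x\le\freq{t}{\textT[i,j]}$. Let $r_1<r_2<\cdots<r_m$ be the positions of $t$ within $\textT[i,j]$, where $m=\freq{t}{\textT[i,j]}\ge x$. Choosing $p=r_1$ and $q=r_x$ yields a key: $\textT[p]=\textT[q]=t$, we have $[p,q]\subseteq[i,j]$, and $\freq{t}{\textT[p,q]}=x$ because exactly the occurrences $r_1,\dots,r_x$ lie inside $[p,q]$. Thus $(p,q)\in\textnormal{\key}(\textT[i,j])$ and its hash value is $\hf(t,x)$, exhibiting $\hf(t,x)$ as a key hash value.

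The two inclusions give the set identity, and taking the minimum of both sides (together with Equation~\ref{eq:hfmin}) completes the proof. I do not expect a serious obstacle; the only step requiring genuine care is the $\subseteq$ direction, where each hash value $\hf(t,x)$ must be \emph{witnessed} by an explicit key. The choice of the first and the $x$-th occurrence of $t$ is precisely what simultaneously guarantees that both endpoints carry the token $t$ (so that $(p,q)$ is a key) and that the enclosed interval contains exactly $x$ copies of $t$ (so that the key's hash value is $\hf(t,x)$ rather than some other count).
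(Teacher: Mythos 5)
Your proof is correct and follows essentially the same route as the paper: both reduce the claim to the set identity $\hset(\textT[i,j])=\{\hf(\textT,p,q)\mid (p,q)\in\key(\textT[i,j])\}$ and then take minima via Equation~\ref{eq:hfmin}. The only difference is that the paper asserts this identity directly from Definitions~\ref{def:hset},~\ref{def:key}, and~\ref{def:keyset}, whereas you spell it out with an explicit double inclusion (including the witness key $(r_1,r_x)$), which is a valid and more detailed rendering of the same argument.
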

\begin{proof}
Based on Definitions~\ref{def:hset},~\ref{def:key}, and~\ref{def:keyset}, we have $\hset(\textT[i,j])=\{\hf(p,q,\textT)\mid (p,q)\in\key(\textT[i,j])\}$. Based on Equation~\ref{eq:min}, we have $\hf(\textT[i,j])=\min\{\hf(p,q,\textT)\mid (p,q)\in\key(\textT[i,j])\}$.
\end{proof}

\begin{example}
In Figure~\ref{fig:allkeys}, the key set $\key(\textT[i,j])$ consists of all keys on the bottom-right  of the cell $(i,j)$, i.e., $(p,q)$ with $p \geq i$ and $q\leq j$. For example, $\key(\textT[1,3])=\{(1,1),(1,3),(2,2),(3,3)\}$. By Lemma~\ref{lemma:keyminhash}, the min-hash of $\textT[1,3]$ is $\hf(\textT[1,3])=\min\{5,2,9,2\}=2$. 
\end{example}

We say a subsequence $\textT[i,j]$ \textit{contains} a key $(p,q)$ if and only if $(p,q)\in\key(\textT[i,j])$. Lemma~\ref{lemma:iffcontain}  follows directly from Lemma~\ref{lemma:keyminhash}. 

\begin{lemma}\label{lemma:iffcontain}
A subsequence has min-hash $v$ if and only if it contains a key with hash value $v$ and no key with a smaller hash value. 
\end{lemma}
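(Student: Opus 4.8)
The plan is to reduce the claim entirely to Lemma~\ref{lemma:keyminhash}, which already establishes that
$$\hf(\textT[i,j])=\min\{\hf(p,q,\textT)\mid (p,q)\in\key(\textT[i,j])\}.$$
Writing $S=\{\hf(p,q,\textT)\mid (p,q)\in\key(\textT[i,j])\}$ for the multiset of hash values of the keys contained in the subsequence, the lemma says exactly that the min-hash equals $\min S$. The whole statement is then just the elementary characterization of a minimum: for a value $v$, the equality $v=\min S$ holds if and only if (i) $v\in S$ and (ii) $v$ is a lower bound for $S$, i.e.\ no element of $S$ is strictly smaller than $v$. So I would simply translate these two conditions back into the language of keys.

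For the forward direction, I would assume the subsequence has min-hash $v$, so $v=\min S$ by Lemma~\ref{lemma:keyminhash}. Membership $v\in S$ means some key $(p,q)\in\key(\textT[i,j])$ satisfies $\hf(p,q,\textT)=v$, i.e.\ the subsequence contains a key with hash value $v$; and $v=\min S$ being a lower bound means every contained key has hash value at least $v$, i.e.\ no contained key has a smaller hash value. For the converse, I would assume the subsequence contains a key of hash value $v$ and no key of smaller hash value; the first assumption gives $v\in S$ and the second gives that $v$ lower-bounds $S$, so $v=\min S$, and Lemma~\ref{lemma:keyminhash} identifies this minimum with the min-hash of the subsequence.

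There is essentially no hard step here: the lemma is a direct corollary of Lemma~\ref{lemma:keyminhash} together with the definition of $\min$. The only point worth flagging is that both $S$ and its minimum are well defined because $\key(\textT[i,j])$ is nonempty (it always contains diagonal keys $(p,p)$) and finite, and under the standing no-hash-collision assumption the minimum is attained at a unique hash value, so the phrase ``a key with hash value $v$'' is unambiguous.
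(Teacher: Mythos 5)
Your proposal is correct and follows exactly the paper's route: the paper gives no separate argument for this lemma, stating only that it ``follows directly from Lemma~\ref{lemma:keyminhash},'' and your proof is precisely that direct derivation, spelling out the elementary fact that $v=\min S$ iff $v\in S$ and no element of $S$ is smaller. Your added remark that $\key(\textT[i,j])$ is nonempty (because diagonal keys $(p,p)$ always exist) is a sensible well-definedness check that the paper leaves implicit.
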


Lemma~\ref{lemma:iffcontain} motivates us to visit all keys in $\key(\textT)$ in ascending order of their hash values (breaking ties arbitrarily). When visiting a key $(p, q)$ with hash value $v$, we identify all subsequences that contain $(p, q)$ but no visited key. The min-hash of all these subsequences must be $v$ based on Lemma~\ref{lemma:iffcontain}. We then group them into disjoint compact windows. Next, we characterize these subsequences.

\begin{figure}[!tbh]\vspace{-1em} 
\vspace{-0.3cm}
    \centering
      \subfigure[Rectangles: $(p,q)$ dominates $(p',q')$]{
    \begin{tikzpicture}[scale = 0.7] \small
    \draw[->, thick] (1,2) -- (5.5,2) node[right] {$x$}; 
    \draw[->, thick] (1,2) -- (1,5.5) node[above] {$y$}; 

    \node at (0.7,5) {$n$};
    \node at (5,1.7) {$n$};
    \node at (1,1.7) {$1$};
    \node at (0.7,2) {$1$};
    

    \coordinate (P) at (2.5,4); 
    \coordinate (P2) at (4,3); 
    \fill (P) circle (2pt) node[below right] {$(p',q')$};
    \fill (P2) circle (2pt) node[below right] {$(p,q)$};

    \fill[blue!20,opacity=0.5] (1,5) rectangle (2.5,4);

    \draw[dashed, thick] (1,5) -- (4,5);
    \draw[dashed, thick] (1,3) -- (4,3);
    \draw[dashed, thick] (4,5) -- (4,3);
    \draw[dashed, thick] (1,5) -- (1,3);

\end{tikzpicture}
  %
    
        \label{fig:dominate}
    }
    \hspace{0.1cm}
    \subfigure[Partitioning]{%
\begin{tikzpicture}[scale = 0.47] \small
    \draw[->, thick] (1, 1.5) -- (8, 1.5) node[right] {$x$};
    \draw[->, thick] (1, 1.5) -- (1, 7) node[above] {$y$};

    \node at (7.5, 4.8) {$\sky[j]$};
    \node at (2.5, 1.8) {$\sky[i]$};

    \node at (1,0.89) {$\ $};

    \draw[thick] (1, 6) -- (7,6) -- (7,5) -- (5.5, 5) -- (5.5, 4) -- (3, 4) -- (3, 3) -- (2, 3) -- (2, 2) -- (1, 2) -- cycle;
    
    \node[above left] at (1.9, 5.1) {$j$};
    \node[above left] at (2.6, 4.1) {${j-1}$};
    \node[above left] at (2.5, 3.1) {${i+1}$};
    \node[above left] at (1.7, 2.2) {$i$};

    \coordinate (B) at (6.8, 2.2);
    \filldraw[red] (B) circle (2pt) node[ right] {$(b,c)$};
    \draw[dashed, red] (1, 2.2) -- (B);
    \draw[dashed, red] (6.8, 7) -- (B);
    
   \draw[dashed] (1, 5) rectangle (5.5, 4);
   \draw[dashed] (1, 3) rectangle (2, 2);
    
    \node at (6.3, 4.3) {$r_{j-1}$};
    \node at (5, 3.5) {$r_{i+1}$};
    \node at (5, 2.5) {$r_{i}$};

    \draw[dashed, red] (2.2, 2.9) rectangle (B);
    \draw[dashed, red] (3.2, 3.9) rectangle (6.8,3);
    \draw[dashed, red] (5.7,4.9) rectangle (6.8,4);

\end{tikzpicture}
\label{fig:stair}
    }
    \vspace{-1.5em} 
    \caption{Monotonic partitioning.}
    \label{fig:combined}\vspace{-1.25em} 
\end{figure}

\begin{definition} [Rectangle of a key]\label{def:rec}
Given a key $(p,q)$ in a text $\textT$, where $n=|\textT|$, its rectangle $\rec(p,q)=[1,p]\times[q,n]=\{(i,j)\mid i\in[1,p], j\in[q,n]\}$. For a set $\keys$ of keys, $\rec(\keys)$ is the union of $\rec(p,q)$ with $(p,q) \in \keys$. 
\end{definition}



Figure~\ref{fig:dominate} shows the rectangles of two keys $(p,q)$ (outlined by dashed lines) and $(p',q')$ (the shaded rectangle). By Definitions~\ref{def:keyset} and~\ref{def:rec}, a subsequence $\textT[i,j]$ contains a key $(p,q)$ if and only if $(i,j)\in\rec(p,q)$, \ie the rectangle $\rec(p,q)$ precisely characterizes all subsequences that contain the key $(p, q)$. To exclude subsequences that contain visited keys, we maintain a ``skyline'' to track the union of the rectangles of visited keys, as formalized below.







\begin{definition}[Dominance]
Consider two  keys $(p,q)$ and $(p',q')$. $(p,q)$ dominates $(p',q')$ if and only if $[p,q]\subset[p',q']$.
\end{definition}

Informally, a key dominates all keys located on its top-left side, as illustrated in Figure~\ref{fig:dominate}, regardless of their hash values. Note that a key does not dominate itself based on the definition. In Example~\ref{exp:8}, the key $(1,1)$ dominates another key $(1,3)$ as $[1,1]\subset[1,3]$. 

\begin{definition}[Skyline]\label{def:skyline}
The skyline of a set $\keys$ of keys, denoted as $\textnormal{\sky}(\keys)$, is the subset of keys in $\keys$ that are not dominated by any key in $\keys$\textnormal{:} $\textnormal{\sky}(\keys) = \{ (p', q') \in \keys \mid \nexists\, (p, q) \in \keys \text{, } (p, q) \text{ dominates } (p',q')\}.$
\end{definition}


\begin{lemma}\label{lemma:exclude}
For a key set $\keys$ and its skyline $\sky(\keys)$, $\rec(\keys) = \rec(\sky(\keys))$. 
\end{lemma}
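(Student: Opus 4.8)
The plan is to prove the two inclusions $\rec(\sky(\keys)) \subseteq \rec(\keys)$ and $\rec(\keys) \subseteq \rec(\sky(\keys))$ separately. The first is immediate: since $\sky(\keys) \subseteq \keys$ by Definition~\ref{def:skyline}, the union of rectangles taken over a subset cannot cover more cells than the union over the whole set, so $\rec(\sky(\keys)) \subseteq \rec(\keys)$. All the real work lies in the reverse inclusion.

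The crucial fact I would establish first is that dominance is exactly rectangle containment: if a key $(p,q)$ dominates $(p',q')$, i.e. $[p,q] \subset [p',q']$, then $p' \leq p$ and $q \leq q'$, which gives $[1,p'] \subseteq [1,p]$ and $[q',n] \subseteq [q,n]$, and hence $\rec(p',q') = [1,p'] \times [q',n] \subseteq [1,p] \times [q,n] = \rec(p,q)$ by Definition~\ref{def:rec}. This single monotonicity statement is the engine of the proof: the rectangle of a dominated key is swallowed by the rectangle of any key dominating it, so discarding dominated keys never loses a covered cell.

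To finish the reverse inclusion, I would argue that every $(p,q)\in\keys$ is either already in $\sky(\keys)$ or is dominated by some key in $\sky(\keys)$; in either case $\rec(p,q) \subseteq \rec(\sky(\keys))$, and taking the union over all $(p,q)\in\keys$ yields $\rec(\keys) \subseteq \rec(\sky(\keys))$. To prove this reachability claim I would observe that dominance is a \emph{strict partial order} on the finite set $\keys$: it is irreflexive because $[p,q] \not\subset [p,q]$, and transitive because $[a,b]\subset[c,d]\subset[e,f]$ implies $[a,b]\subset[e,f]$. Its maximal elements — the keys not dominated by any other — are precisely $\sky(\keys)$. Since in any finite partial order every element lies below a maximal element, each $(p,q)$ either equals or is dominated by some skyline key, which is exactly what is needed.

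The main obstacle is not any single step but getting the coordinate bookkeeping right: I must translate the interval containment $[p,q]\subset[p',q']$ correctly into the inequalities $p'\le p$ and $q\le q'$, and then into the $\subseteq$ relation between the rectangles $[1,p]\times[q,n]$, being careful that the two coordinate inequalities point in \emph{opposite} directions ($p'\le p$ on the $x$-axis but $q\le q'$ on the $y$-axis) yet both yield the same containment, because the rectangle grows with $p$ but shrinks as $q$ increases. Once this monotonicity is pinned down, the partial-order argument is routine and relies only on the finiteness of $\keys$.
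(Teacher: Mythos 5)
Your proof is correct and takes essentially the same route as the paper's: dominance is exactly containment of the associated rectangles, and every key not on the skyline is dominated by some skyline key, so dropping dominated keys loses no covered cell. You are in fact slightly more careful than the paper at one step: the paper asserts ``by the definition of skyline'' that every non-skyline key is dominated by a key \emph{in the skyline}, whereas this genuinely requires your observation that dominance is an irreflexive, transitive relation on a finite set (so every element lies below a maximal element) --- transitivity being a fact the paper only records later, in Lemma~\ref{lemma:transitive}.
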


\begin{proof}
By definitions, a key $(p,q)$ dominates another key $(p',q')$ if and only if $\rec(p',q') \subset \rec(p,q)$. Furthermore, by the definition of skyline, for any key $s$ of $\keys$ not in the skyline $\sky(\keys)$, there must be another key $s'$ in the skyline that dominates $s$. Thus $\rec(s) \subset \rec(s')$. Therefore $\rec(\keys) = \rec(\sky(\keys))$.
\end{proof}





When visiting a key $(p, q) \in \key(\textT)$ with hash value $v$, let $\keys$ be the set of all visited keys. Based on the discussion above, all subsequences in $\rec(p,q)\setminus\rec(\sky(\keys))$ must have min-hash $v$. As shown in Figure~\ref{fig:combined} (the red dot $(b,c)$ is $(p,q)$ and the indented black line is the skyline), these subsequences collectively form a ``staircase shape''. We generate one compact window for ``each step of the staircase'' (red dashed rectangles).





In summary, our partitioning algorithm works as follows. Given a text $\textT$, we traverse all keys in $\key(\textT)$ in ascending order of their hash values, maintaining a dynamic skyline of the visited keys. When visiting a key $(p, q)$ with hash value $v$, let $\keys$ be the set of visited keys. For each key in $\sky(\keys)$ that is dominated by $(p, q)$, we generate a compact window. We then update the skyline by adding $(p, q)$ if it is not dominated by any key in the skyline, and removing all keys in the skyline that are dominated by $(p, q)$. As a result, the skyline is updated to $\sky(\keys \cup \{(p, q)\})$. The following sections elaborate on these details.


\begin{figure*}[!tbh]\vspace{-5em} 
    \centering
    \subfigure[\small{Visiting $(2,8)$}]{
    \includegraphics[width=0.19\linewidth]{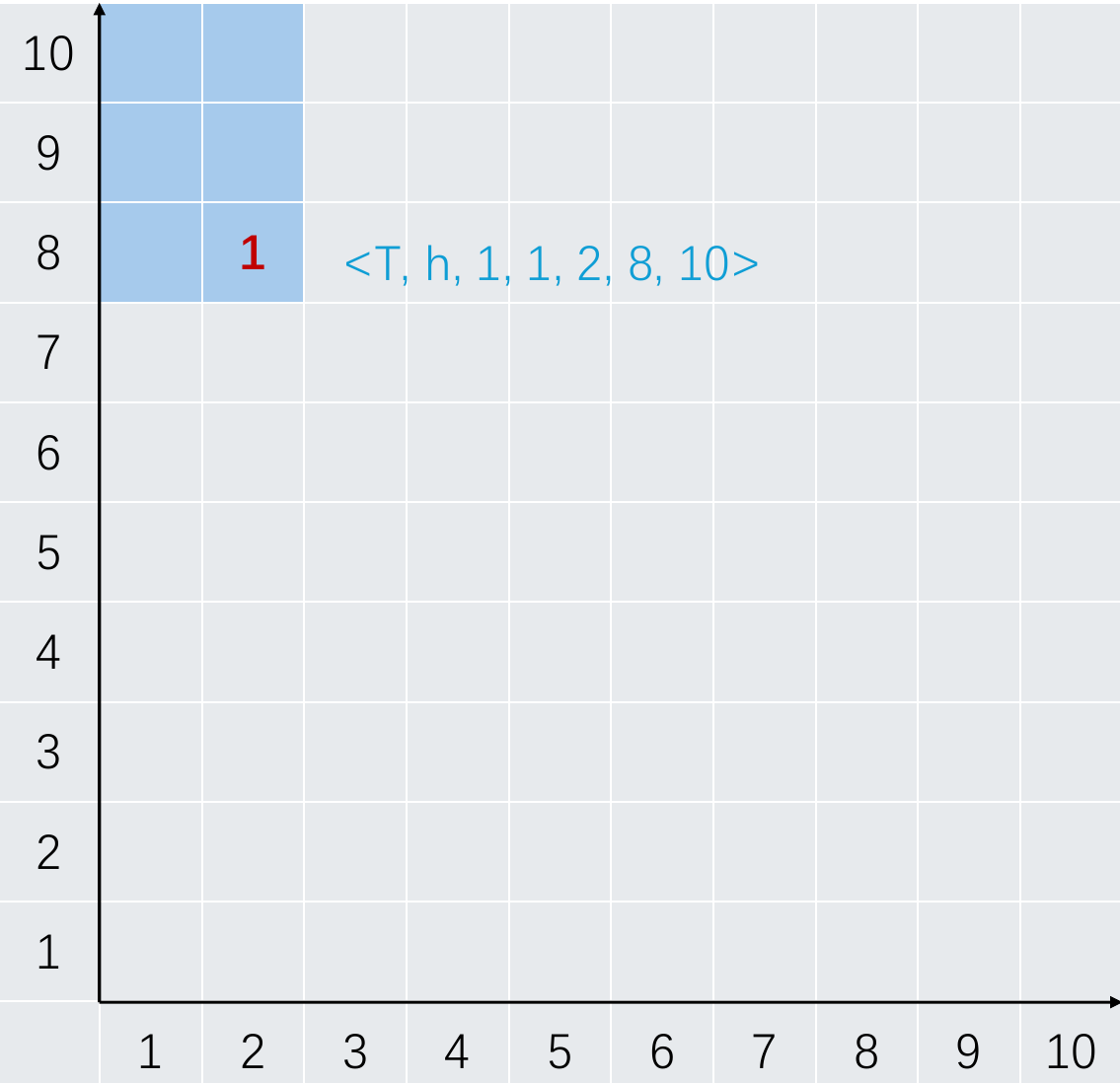}
    }
    \hspace*{-.5em}
    \subfigure[\small{Visiting $(1,1)$}]{
    \includegraphics[width=0.19\linewidth]{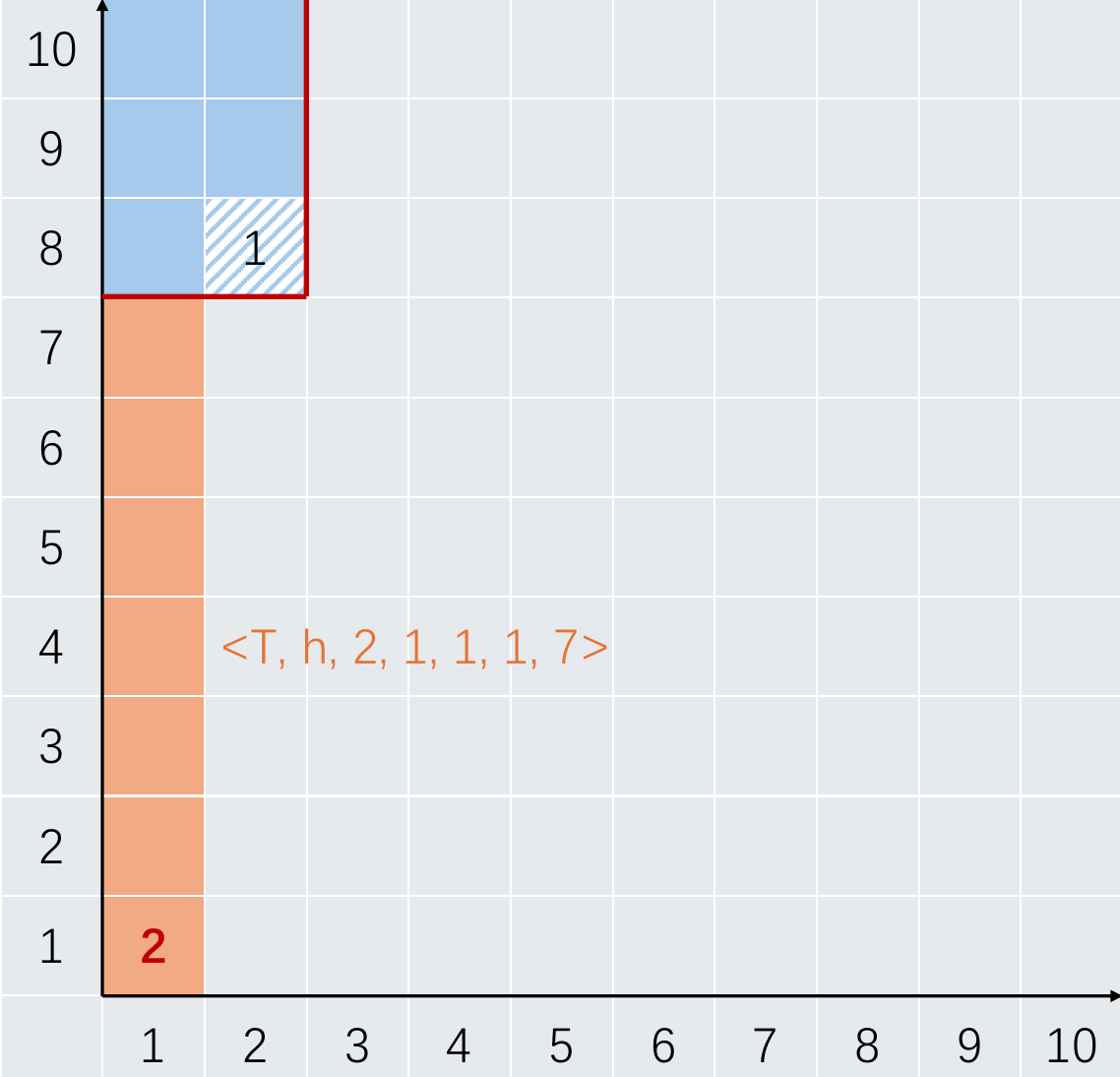}
    }
    \hspace*{-.5em}
    \subfigure[\small{Visiting}$(3,3)$]{
    \label{fig:4:c}
    \includegraphics[width=0.19\linewidth]{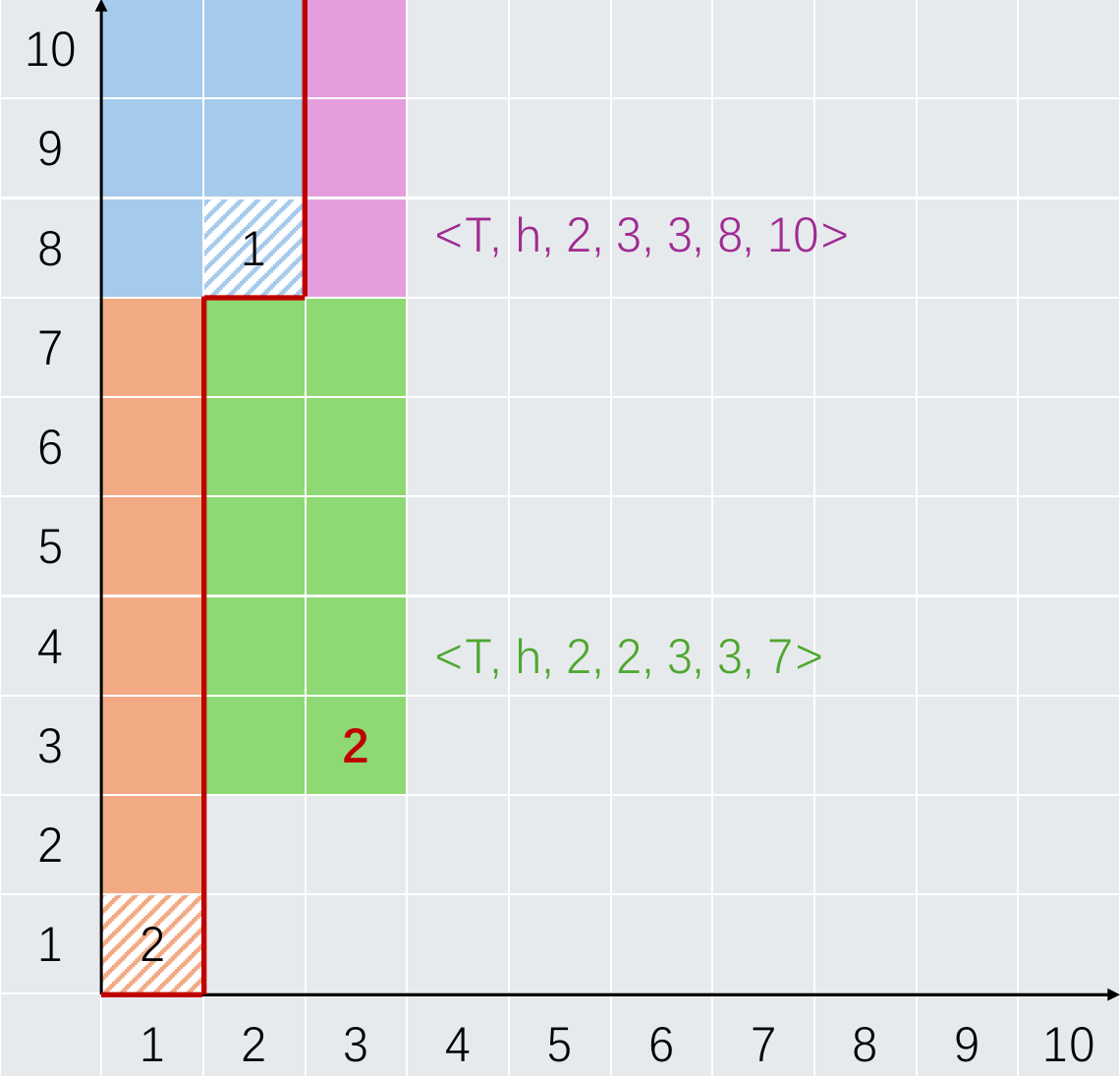}
    }
    \hspace*{-.5em}
    \subfigure[\small{Visiting}$(5,5)$]{
    \includegraphics[width=0.19\linewidth]{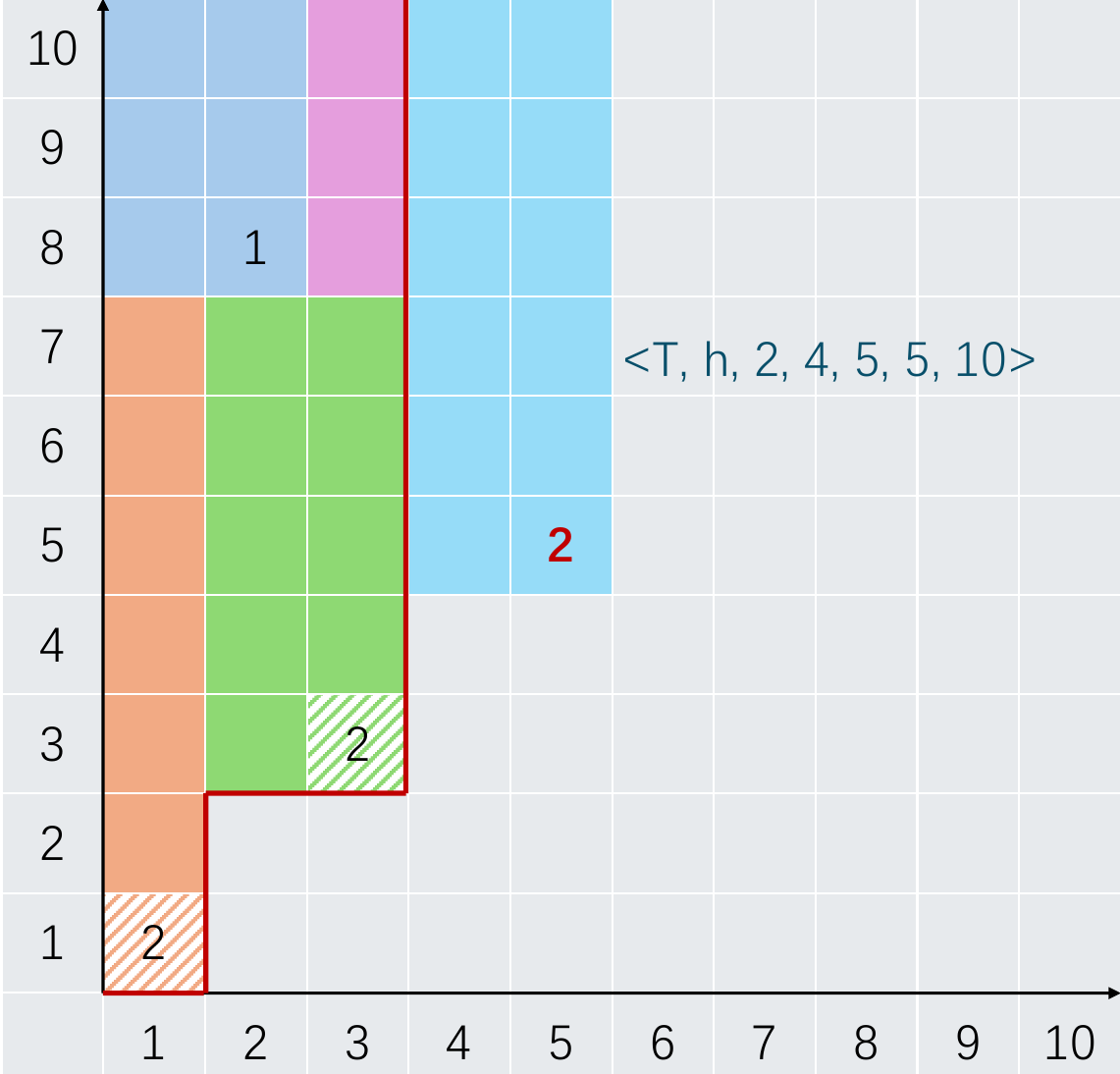}
    }
    \hspace*{-.5em}
    \subfigure[\small{Visiting}$(2,4)$]{
    \includegraphics[width=0.19\linewidth]{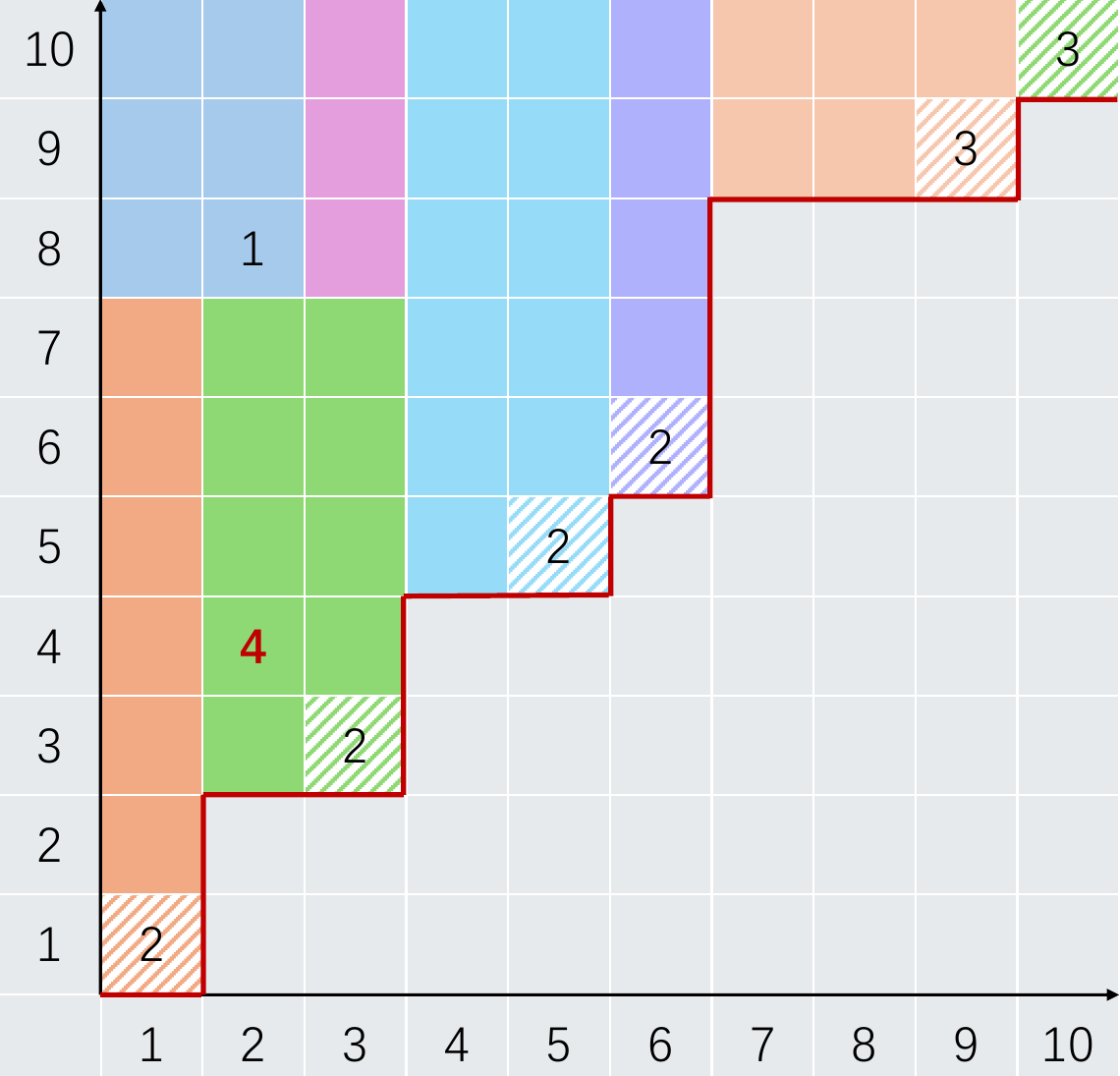}
    }
    \vspace{-1.5em} \\
    \caption{An example of monotonic partitioning.}\vspace{-1.25em} 
    \label{fig:monotone}
\end{figure*}

\begin{example}
Consider the running example. There are 23 keys in $\textT$ in total. The first 8 of them in the order of hash values are $(2,8)$, $(1,1)$, $(3,3)$, $(5,5)$, $(6,6)$, $(9,9)$, $(10,10)$, and $(2,4)$. As illustrated in Figure~\ref{fig:monotone}, we first visit $(2,8)$. The skyline is empty. A compact window $\langle\textT,\hf,1,1,2,8,10\rangle$ is generated and $(2,8)$ is added to the skyline. Then we visit $(1,1)$. A compact window is generated and $(1,1)$ is added to the skyline. Next, we visit $(3,3)$. Notice that two compact windows are generated. Since the key $(2,8)$ in the skyline is dominated by $(3,3)$, it is removed from the skyline, while $(3,3)$ is added. Later we visit $(5,5)$ and one compact window is generated. The process is repeated. When we visit the key $(2,4)$, we find $(2,4)$ is dominated by the key $(3,3)$ on the skyline. Thus no compact window will be generated and the skyline remains the same. In the end, 13 compact windows are generated as shown in Figure~\ref{fig:partition}.
\end{example}

\subsection{The Monotonic Partitioning Algorithm}

Before presenting the pseudo-code of our algorithm, we first discuss how to efficiently find all keys in a skyline that are dominated by a key. This can be achieved by binary search, as formalized next.

\begin{definition}[Coordinate Order]
A skyline $\sky$ is in coordinate order if its keys are ordered lexicographically by their coordinates. We denote $\sky[i]$ as the $i$-th key in $\sky$ under this order.
\end{definition}

The following lemmas show a skyline is a totally ordered set.

\begin{lemma} \label{lemma:orderinskyline1}
Let $\sky$ be a skyline of any set of keys. For any two keys $(p, q)$ and $(p', q')$ in $\sky$, if $p \leq p'$, then $q \leq q'$.
\end{lemma}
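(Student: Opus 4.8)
The plan is to prove the statement by contradiction, using only the definitions of Dominance and Skyline stated above. First I would suppose that there exist two keys $(p,q)$ and $(p',q')$ in $\sky$ with $p \le p'$ but, contrary to the claim, $q > q'$. My goal is to show that under these assumptions one of the two keys must dominate the other, which is impossible for two keys that both lie on the skyline.

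The central step is to assemble a single chain of inequalities. Since $(p',q')$ is a valid key, it satisfies $p' \le q'$ by Definition of Key. Combining this with the hypothesis $p \le p'$ and the contradiction assumption $q' < q$ gives $p \le p' \le q' < q$. This shows the closed interval $[p',q']$ is contained in $[p,q]$, and the containment is \emph{proper} because the right endpoints differ, i.e. $q' < q$. Hence $[p',q'] \subset [p,q]$, which by the Dominance definition is exactly the statement that $(p',q')$ dominates $(p,q)$.

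I would then invoke the Skyline definition to close the argument: every key of $\sky$ is by construction not dominated by any key of the underlying set $\keys$, and $\sky \subseteq \keys$. But we have just exhibited $(p',q') \in \sky \subseteq \keys$ dominating $(p,q) \in \sky$, contradicting the defining property of the skyline for $(p,q)$. Therefore no such pair can exist, and $p \le p'$ forces $q \le q'$. Note the case $p = p'$ requires no separate handling, since the argument only uses the weak inequality $p \le p'$.

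The argument is essentially a one-step deduction from the definitions, so I do not expect a genuine obstacle; the only point demanding care is the bookkeeping on strictness. I must ensure the containment $[p',q'] \subset [p,q]$ is \emph{strict} (so that it is dominance rather than mere equality of intervals), which is guaranteed precisely by $q' < q$, and I must not omit the key-validity inequality $p' \le q'$ that makes the interval chain go through.
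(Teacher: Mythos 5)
Your proof is correct and takes essentially the same route as the paper's: assume $p \leq p'$ and $q > q'$, deduce the strict interval containment $[p',q'] \subset [p,q]$, and contradict the skyline's defining no-dominance property. You even state the dominance direction more precisely than the paper, whose proof text says ``$(p,q)$ dominates $(p',q')$'' although by its own definition the containment $[p',q'] \subset [p,q]$ means $(p',q')$ dominates $(p,q)$ --- either way the contradiction is the same and the lemma follows.
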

\begin{proof}
Suppose that $p \leq p'$ and $q > q'$. Then $(p, q)$ dominates $(p', q')$, since $p \leq p'$ and $q > q'$ implies $[p', q'] \subset [p, q]$. This contradicts the definition of the skyline, where no key is dominated by another. Therefore, we must have $q \leq q'$.
\end{proof}


\begin{lemma}\label{lemma:orderinskyline2}
Let $\sky$ be the skyline of a set of keys in coordinate order. Then $\sky[1].x \leq \cdots \leq \sky[l].x$ and $\sky[1].y \leq \cdots \leq \sky[l].y$ where $l=|\sky|$.
\end{lemma}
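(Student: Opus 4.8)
The plan is to derive both monotonicity chains from the already-established Lemma~\ref{lemma:orderinskyline1}, which asserts that within any skyline a smaller $x$-coordinate forces a smaller $y$-coordinate. The first chain, $\sky[1].x \leq \cdots \leq \sky[l].x$, requires essentially no work: the coordinate order sorts the keys lexicographically, comparing $x$-coordinates first, so consecutive keys automatically satisfy $\sky[i].x \leq \sky[i+1].x$, and concatenating these inequalities yields the nondecreasing $x$-sequence directly.

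For the second chain I would argue pairwise on consecutive keys. Fix any index $i$ and set $(p,q) = \sky[i]$ and $(p',q') = \sky[i+1]$. From the coordinate order we already have $p \leq p'$. Instantiating Lemma~\ref{lemma:orderinskyline1} with these two keys then gives $q \leq q'$, i.e. $\sky[i].y \leq \sky[i+1].y$. Letting $i$ range over $1, \ldots, l-1$ and chaining the resulting inequalities produces $\sky[1].y \leq \cdots \leq \sky[l].y$, which completes the argument.

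The only point that deserves care is the handling of ties in the lexicographic order, namely the case $\sky[i].x = \sky[i+1].x$. Here I would note that two distinct skyline keys can never share an $x$-coordinate: if $(p, q_1)$ and $(p, q_2)$ with $q_1 < q_2$ both belonged to the skyline, then $[p, q_1] \subset [p, q_2]$, so $(p,q_1)$ would dominate $(p,q_2)$, contradicting the defining property of a skyline (the symmetric argument rules out shared $y$-coordinates). This makes the two sequences in fact strictly increasing, which is stronger than the statement demands; but it is not even needed, since the hypothesis $p \leq p'$ of Lemma~\ref{lemma:orderinskyline1} is non-strict and already covers the $p = p'$ case. I expect no genuine obstacle in this proof: the whole content is a clean reduction to Lemma~\ref{lemma:orderinskyline1}, with the tie analysis being the only spot where one pauses to confirm the ordering is well defined.
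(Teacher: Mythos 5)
Your proof is correct and takes essentially the same route as the paper's: both reduce the claim to Lemma~\ref{lemma:orderinskyline1} applied to consecutive keys under the lexicographic order, with the $x$-chain following immediately from that order. Your extra observation that two skyline keys can never share an $x$-coordinate (so the chains are in fact strict) is valid under the paper's definition of dominance, but as you note it is not needed.
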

\begin{proof}
Consider any two consecutive keys $\sky[i]$ and $\sky[i+1]$ for $1 \leq i < l$. Since the keys in $\sky$ are sorted lexicographically by their coordinates, we have $\sky[i].x \leq \sky[i+1].x$. By Lemma~\ref{lemma:orderinskyline1}, this implies $\sky[i].y \leq \sky[i+1].y$. This completes the proof.
\end{proof}

\begin{figure}[!tbh]
\begin{algorithm}[H] 
    \caption{GenerateKeys($\textT, \hf$)}
    \label{algo:genkey}
    \KwIn{$\textT:$ a text of length $n$; $\hf$: a universal hash function.}
    \KwOut{$\key$: the set of keys in $\textT$.}
    $\map \gets \emptyset$, a map that maps a token to an array of integers\;  \nllabel{algo:genkey:1}
    \ForEach{$1\leq i\leq n$}{ \nllabel{algo:genkey:2}
        append $i$ to the array of token $\textT[i]$, i.e., $\map[\textT[i]]$\;  \nllabel{algo:genkey:3}
    }
    \ForEach{token $t$ in $\map$ (denote by $A$ the array of $\map[t])$}{ \nllabel{algo:genkey:4}
        \ForEach{$i \in [1,|A|]$}{ \nllabel{algo:genkey:5}
            \ForEach{$j \in [i,|A|]$}{ \nllabel{algo:genkey:6}   
                add $(A[i], A[j], \hf(t,j-i+1))$ to $\key$\; \nllabel{algo:genkey:7}
            }
        }
    }
    \KwRet{\textnormal{\key};}\nllabel{algo:genkey:8}
\end{algorithm}
\begin{algorithm}[H] 
    \caption{MonotonicPartitioning($\textT,\hf$)}
    \label{algo:skyline}
    \KwIn{$\textT$: a text of length $n$; $\hf$: a universal hash function.}
    \KwOut{A partition $\pt(\textT,\hf)$.}
    $\key\gets$ \textsc{GenerateKeys}($\hf, \textT)$\; \nllabel{algo:skyline:1}  
    $\sky\gets\{(0,0),(n+1,n+1)\}$, a skyline kept in coordinate order\; \nllabel{algo:skyline:2} 
    \ForEach{$(b,c,v) \in \key$ in ascending order of the hash value $v$}{\nllabel{algo:skyline:3} 
        $j'\gets$ binary search the largest index such that $\sky[j'].y\leq c$\; \nllabel{algo:skyline:4}
        \lIf{$\sky[j']$ dominates $(b,c)$}{\textbf{continue}}\;  \nllabel{algo:skyline:5} 
        $i\gets$ binary search the largest index such that $\sky[i].y < c$\; \nllabel{algo:skyline:6} 
        $j\gets$ binary search the smallest index such that $\sky[j].x > b$\; \nllabel{algo:skyline:7} 
            $c'\gets c$\; \nllabel{algo:skyline:8} 
            \ForEach{$i\leq k < j$}  { \nllabel{algo:skyline:9}
                $a \gets \sky[k].x ;\ \ d \gets \sky[k+1].y - 1$ \; \nllabel{algo:skyline:10} 
                \tcp{this is to avoid the case when $\sky[i+1].y=c$}
                \If{$a\leq b$ \text{and} $c' \leq d$}{ \nllabel{algo:skyline:11} 
                add $\langle\textT, \hf, v, a, b, c', d\rangle$ to $\pt(\textT,\hf)$\;\nllabel{algo:skyline:12} 
                }
                $c' \gets \sky[k+1].y$\;  \nllabel{algo:skyline:13} 
            }     
             remove $\sky[i+1], \sky[i+2], \cdots, \sky[j-1]$ from $\sky$\; \nllabel{algo:skyline:14} 
            add $(b,c)$ to $\sky$\; \nllabel{algo:skyline:15} 
        }
        
    \KwRet{$\pt(\textT,\hf)$;} \nllabel{algo:skyline:16} 
\end{algorithm}
\begin{algorithm}[H]
    \caption{GenerateActiveKeys($\textT, \hf$)}
    \label{algo:genactivekeys}
   \tcp{Replace Lines~\ref{algo:genkey:4}-\ref{algo:genkey:7}, Algorithm~\ref{algo:genkey} with:}
    \ForEach{entry $(t,A)\in\map$}{
        $minkey\gets +\infty$\;
        \ForEach{$1\leq x \leq |A|$}{
            \If{$minkey>\hf(t,x)$}{
                $minkey\gets\hf(t,x)$\;
                \ForEach{$1 \leq i \leq |A|-x+1$}{
                    add $(A[i], A[i+x-1], \hf(t,x))$ to $\key$\;
                }
            }
        }
    }
   
\end{algorithm}\vspace{-1em}
\end{figure}

Lemma~\ref{lemma:dom} shows that by a binary search, one can determine if a key is dominated by any key in a skyline under coordinate order. 
\begin{lemma} \label{lemma:dom}
Given a skyline $\sky$ in coordinate order and a key $(b,c)$, let $j$ be the largest index such that $\sky[j].y \leq c$. There exists a key in $\sky$ that dominates $(b,c)$ if and only if $j$ exists and  $\sky[j]$ dominates $(b,c)$.
\end{lemma}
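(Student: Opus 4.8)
The plan is to prove the two directions of the stated equivalence separately, with essentially all the work concentrated in the forward direction. The backward direction is immediate: if $j$ exists and $\sky[j]$ dominates $(b,c)$, then $\sky[j]$ is itself a key of $\sky$ that dominates $(b,c)$, so a dominating key exists. So I would spend the proof on showing that \emph{whenever some} key of $\sky$ dominates $(b,c)$, the specific key $\sky[j]$ already does.

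First I would translate dominance into coordinate inequalities. By the definition of dominance, a key $\sky[k]$ dominates $(b,c)$ exactly when $[\sky[k].x,\sky[k].y]\subset[b,c]$, i.e.\ when $b\leq\sky[k].x$ and $\sky[k].y\leq c$ with the inclusion being \emph{proper} (consistent with the earlier example where $(3,3)$ dominates $(2,4)$: the dominator sits to the bottom-right, with larger $x$ and smaller $y$). Next I would invoke the monotonicity of the skyline established in Lemma~\ref{lemma:orderinskyline2}, namely $\sky[1].x\leq\cdots\leq\sky[l].x$ and $\sky[1].y\leq\cdots\leq\sky[l].y$. Suppose some $\sky[k]$ dominates $(b,c)$. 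Then $\sky[k].y\leq c$, so $k$ lies in the prefix of indices whose $y$-coordinate is at most $c$; since $j$ is by definition the largest such index, $j$ exists and $k\leq j$. Monotonicity of the $x$-coordinates then gives $\sky[j].x\geq\sky[k].x\geq b$, while the definition of $j$ gives $\sky[j].y\leq c$. The conceptual heart of the argument is exactly this step: among all candidate dominators (the skyline keys with $y\leq c$), domination additionally requires $x\geq b$, so it suffices to test the candidate with the \emph{largest} $x$-coordinate, and monotonicity guarantees that candidate is precisely $\sky[j]$.

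The one subtlety I expect to be the main obstacle is the \emph{properness} of the subset relation. The inequalities $b\leq\sky[j].x$ and $\sky[j].y\leq c$ only yield $[\sky[j].x,\sky[j].y]\subseteq[b,c]$, and I must separately rule out equality. Here I would use that $(b,c)$ is the key currently being visited and has not yet been inserted into $\sky$, and that the two sentinels $(0,0)$ and $(n+1,n+1)$ have coordinates disjoint from those of any genuine key (since $1\leq b\leq c\leq n$); consequently no key of $\sky$ shares the coordinates $(b,c)$, so $[\sky[j].x,\sky[j].y]\neq[b,c]$ and the inclusion is strict. Combining strict inclusion with the two inequalities shows $\sky[j]$ dominates $(b,c)$, completing the forward direction and hence the lemma.
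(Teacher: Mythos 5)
Your proposal is correct and, in its main line, takes essentially the same route as the paper's proof: the backward direction is immediate, and the forward direction combines the coordinate-order monotonicity of the skyline (Lemma~\ref{lemma:orderinskyline2}) with the maximality of $j$ to get $k\leq j$, hence $b\leq\sky[k].x\leq\sky[j].x$ and $\sky[j].y\leq c$.

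The one place you diverge is how you rule out equality $[\sky[j].x,\sky[j].y]=[b,c]$, and there your justification is the weak point. You appeal to the algorithm's state---$(b,c)$ is the key currently being visited, not yet inserted into $\sky$, and distinct from the guard keys---but none of that is a hypothesis of the lemma, which is stated for an arbitrary skyline and an arbitrary key; as written, your proof covers the lemma only under an extra assumption imported from the surrounding algorithm. The self-contained fix is already latent in your forward hypothesis: if $[\sky[j].x,\sky[j].y]=[b,c]$, then the assumed dominator $\sky[k]$ would dominate $\sky[j]$, a member of the skyline, contradicting the defining property of a skyline that no member dominates another. This is also the reading under which the paper's terse closing step (``as $\sky[i]$ dominates $(b,c)$, $\sky[j]$ dominates $(b,c)$'') becomes rigorous: the properness of $\sky[j]$'s dominance is inherited from the properness of $\sky[i]$'s via the no-mutual-dominance property, not from any statement about where $(b,c)$ came from. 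With that one-line substitution your proof is complete and fully general.
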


\begin{proof}
Suppose there exists a key $\sky[i]$ in $\sky$ that dominates $(b,c)$, \ie $b \leq \sky[i].x$ and $\sky[i].y \leq c$. Then $j$ exists by the definition of $j$. Since $j$ is the largest index with $\sky[j].y \leq c$, we have $\sky[i].y \leq \sky[j].y$, which by Lemma~\ref{lemma:orderinskyline2}, implies $\sky[i].x \leq \sky[j].x$. Thus, $b\leq \sky[i].x\leq \sky[j].x$ and $\sky[j].y \leq c$. As $\sky[i]$ dominates $(b,c)$, $\sky[j]$ dominates $(b,c)$. On the other hand, if $j$ exists and $\sky[j]$ dominates $(b, c)$, then clearly $\sky$ contains a key (namely $\sky[j]$) that dominates $(b, c)$.
\end{proof}

Algorithm~\ref{algo:skyline} shows the pseudo-code our monotonic partitioning algorithm. It takes a text $\textT$ and a hash function $\hf$ as input and produces a partition $\pt(\textT,\hf)$. It first generates all the keys in the text using the procedure \textsc{GenerateKeys} (Line~\ref{algo:skyline:1}). \textsc{GenerateKeys} (Algorithm~\ref{algo:genkey}) first builds an inverted index to keep track of the positions of each distinct token in $\textT$ (Lines~\ref{algo:genkey:1}-\ref{algo:genkey:3}). For each inverted list $(t,A)$ (Line~\ref{algo:genkey:4}), it enumerates every pair of positions $A[i]$ and $A[j]$ where $i\leq j$ (Lines~\ref{algo:genkey:5}-\ref{algo:genkey:6}). Then $(A[i], A[j])$ must be a key. The key, along with its hash value $h(t,j-i+1)$, is added to the key set $\key$ (Line~\ref{algo:genkey:7}). Finally, the procedure returns the set $\key$ of all keys in $\textT$, along with their hash values  (Line~\ref{algo:genkey:8}).

Once the key set $\key$ is generated, Algorithm~\ref{algo:skyline} initializes the skyline with two guard keys $(0,0)$ and $(n+1,n+1)$ in Line~\ref{algo:skyline:2}. The guard keys help simplify the algorithm by avoiding a few corner cases. The keys in $\sky$ will always be kept in coordinate order using a self-balanced binary search tree. After that, we visit the keys in $\key$ in ascending order of their hash values; let $(b, c)$ be the key currently being visited and $v$ be its hash value (Line~\ref{algo:skyline:3}),
denote by $\keys$ the set of keys that have already been visited. $\sky$ maintains the skyline on $\keys$ upon the insertion of $(b,c)$. Specifically, if $(b,c)$ has been dominated by any key in $\sky$, then $\rec(b,c)\setminus\rec(\keys)$ is empty and no subsequence in $\rec(b,c)$ would have min-hash $v$. It means all subsequences in $\rec(b,c)$ have been represented by compact windows generated earlier. In this case, we generate no compact window: skip $(b,c)$ and jump to the next key (Line~\ref{algo:skyline:5}). 

Based on Lemma~\ref{lemma:dom}, we can efficiently determine whether $(b, c)$ is dominated by any key in $\sky$ via binary search. Note that if the result is $j' = 0$  (Line~\ref{algo:skyline:4}), then $\sky[j']$ refers to the guard key $(0, 0)$, which does not dominate $(b, c)$. This case corresponds to the scenario in Lemma~\ref{lemma:dom} where no such index $j$ exists.

When $(b,c)$ is not dominated by any key in $\sky$, we first find all the keys in $\sky$ that are dominated by $(b,c)$. Lemma~\ref{lemma:dom2} (\kw{C1}) shall prove that with two binary searches on $\sky$ (Lines~\ref{algo:skyline:6}-\ref{algo:skyline:7}), $\sky[i+1]$ to $\sky[j-1]$ are exactly the keys in $\sky$ dominated by $(b,c)$. Lines~\ref{algo:skyline:8}-\ref{algo:skyline:13} generate up to $j-i$ compact windows (red dashed rectangles in Figure~\ref{fig:stair}). Lemma~\ref{lemma:dom2} (\kw{C2}) shall prove that these compact windows are disjoint and cover all the subsequences $\textT[i,j]$ where $(i,j)\in\rec(b,c)\setminus\rec(\keys)$, whose min-hash must be $v$ (by Lemma~\ref{lemma:iffcontain} and Lemma~\ref{lemma:exclude}). We update $\sky$ by removing these dominated keys (Line~\ref{algo:skyline:14}) and inserting the new key $(b,c)$ (Line~\ref{algo:skyline:15}). 

In practice, Line~\ref{algo:skyline:4} and Line~\ref{algo:skyline:6} can be combined in one binary search. We use two binary searches for ease of presentation later.

\begin{example}
Consider the running example. As shown in Figure~\ref{fig:4:c}, when visiting $(b=3,c=3)$, we have $v=2$ and $\sky=\{(0,0),(1,1),(2,8),(11,11)\}$. The binary search would find $j'=2$ as $\sky[2].y=1\leq c$ and $\sky[3].y=8>c$. Clearly, $\sky[j']=(1,1)$ does not dominate $(3,3)$. Then the next two binary searches would find $i=2$ and $j=4$. Next, when $k=2$, a compact window $\langle\textT,\hf,v=2,a=2,b=3,c'=3,d=7\rangle$ is generated and $c'$ becomes 8. When $k=3$, another compact window  $\langle\textT,\hf,2,3,3,8,10\rangle$ is produced. Finally, the key $\sky[3]=(2,8)$ is removed from the skyline, while $(3,3)$ is added.
\end{example}



\noindent Let $\fmax$ be the \emph{maximum token frequency} in \textT, \ie $\fmax=\max_{t\in\textT}\freq{t}{\textT}$.  
\begin{theorem} \label{theorem:uppernlogf}
Given a text $\textT$ and a hash function $\hf$, Algorithm~\ref{algo:skyline} generates a partition $\pt(\textT,\hf)$.
\end{theorem}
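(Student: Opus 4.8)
The plan is to verify the two requirements of Definition~\ref{def:partition} (disjointness and coverage), together with the defining property of a compact window (Definition~\ref{def:cw}) for every tuple the algorithm emits: each tuple must be a legitimate compact window carrying the correct min-hash value $v$, the emitted windows must be pairwise disjoint, and their union must cover every subsequence of $\textT$. I would reduce all three to a single per-visit invariant and then stitch the visits together by induction on the order in which keys are processed.

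First I would analyze one iteration of the main loop. Let $(b,c)$ be the key visited at some step, let $v$ be its hash value, and let $\keys$ denote the set of keys visited strictly before this step. Since keys are processed in ascending order of hash value, every key in $\keys$ has hash value at most $v$. By Lemma~\ref{lemma:iffcontain}, a subsequence $\textT[i,j]$ has min-hash $v$ exactly when it contains a key of hash value $v$ and no key of smaller value; I would argue that the set of such subsequences emitted for the first time at this step is precisely $\rec(b,c)\setminus\rec(\keys)$ --- it contains $(b,c)$ (hash $v$), and any earlier key it might contain has already claimed it. Lemma~\ref{lemma:exclude} lets me replace $\rec(\keys)$ by $\rec(\sky(\keys))$, so this region is exactly the staircase depicted in Figure~\ref{fig:stair}. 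Here I would invoke Lemma~\ref{lemma:dom2}: part~\kw{C1} certifies that the binary searches of Lines~\ref{algo:skyline:6}--\ref{algo:skyline:7} correctly isolate the skyline keys dominated by $(b,c)$, and part~\kw{C2} certifies that Lines~\ref{algo:skyline:8}--\ref{algo:skyline:13} tile this staircase by pairwise-disjoint rectangles. Each emitted tuple $\langle\textT,\hf,v,a,b,c',d\rangle$ satisfies $a\le b$ and $c'\le d$ by the guard of Line~\ref{algo:skyline:11}, and $b\le c\le c'$ since $(b,c)$ is a key and $c'$ only grows from $c$; hence $1\le a\le b\le c'\le d\le n$, and by the staircase argument $v$ is the common min-hash of all subsequences it represents, so it is a valid compact window.

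Next I would globalize disjointness and coverage by induction on the visit index $t$. Writing $\keys_t$ for the keys visited through step $t$, the invariant is that the union of all windows emitted so far equals $\rec(\keys_t)$ and that the windows emitted at step $t$ lie entirely in $\rec(b_t,c_t)\setminus\rec(\keys_{t-1})$. When $(b_t,c_t)$ is dominated (Line~\ref{algo:skyline:5}) the algorithm emits nothing and $\rec(b_t,c_t)\subseteq\rec(\keys_{t-1})$, so $\rec(\keys_t)=\rec(\keys_{t-1})$ and the invariant persists; otherwise the new windows cover exactly $\rec(b_t,c_t)\setminus\rec(\keys_{t-1})$, which is disjoint from the previously covered region $\rec(\keys_{t-1})$. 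Combining this cross-step disjointness with the within-step disjointness from \kw{C2} yields global pairwise disjointness. For coverage, after the final key the covered region is $\rec(\key(\textT))$; since every subsequence $\textT[i,j]$ contains the diagonal key $(i,i)$ (as $\textT[i]=\textT[i]$ and $i\le i\le j$), we have $(i,j)\in\rec(i,i)\subseteq\rec(\key(\textT))$, so all subsequences are covered. Together these establish both bullets of Definition~\ref{def:partition}.

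The main obstacle is the per-visit tiling claim, i.e.\ Lemma~\ref{lemma:dom2}. Proving that $\sky[i+1],\dots,\sky[j-1]$ are \emph{exactly} the dominated skyline keys (\kw{C1}) and that the rectangles built from consecutive skyline corners in Lines~\ref{algo:skyline:8}--\ref{algo:skyline:13} disjointly exhaust the staircase $\rec(b,c)\setminus\rec(\sky(\keys))$ (\kw{C2}) requires the monotone coordinate structure of the skyline (Lemmas~\ref{lemma:orderinskyline1}--\ref{lemma:orderinskyline2}) and a careful treatment of the two guard keys and of the $c'$-update that avoids the degenerate case $\sky[i+1].y=c$ flagged in the code comment. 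Ties in hash values need a brief argument that the processing order among equal-valued keys does not affect which region is claimed first, but this is subsumed by the ``no earlier key'' clause of Lemma~\ref{lemma:iffcontain} and causes no double counting.
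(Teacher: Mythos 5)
Your proposal is correct and follows essentially the same route as the paper's proof: an induction over the visiting order whose invariant is that the windows emitted so far tile \rec of the visited keys, with Lemma~\ref{lemma:dom2} (\kw{C1}/\kw{C2}) supplying the per-visit staircase tiling and Lemmas~\ref{lemma:iffcontain} and~\ref{lemma:exclude} certifying that every tile carries the common min-hash $v$. The only (harmless) divergence is your coverage step, which uses the diagonal keys $(i,i)$ to show every cell lies in $\rec(\key(\textT))$, whereas the paper instead appeals to Lemma~\ref{lemma:iffcontain} to exhibit a key realizing each subsequence's min-hash; both close the argument equally well.
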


\begin{theorem} \label{theorem:number}
In expectation, the partition $\pt(\textT,\hf)$ generated by Algorithm~\ref{algo:skyline} has $O(n + n\log \fmax)$ compact windows, where $n=|\textT|$.
\end{theorem}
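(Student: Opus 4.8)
The plan is to decouple the count of compact windows from the (potentially quadratic) number of keys by charging windows to a small set of \emph{productive} keys---those keys that, when visited, are not dominated by the current skyline and therefore trigger the window-generating loop (Lines~\ref{algo:skyline:8}--\ref{algo:skyline:15}). First I would establish a purely combinatorial amortization bound: the total number of compact windows is at most twice the number of productive keys. When a productive key $(b,c)$ is visited, the inner loop over $k\in[i,j)$ emits at most $j-i$ windows (Lines~\ref{algo:skyline:8}--\ref{algo:skyline:13}), while it removes exactly the $j-1-i$ dominated keys $\sky[i+1],\dots,\sky[j-1]$ (Line~\ref{algo:skyline:14}) and inserts the single new key $(b,c)$ (Line~\ref{algo:skyline:15}); hence the windows emitted are at most $(\text{keys removed})+1$. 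Summing over all productive visits gives $(\text{total removals})+(\text{\# productive keys})$, and since each key is visited once and inserted into $\sky$ at most once, total removals never exceed total insertions $=$ \# productive keys. This yields the pointwise inequality $\#\text{windows}\le 2\cdot(\text{\# productive keys})$, so by linearity it suffices to bound the expected number of productive keys.

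Next I would characterize productive keys exactly. Because keys are visited in increasing order of hash value and $(p,q)$ dominates $(b,c)$ iff $[p,q]\subsetneq[b,c]$, the key $(b,c)$ is productive iff no key with strictly smaller hash has its interval contained in $[b,c]$, i.e.\ iff $\hf(\textT,b,c)$ is the minimum hash over all keys in $\key(\textT[b,c])$. By Lemma~\ref{lemma:keyminhash} this is precisely the event $\hf(\textT,b,c)=\hf(\textT[b,c])$: the key realizes the min-hash of its own enclosing subsequence. The hash values ranged over form exactly $\hset(\textT[b,c])$, whose cardinality equals the length $|\textT[b,c]|=c-b+1$ (each distinct under the no-collision assumption), and the boundary-token hash $\hf(\textT,b,c)$ is one of these $c-b+1$ exchangeable random values. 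Hence $\Pr[(b,c)\text{ productive}]=\tfrac{1}{c-b+1}$.

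Finally I would sum by linearity over all keys, grouping by token. For a token $t$ occurring at positions $a_1<\dots<a_{f_t}$, the contribution is $\sum_{1\le i\le j\le f_t}\tfrac{1}{a_j-a_i+1}$. The diagonal terms $i=j$ each equal $1$, summing over tokens to $\sum_t f_t=n$. For $i<j$, integrality of positions gives $a_j-a_i\ge j-i$, so each term is at most $\tfrac{1}{(j-i)+1}$; collecting by gap $d=j-i$ yields $\sum_{d\ge1}(f_t-d)\tfrac{1}{d+1}\le f_t H_{f_t}=O(f_t\log f_t)$, and summing over tokens with $f_t\le\fmax$ gives $O(n\log\fmax)$. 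Combining the two parts, $\E[\text{\# productive keys}]=O(n+n\log\fmax)$, and the factor-$2$ bound gives the claimed $O(n+n\log\fmax)$ expected number of compact windows.

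I expect the main obstacle to be the probabilistic characterization in the second step: pinning down that ``productive'' coincides exactly with ``realizes the subsequence min-hash'', checking that same-token keys strictly inside $[b,c]$ carry distinct hashes (so there are no ties to spoil the iff), and justifying that $\hf(\textT,b,c)$ is equally likely to be the minimum among the $c-b+1$ members of $\hset(\textT[b,c])$---which rests on the idealized no-collision / min-wise independence assumption already used in Section~\ref{sec:multi-set}. By contrast, the amortization bound and the harmonic-sum estimate are routine once the observation $a_j-a_i\ge j-i$ is in hand.
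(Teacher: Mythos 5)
Your proposal is correct, and it reuses the paper's amortization (at most one window per skyline insertion plus one per removal, hence at most twice the number of inserted keys, exactly as in Lemma~\ref{lemma:compactwindowsize_skyline}), but it counts the inserted keys by a genuinely different argument. The paper relaxes: it defines \emph{active} keys---those whose hash $\hf(t,x)$ is minimal among $\hf(t,1),\dots,\hf(t,x)$, a condition involving only the key's own token---proves that every inserted key is active (Lemma~\ref{lemma:activehashingnumber1}), and bounds $\mathbb{E}[|\aset(\textT)|]$ by $\sum_{t}\freq{t}{\textT}\cdot O(1+\log \freq{t}{\textT})$ using the per-token probability $1/i$ (Lemma~\ref{lemma:expectednumberofeffkeys}). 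You instead characterize the inserted (``productive'') keys \emph{exactly}: $(b,c)$ is inserted iff $\hf(\textT,b,c)=\hf(\textT[b,c])$, which via Lemma~\ref{lemma:keyminhash} and the no-tie observation gives probability exactly $1/(c-b+1)$, and you then sum using the gap inequality $a_j-a_i\ge j-i$. Your count is pointwise tighter (indeed $1/(a_j-a_i+1)\le 1/(j-i+1)$, the paper's active-key probability for that key), and your tie-handling step---same-token keys in nested spans have different frequencies, hence different hash values---is exactly the right thing to check. Two trade-offs are worth noting. First, your probability $1/(c-b+1)$ needs exchangeability of hash values \emph{across} tokens, whereas the paper's $1/i$ only needs exchangeability within one token's hash values; both exceed what 2-universal hashing literally provides, but the paper already assumes idealized min-wise behavior throughout (e.g., $\mathbf{Pr}(\hf(\textT)=\hf(\textS))=\Jaccard_{\textT,\textS}$), so this is consistent with its setting. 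Second, the paper's looser active-key relaxation is not wasted generality: because activity depends only on same-token comparisons, active keys can be enumerated in advance per token, which is what enables the \textsc{GenerateActiveKeys} optimization (Algorithm~\ref{algo:genactivekeys}), the complexity bound of Theorem~\ref{the:complexity}, and the weighted extension in Lemma~\ref{lemma:weightfunctions}; your productive keys, being defined by a global cross-token minimum, cannot be identified without running the skyline sweep itself.
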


\begin{theorem} \label{theorem:timecomplexity}
When given the length and the maximum token frequency of a text, Algorithm~\ref{algo:skyline}, the Monotonic Partitioning algorithm,  is optimal in terms of the expected partition size in the worst case.
\end{theorem}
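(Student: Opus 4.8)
The plan is to derive worst-case optimality by pairing the upper bound of Theorem~\ref{theorem:number} with a matching lower bound on the \emph{optimal} partition size. Since any algorithm that outputs a valid partition (Definition~\ref{def:partition}) uses at least as many compact windows as an optimal partition, it suffices to exhibit, for every $n$ and every value of $f:=\fmax$, a single text $\textT^{\ast}$ of length $n$ and maximum token frequency $f$ whose optimal partition has expected size $\Omega(n+n\log f)$. Worst-case optimality then follows immediately: ``expected partition size in the worst case'' is $\max_{\textT}\mathbb{E}[\,\cdot\,]$, our algorithm achieves $O(n+n\log f)$ on every input by Theorem~\ref{theorem:number}, and no algorithm can beat the optimum on $\textT^{\ast}$. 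I would take $\textT^{\ast}=t_1^{f}t_2^{f}\cdots t_{m}^{f}$, the concatenation of $m=\lfloor n/f\rfloor$ blocks, where block $b$ consists of a \emph{distinct} token $t_b$ repeated $f$ times (padding any remainder with fresh singleton tokens). Then $|\textT^{\ast}|=n$ and the maximum token frequency is exactly $f$.

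The core of the argument is a conflict-set lower bound. For block $b$, call its \emph{pure triangle} the set of cells $(i,j)$ whose endpoints both lie inside block $b$; by Lemma~\ref{lemma:keyminhash} every such subsequence is a run of copies of $t_b$, so its min-hash equals $g_b(\ell):=\min_{1\le x\le \ell}\hf(t_b,x)$ with $\ell=j-i+1$, i.e.\ it depends only on the anti-diagonal $\ell$ and is non-increasing in $\ell$. Call $\ell$ a \emph{record} of block $b$ if $g_b(\ell)<g_b(\ell-1)$. I would let $S$ be the set of all pure-triangle cells that lie on a record anti-diagonal, taken over all blocks, and claim that no two cells of $S$ can lie in a common compact window; this forces any partition to have at least $|S|$ windows. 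Indeed, for two cells of the same block on the same record anti-diagonal $\ell=r$, any valid window (Definition~\ref{def:cw}, which requires $a\le b\le c\le d$) containing both is forced either to contain a corner cell on a strictly smaller anti-diagonal, whose color $g_b(<r)\neq g_b(r)$, or to violate the constraint $b\le c$ altogether; both are contradictions. Cells on different record anti-diagonals of one block have distinct colors $g_b(r)\neq g_b(r')$, and cells in different blocks take colors from the disjoint hash-value sets of distinct tokens (we assume no collisions), so they conflict as well.

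It remains to count $|S|$ in expectation. The number of pure-triangle cells on anti-diagonal $\ell$ of a block is $f-\ell+1$, so $|S|=\sum_{b=1}^{m}\sum_{\ell:\,\ell\text{ is a record of }b}(f-\ell+1)=\sum_{b=1}^{m}\sum_{\ell=1}^{f}R_b(\ell)$, where $R_b(\ell)$ counts the records of block $b$ among the first $\ell$ lengths. The probabilistic input is exactly the one underlying Theorem~\ref{theorem:number}: under a random hash function the values $\hf(t_b,1),\dots,\hf(t_b,f)$ appear in uniformly random relative order, so each length is a record with probability $1/\ell$ and $\mathbb{E}[R_b(\ell)]=H_\ell=\Theta(\log \ell)$. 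Summing gives $\mathbb{E}[|S|]=m\sum_{\ell=1}^{f}H_\ell=\Theta\!\big(m\cdot f\log f\big)=\Omega(n\log f)$. The additive $\Omega(n)$ term is separate and elementary: in \emph{any} text the $n$ diagonal cells $(i,i)$ pairwise conflict, since a single window satisfying $b\le c$ cannot contain two of them, so every partition has at least $n$ windows. Combining, the optimal partition on $\textT^{\ast}$ has expected size $\Omega(n+n\log f)$, matching Theorem~\ref{theorem:number}.

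The step I expect to be the main obstacle is making the conflict claim airtight simultaneously with the record count. Showing that every valid window meeting two cells of $S$ must swallow a differently-colored cell relies on the monotonicity of $g_b$ in the anti-diagonal together with a careful case analysis of the constraint $a\le b\le c\le d$, including the degenerate case where the two cells are so far apart that no valid window contains both. The identity $\mathbb{E}[R_b(\ell)]=H_\ell$ is immediate under idealized random hashing but, under the pairwise-independent family $\hf(t,x)=(a_1t+a_2x+b)\bmod p$, needs the same hashing assumption as the upper-bound proof to justify the $1/\ell$ record probability; I would reuse whatever model Theorem~\ref{theorem:number} adopts so the two bounds are consistent. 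Finally, I would confirm that the divisibility/padding when $f\nmid n$ changes nothing asymptotically.
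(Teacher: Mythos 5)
Your proposal is correct and follows essentially the same route as the paper: the same hard instance (blocks of a distinct token repeated $\fmax$ times), the same conflict set (your ``record'' anti-diagonals are exactly the paper's active hash values, and your corner-cell argument matches the paper's non-mergeability proof), and the same harmonic-sum counting yielding $\Omega(n+n\log\fmax)$, matched against Theorem~\ref{theorem:number}. Your extra touches (explicit padding when $\fmax\nmid n$, the diagonal-cell argument for the additive $\Omega(n)$ term, and flagging the hashing-model consistency) are minor refinements of the paper's argument rather than a different approach.
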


We prove Theorem~\ref{theorem:uppernlogf} in Appendix~\ref{sec:proofs} (due to space limitation) and Theorem~\ref{theorem:number} in Section~\ref{sec:active}. Note that the ``expectation'' in Theorem~\ref{theorem:number} is due to the nature of the hash function as opposed to any assumption on $\textT$. We also provide a lower bound in Theorem~\ref{theorem:timecomplexity}, with the proof provided in Appendix~\ref{sec:lowerbound} for interested readers.

\subsection{Active Keys and Complexity Analysis} \label{sec:active}

The total number of keys generated by Algorithm~\ref{algo:genkey} is $O(\sum_{t\in \textT}\freqq{t}{\textT})$, which can go up to $O(n^2)$ -- when all the tokens are duplicated (\ie $\fmax=n$). This leads to a running time  of $O(n^2\log n)$ and space $O(n^2)$ just for producing, sorting and storing the keys.


We observe that not all keys in $\key(\textT)$ need to be enumerated. Specifically, in Algorithm~\ref{algo:skyline}, when visiting a key $(b, c) \in \key(\textT)$, the key is skipped if it is dominated by a key in the skyline of visited keys (Line~\ref{algo:skyline:5}). Although it is non-trivial to determine all the skipped keys in advance, we identify a subset of them -- referred to as \textit{non-active keys} -- which, as formalized below, are guaranteed to be dominated and can thus be safely skipped during key generation.

\begin{definition}[Active Hash Value] \label{def:activehash}
Given a text $\textT$ and a hash function $\hf$, a hash value $\hf(t, x)$ for a token $t \in \textT$ is said to be active if and only if $\hf(t, x) < \hf(t, x')$ for all positive integers $x' < x$. 
\end{definition}

That is to say, a hash value $\hf(t, x)$ is active if and only if it is the smallest among $\hf(t,1), \hf(t,2), \cdots, \hf(t,x-1), \hf(t,x)$. For example, Figure~\ref{subfig:hset3} shows all the active hash values in the hash value set $\hset(\textT)$ in Figure~\ref{subfig:hset2}. The bold, red integers are active hash values, while the rest (\ie gray ones) are non-active hash values.

\begin{definition}[Active Key] \label{def:activekeys}
A key $(p,q)$ is active if and only if its hash value $\hf(p,q,\textT)$ is active for token $\textT[p]$ (note that $\textT[p]=\textT[q]$ for a key). The set of active keys in a text $\textT$ is denoted as $\aset(\textT)$.
\end{definition}

For example, Figure~\ref{fig:activekeys} shows the active key set $\aset(\textT)$ of $\textT$ under the hash function $\hf$ from the running example. Although $\textT$ contains 23 keys in total, only 14 of them are active keys.

Note that a key is added to the skyline in Algorithm~\ref{algo:skyline} if and only if it is not skipped, i.e., the condition in Line~\ref{algo:skyline:5} evaluates to false.

\begin{lemma} \label{lemma:activehashingnumber1}
No non-active key is added to the skyline in Algorithm~\ref{algo:skyline}.
\end{lemma}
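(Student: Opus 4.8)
The plan is to prove the contrapositive of the algorithm's behavior: I will show that every non-active key, at the moment it is visited in the loop of Line~\ref{algo:skyline:3}, is already dominated by some key on the current skyline, so the test in Line~\ref{algo:skyline:5} fires and the key is skipped rather than inserted in Line~\ref{algo:skyline:15}. Concretely, I would fix a non-active key $(p,q)$, write $t=\textT[p]=\textT[q]$ and let $x=\freq{t}{\textT[p,q]}$ so that its hash value is $v=\hf(t,x)$. By Definitions~\ref{def:activehash} and~\ref{def:activekeys}, non-activeness yields a frequency $x'<x$ with $\hf(t,x')<\hf(t,x)=v$; the strict inequality comes for free from the no-collision assumption.

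The heart of the argument is to exhibit an explicit witness key that dominates $(p,q)$ and is visited earlier. Listing the positions of $t$ inside $\textT[p,q]$ as $p=i_1<i_2<\cdots<i_x=q$, I would take $s'=(p,i_{x'})$. This is a valid key since $\textT[p]=\textT[i_{x'}]=t$; the token $t$ occurs exactly $x'$ times in $\textT[p,i_{x'}]$, so its hash value is $\hf(t,x')<v$. Because keys are processed in ascending order of hash value, $s'$ is visited strictly before $(p,q)$. Moreover $[p,i_{x'}]\subsetneq[p,q]$ (as $i_{x'}<i_x=q$), so $s'$ dominates $(p,q)$, and consequently $(p,q)\in\rec(p,i_{x'})$, i.e. $p\le p$ and $q\ge i_{x'}$.

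I would then convert ``visited earlier and dominating'' into ``dominated by the current skyline.'' Let $\keys$ be the set of keys already visited when $(p,q)$ is processed; since $s'\in\keys$ we have $(p,q)\in\rec(s')\subseteq\rec(\keys)$. Using Lemma~\ref{lemma:exclude} together with the algorithm's invariant that $\sky$ is maintained as $\sky(\keys)$ (established in the correctness argument behind Theorem~\ref{theorem:uppernlogf}), we get $\rec(\keys)=\rec(\sky(\keys))=\rec(\sky)$. Hence $(p,q)\in\rec(s_0)$ for some $s_0\in\sky$, which gives $s_0.x\ge p$ and $s_0.y\le q$; since the two guard keys have empty rectangles and $(p,q)\notin\sky$ (it is only now being visited), $s_0\neq(p,q)$ and thus $[s_0.x,s_0.y]\subsetneq[p,q]$, i.e. $s_0$ dominates $(p,q)$. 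By Lemma~\ref{lemma:dom} the binary search in Line~\ref{algo:skyline:4} detects this, Line~\ref{algo:skyline:5} skips $(p,q)$, and it is therefore never added to $\sky$.

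The step I expect to be the main obstacle — and the one I would spell out most carefully — is the identity $\rec(\keys)=\rec(\sky)$, i.e. that skipping dominated keys never loses coverage and that the maintained $\sky$ really is $\sky(\keys)$ for the \emph{full} visited set, not only for the keys that were inserted. I would either invoke it directly from the correctness proof of Theorem~\ref{theorem:uppernlogf}, or re-derive it by a short induction on the visit order: a skipped key $(b,c)$ satisfies $\rec(b,c)\subseteq\rec(s_0)\subseteq\rec(\sky)$ for its dominator $s_0$, so it contributes nothing new to $\rec(\keys)$, whereas an inserted key triggers the removal/insertion of Lines~\ref{algo:skyline:14}--\ref{algo:skyline:15} that, by Lemma~\ref{lemma:exclude}, restores $\sky=\sky(\keys)$. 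A minor supporting fact I would state explicitly is the transitivity of dominance ($[s_0.x,s_0.y]\subset[p,i_{x'}]\subset[p,q]$ style reasoning), which is precisely what lets a dominator of the witness serve as a dominator of $(p,q)$.
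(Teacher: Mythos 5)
Your proof is correct, and its first half coincides exactly with the paper's own argument: both hinge on the same witness key --- the pair formed from the $x'$-th occurrence of $t$ inside $\textT[p,q]$ (the paper's $(p,q_i)$, your $(p,i_{x'})$) --- which has a strictly smaller hash value, is therefore visited earlier, and dominates $(p,q)$. Where you diverge is in converting ``dominated by some previously visited key'' into ``dominated by the \emph{current} skyline, hence skipped at Line~\ref{algo:skyline:5}.'' The paper does this by a two-case analysis plus transitivity of dominance (Lemma~\ref{lemma:transitive}): either the witness is still on the skyline, or it was pruned by a dominator that transitively dominates $(p,q)$; strictly speaking that second case needs a short chain (or induction) argument, since the witness's pruner may itself have been pruned later, and the paper leaves this implicit. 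You instead invoke the invariant established in the induction behind Theorem~\ref{theorem:uppernlogf} --- that the maintained $\sky$ is the (augmented) skyline of the full visited set $\keys$ --- combine it with Lemma~\ref{lemma:exclude} to get $\rec(\keys)=\rec(\sky)$, and extract a skyline dominator directly. This is legitimate and non-circular, because the appendix proof of Theorem~\ref{theorem:uppernlogf} concerns the vanilla algorithm and nowhere uses this lemma; it buys you a wholesale, rigorous treatment of precisely the chain-of-dominators issue the paper glosses over, at the cost of a heavier dependence on the correctness proof, whereas the paper's argument is self-contained modulo Lemmas~\ref{lemma:dom} and~\ref{lemma:transitive}. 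Your explicit handling of the strictness point ($s_0\neq(p,q)$, guard keys having empty rectangles) is also needed and correct.
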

\begin{proof}
Let $(p, q) \in \key(\textT)$ be a non-active key. Then its hash value $\hf(t,x)$ must be non-active, where $t=\textT[p]=\textT[q]$ and $x=\freq{t}{\textT[p,q]}$. Let $p=q_1<q_2<\cdots<q_x=q$ be the positions such that $\textT[q_i] = t$ for all $1 \leq i \leq x$.  Since $\hf(t, x)$ is non-active, by Definition~\ref{def:activehash}, there exists some $1 \leq i < x$ such that $\hf(t, i) < \hf(t, x)$. Then the corresponding key $(p, q_i)$ has a smaller hash value than $(p, q)$ and satisfies $[p, q_i] \subset [p, q]$, hence it dominates $(p, q)$. Moreover, since $\hf(t, i) < \hf(t, x)$, $(p, q_i)$ is visited before $(p, q)$.


Now, consider two cases. (1) If $(p, q_i)$ is in the skyline at the time $(p, q)$ is visited, then $(p, q)$ will be skipped by Lemma~\ref{lemma:dom}. (2) If $(p, q_i)$ is not in the skyline, it must be pruned by a key dominates $(p, q_i)$. By Lemma~\ref{lemma:transitive} (dominance is transitive), there exists a key in the skyline that dominates $(p, q_i)$ and thus also dominates $(p, q)$. In this case, $(p, q)$ will also be skipped. In either case, $(p, q)$ will not be added to the skyline. Therefore, only active keys can enter the skyline in Line~\ref{algo:skyline:15}, and all non-active keys are skipped.
\end{proof}

\begin{lemma} \label{lemma:transitive}
Dominance is transitive, i.e., if $(p,q)$ dominates $(p',q')$ and $(p',q')$ dominates $(p'',q'')$, then $(p,q)$ dominates $(p'',q'')$.
\end{lemma}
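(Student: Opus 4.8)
The plan is to unwind the definition of dominance into a statement about interval containment and then invoke the transitivity of the (proper) subset relation. By definition, the hypothesis that $(p,q)$ dominates $(p',q')$ means precisely $[p,q]\subset[p',q']$, and similarly $(p',q')$ dominates $(p'',q'')$ means $[p',q']\subset[p'',q'']$. Translating each containment into coordinate inequalities, the first gives $p'\leq p$ and $q\leq q'$ (with the containment proper), while the second gives $p''\leq p'$ and $q'\leq q''$.

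The next step is simply to chain these inequalities. Combining $p''\leq p'\leq p$ and $q\leq q'\leq q''$ yields $p''\leq p$ and $q\leq q''$, that is, $[p,q]\subseteq[p'',q'']$. To upgrade this to the \emph{proper} containment demanded by the definition of dominance, I would note that since $[p,q]\subset[p',q']$ is proper, $[p',q']$ contains some integer not in $[p,q]$; as $[p',q']\subseteq[p'',q'']$, that same integer lies in $[p'',q'']\setminus[p,q]$, witnessing $[p,q]\subset[p'',q'']$. Hence $(p,q)$ dominates $(p'',q'')$, which completes the argument.

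There is essentially no obstacle here: the lemma is a direct consequence of the transitivity of set containment on intervals, and the only point requiring a moment of care is preserving strictness when passing from $\subseteq$ to $\subset$. It is worth emphasizing only that this lemma does real work elsewhere, namely in Lemma~\ref{lemma:activehashingnumber1}, where it is invoked to argue that even when the key $(p,q_i)$ that would dominate a non-active key has itself been pruned from the skyline, some key currently in the skyline still dominates $(p,q_i)$ and therefore, by transitivity, also dominates the non-active key. Thus the statement, though trivial in isolation, is exactly the chaining step needed to guarantee that pruning in the skyline never loses a dominance relationship.
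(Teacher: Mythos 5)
Your proof is correct and follows essentially the same route as the paper, which simply observes that $[p,q]\subset[p',q']\subset[p'',q'']$ and invokes transitivity of proper containment. You merely spell out the coordinate inequalities and the preservation of strictness, which the paper leaves implicit.
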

\begin{proof}
This is because $[p,q]\subset[p',q']\subset[p'',q'']$. 
\end{proof}


\begin{lemma} \label{lemma:compactwindowsize_skyline}
The total number of compact windows generated by Algorithm~\ref{algo:skyline} is no more than $2|\aset(\textT)|$. 
\end{lemma}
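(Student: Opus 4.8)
The plan is to use an amortized counting argument that charges the compact windows produced by the algorithm against the keys that enter and leave the skyline. First I would observe that compact windows are generated only when the algorithm does not skip at Line~\ref{algo:skyline:5}, and that whenever the algorithm passes that line it proceeds all the way to Line~\ref{algo:skyline:15}, where it inserts the current key $(b,c)$ into $\sky$. Hence every visit that produces any windows also inserts exactly one key into the skyline, and by Lemma~\ref{lemma:activehashingnumber1} that inserted key must be active. So it suffices to bound the windows generated during a single such visit and then sum over all active keys that ever enter $\sky$.

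Next I would count the windows produced in a single visit. In the window-generating loop (Lines~\ref{algo:skyline:9}--\ref{algo:skyline:13}) the index $k$ ranges over $i \le k < j$, so at most $j-i$ compact windows are created (the guard at Line~\ref{algo:skyline:11} can only decrease this count). On the other hand, Line~\ref{algo:skyline:14} removes exactly the keys $\sky[i+1],\dots,\sky[j-1]$, i.e.\ $j-i-1$ keys. Therefore the number of windows produced during the visit is at most $(j-i-1)+1$, that is, exactly one more than the number of skyline keys this visit evicts.

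Then I would amortize. For each active key $s$ that is inserted into $\sky$, let $R(s)$ be the number of keys it evicts at Line~\ref{algo:skyline:14}; the previous step shows $s$ generates at most $R(s)+1$ windows. Summing over all such keys, the total number of compact windows is at most $\sum_s R(s) + \sum_s 1$. The second sum counts the active keys that ever enter $\sky$, which is at most $|\aset(\textT)|$. For the first sum, I would note that $\sum_s R(s)$ is the total number of eviction events over the entire run; since each inserted key can be evicted at most once, and the two guard keys $(0,0)$ and $(n{+}1,n{+}1)$ can never be dominated (no key's interval is contained in $[0,0]$ or in $[n{+}1,n{+}1]$) and hence are never evicted, the total number of evictions cannot exceed the number of active keys ever inserted, again at most $|\aset(\textT)|$. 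Combining the two bounds yields at most $2|\aset(\textT)|$.

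The main obstacle I anticipate is bookkeeping rather than mathematical depth: I must argue carefully that the per-visit window count is tied to exactly the keys removed in that \emph{same} visit, so that the ``$+1$'' terms aggregate into the number of insertions while the $R(s)$ terms aggregate into the number of deletions. I then must confirm the clean invariant that the skyline's total deletions never exceed its total insertions, which rests on the guard keys never leaving the skyline. Both facts follow directly from the structure of Lines~\ref{algo:skyline:9}--\ref{algo:skyline:15} together with the observation that no key can dominate a guard, so the argument reduces to this ``insertions~$\ge$~deletions'' accounting for the skyline.
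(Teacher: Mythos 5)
Your proof is correct and takes essentially the same approach as the paper's: an amortized charging argument in which each visit produces at most $j-i$ windows, one charged to the insertion of the (necessarily active, by Lemma~\ref{lemma:activehashingnumber1}) key $(b,c)$ and one to each of the $j-i-1$ evicted keys, with both total insertions and total evictions bounded by $|\aset(\textT)|$. Your extra bookkeeping (guard keys are never evicted, so evictions never exceed insertions) just makes explicit what the paper compresses into ``these removed keys will not be reinserted.''
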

\begin{proof}

Consider Lines~\ref{algo:skyline:9}-\ref{algo:skyline:13}. The algorithm produces at most $j-i$ compact windows. Note that inserting the key $(b, c)$ results in the removal of $j - i - 1$ other keys from the skyline, and these removed keys will not be reinserted. Thus, among the $j - i$ compact windows generated, one corresponds to the insertion of $(b, c)$ and one to the removal of each key. Thus one compact window is produced upon the insertion of a key and one compact window is produced upon the removal of a key. Moreover, by Lemma~\ref{lemma:activehashingnumber1}, only active keys can be added to the skyline. Hence, the total number of compact windows generated by Algorithm~\ref{algo:skyline} is at most $2 |\aset(\textT)|$.
\end{proof}
\def\agglf{\kw{agg}(\textT)}
\begin{lemma} \label{lemma:expectednumberofeffkeys}
In expectation, $|\aset(\textT)|=O(n+n\log f_\textT)$, where $n=|\textT|$.
\end{lemma}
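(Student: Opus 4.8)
The plan is to express $|\aset(\textT)|$ exactly as a weighted count over tokens and then take expectations via a records-in-a-random-permutation argument. First I would fix a token $t$ with frequency $f_t = \freq{t}{\textT}$ and let its occurrence positions be $q_1 < q_2 < \cdots < q_{f_t}$. Every key on $t$ has the form $(q_i, q_{i+x-1})$ for some $1 \le x \le f_t$ and $1 \le i \le f_t - x + 1$, where $x$ counts the occurrences of $t$ inside the subsequence, so the key's hash value is $\hf(t,x)$ and, by Definition~\ref{def:activekeys}, the key is active precisely when $\hf(t,x)$ is active. Hence for each fixed $x$ there are exactly $f_t - x + 1$ keys sharing the hash value $\hf(t,x)$, and
\[ |\aset(\textT)| = \sum_{t \in \textT} \sum_{x=1}^{f_t} (f_t - x + 1)\, \mathbf{1}\{\hf(t,x) \text{ is active}\}. \]

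Next I would compute $\Pr[\hf(t,x) \text{ is active}]$. By Definition~\ref{def:activehash}, $\hf(t,x)$ is active iff it is the smallest of $\hf(t,1), \ldots, \hf(t,x)$. Under the no-collision, fully-random hash model of the preliminaries (distinct values in uniformly random relative order), each of these first $x$ values is equally likely to be the minimum, so $\Pr[\hf(t,x)\text{ is active}] = 1/x$. Applying linearity of expectation term by term then yields
\[ \E[|\aset(\textT)|] = \sum_{t \in \textT} \sum_{x=1}^{f_t} \frac{f_t - x + 1}{x} = \sum_{t \in \textT} \left( (f_t + 1) H_{f_t} - f_t \right), \]
where $H_m = \sum_{x=1}^m 1/x$ is the $m$-th harmonic number.

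Finally I would bound the per-token contribution using $H_{f_t} = O(\log f_t)$, so that $(f_t+1)H_{f_t} - f_t = O(f_t \log f_t + f_t)$, and sum over the distinct tokens. Since $\sum_{t} f_t = n$ and $f_t \le \fmax$ for every token, we have $\sum_{t} f_t \log f_t \le (\log \fmax)\sum_{t} f_t = n \log \fmax$, giving $\E[|\aset(\textT)|] = O(n + n \log \fmax)$, as claimed. (A useful sanity check is $f_t=2$: the two length-one keys are always active and the single length-two key is active with probability $1/2$, so the expected count is $2.5 = (2{+}1)H_2 - 2$.)

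The main obstacle is the probability step $\Pr[\hf(t,x)\text{ active}] = 1/x$: it requires that the sequence $\hf(t,1), \ldots, \hf(t,f_t)$ behave like distinct values in uniformly random order, so that the expected number of prefix-minima ("records") for token $t$ is $H_{f_t}$. This is the standard records-in-a-random-permutation fact, but it leans on the idealized fully-random hash assumption rather than mere pairwise independence of the universal family $\mathcal{H}$; I would therefore state this modeling assumption explicitly before invoking it. The remaining ingredients — the exact count $f_t - x + 1$ of keys per occurrence-gap, linearity of expectation, and the harmonic-sum estimate — are routine once this fact is in hand.
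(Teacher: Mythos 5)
Your proof is correct and follows essentially the same route as the paper's: both decompose the active keys by token and frequency, use the records argument that $\hf(t,x)$ is a prefix minimum with probability $1/x$, multiply by the number of keys sharing that hash value, and conclude by linearity of expectation and the harmonic-sum bound. The only differences are cosmetic refinements — you track the exact per-hash-value key count $f_t - x + 1$ and the closed form $(f_t+1)H_{f_t}-f_t$ where the paper simply bounds the count by $O(f_t)$, and you explicitly flag the fully-random-hash assumption that the paper invokes implicitly.
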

\begin{proof}
For any $t\in\textT$ and $i\in[\freq{t}{\textT}]$, the probability that $\hf(t,i)$ is active, i.e., it is smaller than all the hash values $\hf(t,j)$ for all $j < i$ is $\frac{1}{i}$. Therefore, the total number of active hash values among $\hf(t,1),\cdots,\hf(t,\freq{t}{\textT})$ in expectation is $\sum_{i\in[\freq{t}{\textT}]}\frac{1}{i} = O(1 + \ln(f(t,\textT)))$, seeing that even when $f(t,\textT) = 1$, the summation is still $1$. Moreover, there are $O(\freq{t}{\textT})$ keys with hash value $\hf(t,i)$ for any $i\in[\freq{t}{\textT}]$. Therefore, in expectation, the total number of active keys in $\textT$, \ie $|\aset(\textT)|$, is $O(\sum_{t\in \textT} (f(t,\textT) (1 +\ln(f(t,\textT)))) = O(n + \sum_{t\in \textT} (f(t,\textT) \ln(f(t,\textT)))=O(n+n\log f_\textT)$.
\end{proof}

Combining Lemmas~\ref{lemma:compactwindowsize_skyline} and~\ref{lemma:expectednumberofeffkeys}, in expectation, the total number of compact windows in the partition $\pt(\textT,\hf)$ generated by Algorithm~\ref{algo:skyline} is $O(|\aset(\textT)|)=O(n+n\log f_\textT)$, which proves Theorem~\ref{theorem:number}.


To incorporate this optimization in our algorithm, we replace procedure \textsc{GenerateKeys} with a similar procedure \textsc{GenerateActiveKeys} (Algorithm~\ref{algo:genactivekeys}). It  generates active keys by enumerating, for each token $t$, every possible frequency $x$ of $t$: the algorithm maintains the smallest hash value encountered in $minkey$ and only when the current hash value $\hf(t,x)<minkey$, it produces all the keys whose hash values are $\hf(t,x)$ and updates $minkey$ as $\hf(t,x)$. 

An additional optimization is to sort the active hash values prior to generating the corresponding active keys, rather than generating and subsequently sorting all active keys. This approach is justified by two key observations: (1) active keys associated with the same hash value can be ordered arbitrarily, and (2) the number of active hash values does not exceed the total number of active keys.


\begin{theorem}\label{the:complexity}
Algorithm~\ref{algo:skyline} with \textnormal{\textsc{GenerateActiveKeys}} in Line~\ref{algo:skyline:1} has time complexity $O(|\aset(\textT)|\log n)$ and space complexity $O(|\aset(\textT)|)$. In expectation, the time complexity is $O(n\log n + n\log n\log \fmax),$
and space complexity is $O(n\log \fmax + n)$.
\end{theorem}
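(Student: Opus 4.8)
The plan is to establish the two concrete complexity bounds first, and then derive the expected bounds by invoking the results already proved about active keys. The theorem has two halves: a \emph{deterministic} statement in terms of $|\aset(\textT)|$, and an \emph{expected} statement that substitutes the bound $|\aset(\textT)| = O(n + n\log\fmax)$ from Lemma~\ref{lemma:expectednumberofeffkeys}. I would prove the deterministic half directly by accounting for the cost of each component of Algorithm~\ref{algo:skyline} once \textsc{GenerateActiveKeys} replaces \textsc{GenerateKeys}, and then the expected half follows by linearity of expectation together with the fact (stated just before the theorem) that the active hash values can be sorted in time proportional to their count.

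First I would bound the cost of key generation. \textsc{GenerateActiveKeys} enumerates, for each token $t$ with frequency $f_t \doteq \freq{t}{\textT}$, the frequencies $x \in [1, f_t]$, and emits keys only when $\hf(t,x)$ improves on the running minimum $minkey$. Building the position map (Lines~\ref{algo:genkey:1}--\ref{algo:genkey:3} of Algorithm~\ref{algo:genkey}) costs $O(n)$. The enumeration over all $(t,x)$ pairs visits $\sum_{t\in\textT} f_t = n$ frequency values, so computing every $\hf(t,x)$ and updating $minkey$ is $O(n)$; the total number of emitted keys is exactly $|\aset(\textT)|$, each produced in $O(1)$ amortized time. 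Hence key generation runs in $O(n + |\aset(\textT)|) = O(|\aset(\textT)|)$ time and $O(|\aset(\textT)|)$ space, using that $|\aset(\textT)| \geq n$ since every token contributes at least its singleton (active) key at $x=1$.

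Next I would bound the main loop of Algorithm~\ref{algo:skyline}. Sorting the active keys by hash value costs $O(|\aset(\textT)|\log|\aset(\textT)|) = O(|\aset(\textT)|\log n)$, using the optimization that one sorts the $O(|\aset(\textT)|)$ active hash values and breaks ties among equal-hash keys arbitrarily; since $|\aset(\textT)| = O(n^2)$, $\log|\aset(\textT)| = O(\log n)$. The loop visits each of the $|\aset(\textT)|$ active keys once. Per key it performs a constant number of binary searches on the balanced BST holding the skyline (Lines~\ref{algo:skyline:4},~\ref{algo:skyline:6},~\ref{algo:skyline:7}), each $O(\log n)$ since the skyline never exceeds $n$ keys. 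The remaining work is the inner loop (Lines~\ref{algo:skyline:9}--\ref{algo:skyline:13}) plus the deletions (Line~\ref{algo:skyline:14}): these are charged against skyline insertions and removals, and by the argument in Lemma~\ref{lemma:compactwindowsize_skyline} each active key is inserted at most once and removed at most once, so the total charged work — including the $O(\log n)$ BST delete per removed key — telescopes to $O(|\aset(\textT)|\log n)$. Summing, the algorithm runs in $O(|\aset(\textT)|\log n)$ time and, since the skyline and the output partition are each $O(|\aset(\textT)|)$ in size (the latter by Lemma~\ref{lemma:compactwindowsize_skyline}), $O(|\aset(\textT)|)$ space. Substituting $\E[|\aset(\textT)|] = O(n + n\log\fmax)$ from Lemma~\ref{lemma:expectednumberofeffkeys} and $\log|\aset(\textT)| = O(\log n)$ deterministically gives expected time $O((n + n\log\fmax)\log n) = O(n\log n + n\log n\log\fmax)$ and expected space $O(n + n\log\fmax)$.

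The main obstacle is the amortized accounting for the inner loop and skyline deletions: the per-key work is \emph{not} $O(\log n)$ in the worst case, because a single key may dominate and remove many skyline keys at once. The clean way to handle this is a potential/charging argument — assign $O(\log n)$ credit to each active key when it is created, pay for its one insertion and its eventual deletion from that credit, and observe that each compact window emitted in the inner loop corresponds bijectively (via Lemma~\ref{lemma:compactwindowsize_skyline}) to either one insertion or one removal, so the total inner-loop iterations across the whole run is $O(|\aset(\textT)|)$ rather than $O(|\aset(\textT)|)$ per key. I would also take care to note that the ``expectation'' is over the random hash function only, so the deterministic factor $\log n$ (depending solely on $|\textT|=n$) passes outside the expectation cleanly, and no independence between $|\aset(\textT)|$ and the sort cost need be argued.
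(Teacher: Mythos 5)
Your proposal is correct and follows essentially the same route as the paper's proof: bound \textsc{GenerateActiveKeys} by $O(n+|\aset(\textT)|)=O(|\aset(\textT)|)$, charge each active key a constant number of $O(\log n)$ binary searches plus one insertion and at most one removal from the skyline (the same amortization the paper states as ``each active key costs up to three binary searches and two updates''), bound the output via Lemma~\ref{lemma:compactwindowsize_skyline}, and then substitute $\E[|\aset(\textT)|]=O(n+n\log\fmax)$ from Lemma~\ref{lemma:expectednumberofeffkeys}, with the deterministic $\log n$ factor passing outside the expectation. The only cosmetic difference is the sorting step — the paper sorts the at most $n$ active hash values for $O(n\log n)$, while you bound the sort by $O(|\aset(\textT)|\log|\aset(\textT)|)=O(|\aset(\textT)|\log n)$ — and both yield the claimed bounds.
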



\begin{proof}
Because for any $i \in [n]$, $(i,i)$ is an active key, thus $|\aset(\textT)|\geq n$. The size of the skyline $\sky$ is $O(n)$ as each $b \in [n]$ will have at most one key in $\sky$. The binary search and each update to the skyline take $O(\log n)$. \textnormal{\textsc{GenerateActiveKeys}} only produces active keys $\aset(\textT)$. Each active key costs up to three binary searches of the skyline and two updates and corresponds to up to two compact windows. In addition, \textnormal{\textsc{GenerateActiveKeys}} takes $O(|\aset(\textT)|+n)=O(|\aset(\textT)|)$ time; sorting the active hash values takes  $O(n\log n)$ time as there are at most $n$ active hash values. Thus the time complexity is $O(|\aset(\textT)|\log n)$. The space complexity is $O(|\aset(\textT)|)$. By Lemma~\ref{lemma:compactwindowsize_skyline}, in expectation, $|\aset(\textT)|=O(n + n\log \fmax)$. Therefore, the time complexity is $O(n\log n+n\log n\log \fmax)$; the space complexity is $ O(n\log \fmax + n)$ in expectation. 
\end{proof}

Lemma~\ref{lemma:disjointcomplete} in Appendix~\ref{sec:proofs} shall prove the correctness of this optimization. Note that there are $O(n\fmax)$ keys in $\key(\textT)$. Thus, the time and space complexities of vanilla Algorithm~\ref{algo:skyline} (without the active key optimization) are respectively $O(n\fmax\log n)$ and $O(n\fmax)$. 


\newcommand{\wt}[2]{\ensuremath{w(#1,#2)}\xspace}



\section{Generalization to Weighted Jaccard Similarity}\label{sec:six}


Consider a text $\textT$ where each distinct token $t$ is associated with a weight. The weight is determined by a weight function $\wf(t,\textT)$ where $\wt{t}{\textT} > 0$ for $t \in \textT$ and $\wt{t}{\textT} = 0$ for $t \notin \textT$. The weighted Jaccard similarity of two texts $\textT$ and $\textS$ is defined as 
$$\Jaccard^{\wf}_{\textT, \textS} = \frac{\sum_{t\in \textT\cup\textS} \min(\wt{t}{\textT}, \wt{t}{\textS})}{\sum_{t\in \textT\cup\textS} \max(\wt{t}{\textT}, \wt{t}{\textS})}.$$

To estimate weighted jaccard similarity, one can use \textbf{improved consistent weighted sampling}~\cite{ioffe2010improved}. In a nutshell, it designs a hash family $\mathcal{H}$. Given a text $\textT$, for each distinct token $t$ in $\textT$, a hash function $h\in\mathcal{H}$ takes the token $t$ and its weight $\wt{t}{\textT}$ as input and outputs a hash value, denoted as $\hf(t, \wt{t}{\textT})$. With hash function $h$, we define the \textit{weighted min-hash} of a text $\textT$ as the smallest hash value among all distinct tokens in $\textT$, denoted as $\hf(\textT, \wf)$. Formally, 
\begin{equation}\label{eq:wmin}
\hf(\data, \wf) = \min\{\hf(t, \wt{t}{\data}) \mid t \in \data\}.   
\end{equation}
\noindent The hash family $\mathcal{H}$ designed in improved consistent weighted sampling scheme guarantees that for any two texts $\textT$ and $\textS$ and weight function $\wf$, $\mathbf{Pr}[\hf(\textT, \wf) = \hf(\textS, \wf)] = \func^w_{\textT,\textS}.$


Thus the weighted Jaccard similarity of two texts $\textT$ and $\textS$ can be accurately and unbiasedly estimated by $k$ independent  hash functions $\hf_1, \cdots, \hf_k$ randomly drawn from the hash family $\mathcal{H}$ as
 \begin{equation}\label{eq:wjaces}
\hat{\Jaccard}^{\wf}_{\textT, \textS}=\frac{1}{k}\sum_{i=1}^{k}\mathbf{1}\{\hf_i(\textT,\wf)=\hf_i(\textS,\wf)\}.
\end{equation}


\begin{figure}[!t]\vspace{-1em}
\begin{algorithm}[H] 
    \caption{Improved Consistent Weighted Sampling~\cite{ioffe2010improved}}
    \label{algo:icws}
    \kw{Class~ICWS}: \\
    \hspace{1em}\kw{def} \textbf{init}(self, $\Sigma$: the vocabulary of tokens): \\
    \hspace{2em}\textbf{foreach}~~token$~t\in\Sigma$~~\textbf{do}\\  
    \hspace{3em} self.$r_t\sim \textnormal{Gamma}(2,1)$\\
    \hspace{3em} self.$c_t\sim \textnormal{Gamma}(2,1)$\\
    \hspace{3em} self.$\beta_t\sim \textnormal{Uniform}(0,1)$\\
    \vspace{0.1cm}
    \hspace{1em}\kw{def} \textit{h}($t$: a token, $weight$: a positive real value):\\
    \hspace{2em}$y=\textnormal{exp}(r_t(\lfloor\frac{\ln{weight}}{r_t}+\beta_t\rfloor-\beta_t))$\;
    \hspace{2em}$a=\frac{1}{y}\cdot\frac{c_t}{\textnormal{exp}(r_t)}$\;
    \hspace{2em}\KwRet{$\kw{HashValue}(t,y,a)$\;}
    \vspace{0.1cm}
    \hrule
    \vspace{0.1cm}
    \kw{Class~HashValue}:\\
    \hspace{1em}\kw{def} \textbf{init}(self, $t, y, a$): 
    self.$t=t$,~~self.$y=y$,~~self.$a=a$\;
    
    \hspace{1em}\kw{def} operator$=$($v_1, v_2$): 
    \KwRet{$v_1.t = v_2.t$ \textbf{\textnormal{and}} $v_1.y = v_2.y$\;}
    \hspace{1em}\kw{def} operator$<$($v_1, v_2$): 
    \KwRet{$v_1.a < v_2.a$\;}


\end{algorithm}\vspace{-1.25em}
\end{figure}

\stitle{Implementation Details.} Algorithm~\ref{algo:icws} implements the improved consistent weighted sampling scheme~\cite{ioffe2010improved}. Drawing a random hash function $\hf\in\mathcal{H}$ is the same as instantiating a new object $o$ of \kw{ICWS} class (Lines~1-10). The constructor (Lines~2-6) samples three independent random variables for every token $t$ in the vocabulary $\Sigma$ from the \kw{ICWS} specified distributions. The hash computation (Lines 7-10) $o.\hf(t,\wt{t}{\textT})$ of $o$ returns the hash value based on token $t$ and the weight of $t$ in $\textT$. Note that an object is initialized with a random seed; the set of possible objects form the hash family. 


A hash value $(t,y,a)$ is an instance of the \kw{HashValue}, where the comparators $=$ and $<$ are defined. For simplicity, we assume there are no hash collisions: for any two hash values $(t_1,y_1,a_1)$ and $(t_2,y_2,a_2)$, if $t_1\neq t_2$, then $a_1\neq a_2$. 
Comparator $<$ provides hash values a total order, \ie for any two hash values $v_1=(t_1,y_1,a_1)$ and $v_2=(t_2,y_2,a_2)$, we say $v_1<v_2$ if and only if $a_1 < a_2$. 
Besides, for a token $t$, $r_t, c_t$ and $\beta_t$ depend solely on $t$. Thus the weight determines both $y$ and $a$, \ie there is a one-to-one mapping between $y$ and $a$ in a hash value $(t,y,a)$. Thus, $v_1=v_2$ iff. $t_1=t_2$ and $y_1=y_2$.




\stitle{Consistency and Uniformity.} The weighted min-hash under improved consistent weight sampling has the nice properties below: 
\begin{itemize}[leftmargin=*]
    \item \textit{Uniformity}: Denote by $(t,y,a)$ the weighted min-hash of a text $\textT$, then $(t, y)$ is distributed uniformly over $\cup_{t'\in\Sigma} \{t'\} \times [0, \wt{t'}{\textT}]$.
    \item \textit{Consistency}: Let $v=(t,y,a)$ be the weighted min-hash of a text $\textT$. Given a text $\textS$ with $\wt{t'}{\textS}\leq\wt{t'}{\textT}$ for all $t'\in\Sigma$. If $y\leq\wt{t}{\textS}$, then $v$ must also be the weighted min-hash of $\textS$.   
\end{itemize}

\stitle{Assumption on the Weight Function (\kw{AoW}):} \emph{Given a text $\textT$, we assume that for any token $t$ in $\textT$, its weight is a monotonically increasing function of its frequency $\freq{t}{\textT}$ and is independent of other properties of $\textT$.} With \kw{AoW}, we can simplify the notation of the weight of a token $t$ in $\textT$ as $\wt{t}{x}$ where $x=\freq{t}{\textT}$. We can also simplify the hash function as $\hf_{\wf}(t,x)\doteq\hf(t,\wt{t}{x})$  where $x$ is a positive integer. The \kw{AoW}, \ie for any $1\leq x\leq x'$, $\wt{t}{x}\leq\wt{t}{x'}$, is reasonable especailly when it comes to TF-IDF weights. 


\stitle{Term Frequency–Inverse Document Frequency (TF-IDF)}~\cite{singhal2001modern}. TF-IDF is a widely used weighting scheme that captures both the importance of a token within a text and its rarity across a corpus $\dset$. The TF-IDF weight function is defined as $\wf(t, \textT) = \kw{tf}\cdot\kw{idf}$ where $\kw{tf}$ is the TF weight and $\kw{idf}$ is the IDF weight. A few example TF and IDF weights we support (but not limited to) are listed in Table~\ref{tab:tf_and_idf_weights_we_support}. 

\begin{table}[!t]
\centering  \small
\begin{tabular}{|rl|rl|}\hline
\multicolumn{2}{|c|}{\kw{tf} weight functions} & \multicolumn{2}{c|}{\kw{idf} weight functions} \\\hline
binary: & $\textbf{1}\{\freq{t}{\textT}>0\}$ & unary: & 1 \\
raw count: & $\freq{t}{\textT}$ & standard: & $\log\frac{N}{N_t}$\\
logarithmic: & $\log(\freq{t}{\textT} + 1) $ & smooth: & $\log(\frac{N+N_t}{N_t})+1$\\
squared: &$\freq{t}{\textT}^2$ & probabilistic: &$\log\frac{N-N_t}{N_t}$ \\\hline
\end{tabular}
\caption{A snippet of TF and IDF weights we support. \textnormal{Here $N=|\dset|$ is the number of texts in the corpus $\dset$ and $N_t=|\{\textS\in\dset\mid t\in\textS\}|$ is the number of texts  in the corpus containing $t$. $N$ is a global constant while $N_t$ is a constant local to the token $t$.}}
\label{tab:tf_and_idf_weights_we_support}
\vspace{-1.5em}
\end{table}

\stitle{Hash Value Set.} We omit the weight function $\wf$ in the notation of a hash value $\hf_{\wf}(t,x)$ when the context is clear and abbreviate the hash value as $\hf(t,x)$. We can define the hash value set of a subsequence in the same way as Definition~\ref{def:hset}.

\begin{definition}\label{def:whset}
Given a random hash function $\hf$ from $\mathcal{H}$ and a weight function $\wf$, the hash value set of a subsequence $\textT[i,j]$ is $$\hset(\textT[i,j],\hf)=\{\hf(t,x)\mid t\in\textT[i,j], 1\leq x\leq \freq{t}{\textT[i,j]}\}.$$ 
\end{definition}

Next we show that, under the assumption of $\kw{AoW}$, the hash values under consistent weighted sampling have the property below.

\begin{lemma}\label{lemma:whset}
Given a random hash function $\hf$ from $\mathcal{H}$ and a weight function $\wf$ under the assumption of $\kw{AoW}$, we have $\hf(t,1)\geq \hf(t,2)\geq\cdots\geq\hf(t,x)$ for any token $t$ in any text $\textT$ where $x=\freq{t}{\textT}$.
\end{lemma}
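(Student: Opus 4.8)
The plan is to reduce the entire chain of inequalities to a single monotonicity statement about the $a$-component of a \kw{HashValue}. Recall from Algorithm~\ref{algo:icws} that the total order on hash values is defined \emph{solely} through their $a$-fields: $\hf(t,x)\geq\hf(t,x')$ holds precisely when $a(t,x)\geq a(t,x')$, where I write $a(t,x)$ for the $a$-field of $\hf(t,x)=\hf(t,\wt{t}{x})$. Hence the lemma is equivalent to showing that $a(t,\cdot)$ is non-increasing, i.e. $a(t,1)\geq a(t,2)\geq\cdots\geq a(t,x)$. Since \kw{AoW} guarantees $\wt{t}{x}\leq\wt{t}{x'}$ whenever $x\leq x'$, it in fact suffices to prove the cleaner claim that, for a fixed token $t$, the returned value $a$ is a non-increasing function of the input weight $w$.

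First I would fix $t$ and freeze its sampled parameters $r_t,c_t\sim\textnormal{Gamma}(2,1)$ and $\beta_t\sim\textnormal{Uniform}(0,1)$, noting $r_t,c_t>0$ almost surely. Reading the formulas off Algorithm~\ref{algo:icws}, $a=c_t/(y\exp(r_t))$ with $y=\exp(r_t(\lfloor \ln w/r_t+\beta_t\rfloor-\beta_t))>0$. Because $c_t$ and $\exp(r_t)$ are positive constants and $y>0$, the map $y\mapsto a$ is strictly decreasing; therefore $a$ is non-increasing in $w$ exactly when $y$ is non-decreasing in $w$, and the problem collapses to the monotonicity of $y$.

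The core step is to argue that $y$ is non-decreasing in $w$. I would chain elementary monotonicities: $\ln w$ is increasing in $w$; dividing by $r_t>0$ and adding the constant $\beta_t$ keeps $\ln w/r_t+\beta_t$ increasing; the floor is (weakly) non-decreasing, so $\lfloor \ln w/r_t+\beta_t\rfloor$ is non-decreasing in $w$; subtracting $\beta_t$ and multiplying by $r_t>0$ preserves the direction; and $\exp(\cdot)$ is increasing. Composing these, $y$ is non-decreasing in $w$, hence $a$ is non-increasing in $w$, hence $a(t,x)$ is non-increasing in $x$, which is exactly the desired $\hf(t,1)\geq\hf(t,2)\geq\cdots\geq\hf(t,x)$.

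The only step that requires care is the floor. Since $\lfloor\cdot\rfloor$ is merely weakly monotone and is constant on each unit interval, the inequalities it produces are non-strict --- which is precisely why the lemma must be stated with $\geq$ rather than $>$ (indeed $y$, and thus $a$, is a step function of $w$). I would also flag two sign conventions that are easy to get backwards: that $r_t>0$ is what allows multiplication by $r_t$ to preserve, rather than reverse, the inequality, and that the relation $a\propto 1/y$ is what flips ``$y$ non-decreasing'' into ``$a$ non-increasing,'' and therefore into the non-increasing ordering of the hash values claimed by the lemma.
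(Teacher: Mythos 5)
Your proof is correct, but it takes a genuinely different route from the paper's. You argue directly from the implementation: fixing $r_t, c_t, \beta_t$ and chaining elementary monotonicities through $\ln w$, the floor, $\exp$, and the reciprocal $a \propto 1/y$, you show the $a$-field is non-increasing in the weight, which (since the comparator on \kw{HashValue} objects is defined purely via $a$, and for a fixed token $y$ and $a$ determine each other) settles the lemma. The paper instead argues semantically: it constructs auxiliary texts $\textT_1,\dots,\textT_x$, where $\textT_i$ is $i$ copies of $t$, notes that $\hf(t,i)$ is the weighted min-hash of $\textT_i$, and then invokes the \emph{Uniformity} property (so $y \le \wt{t}{i}$, $y' \le \wt{t}{j}$) and the \emph{Consistency} property (if $y' \le \wt{t}{i}$ then $\hf(t,i)=\hf(t,j)$; otherwise $y < y'$ forces $a > a'$) to conclude $\hf(t,i) \ge \hf(t,j)$. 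Your version is more elementary and self-contained, and it makes the non-strictness of the inequalities transparent (the floor is constant on unit intervals), whereas the paper's never opens the formula for $y$ beyond the relation $a \propto 1/y$. What the paper's abstraction buys is reuse and portability: the uniformity-of-$y$ argument is exactly what the proof of Lemma~\ref{lemma:weightfunctions} builds on to compute the \emph{probability} that $\hf(t,i)$ is active ($y_i$ uniform over $[0,\wt{t}{i}]$ landing in $(\wt{t}{i-1},\wt{t}{i}]$), a quantitative statement your purely order-theoretic computation does not yield, and the same argument would survive a change of the underlying CWS implementation so long as uniformity and consistency hold. Both proofs rely on AoW only through $\wt{t}{1}\le\cdots\le\wt{t}{x}$ and on positivity of the weights, so neither has a gap.
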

\begin{proof}
Under the assumption of $\kw{AoW}$, we have $\wf(t,1)\leq \wf(t,2)\leq\cdots\leq\wf(t,x)$. Now consider texts $\textT_1, \textT_2, \cdots, \textT_x$ where $\textT_i$ consists of exactly $i$ copies of token $t$. Then $\textT_i$ has only one hash value $\hf(t,i)$, which must be its weighted min-hash by Equation~\ref{eq:wmin}. Consider any $j>i$. Let $\hf(t,i)=(t,y,a)$ and $\hf(t,j)=(t,y',a')$, which are the weighted min-hash of $\textT_i$ and $\textT_j$. By uniformity, $y$ distributes uniformly over $[0,\wt{t}{i}]$ and so does $y'$ over $[0,\wt{t}{j}]$. By consistency, because $\wt{t}{i}\leq\wt{t}{j}$, if $y'\leq\wt{t}{i}$, we have $\hf(t,i)=\hf(t,j)$. Otherwise, \ie $y'>\wt{t}{i}$, because $y\leq \wt{t}{i}$, we have $y<y'$. Under the same $t$, i.e., $r_t, c_t, \beta_t$ are fixed, we have $a \propto \frac{1}{y}$ (Line~9, Algorithm~\ref{algo:icws}). Thus we have $a>a'$, which indicates $\hf(t,i)>\hf(t,j)$. In both cases, we have $\hf(t,i)\geq \hf(t,i)$. Thus $\hf(t,1)\geq \hf(t,2)\geq\cdots\geq\hf(t,x)$ for any $t$ and $\textT$.
\end{proof}

Based on Lemma~\ref{lemma:whset} and Equation~\ref{eq:wmin}, we  have 
\begin{equation}\label{eq:whfmin}
\hf(\textT[i,j])=\min \hset(\textT[i,j],\hf).    
\end{equation}



With the above definitions, all the definitions, lemmas, theorems, algorithms, and conclusions naturally extend to the weighted min-hash. For example, a hash value $\hf(t,x)$ is active if and only if it is strictly greater than all of $\hf(t,1), \cdots, \hf(t,x-1)$.


\begin{lemma} \label{lemma:weightfunctions}
In expectation, $|\aset(\textT)|$ is $O(n)$ for binary TF, $O(n + n\log\fmax)$ for raw count TF, $O(n + \log\log\fmax)$ for logarithmic TF, and $O(n + n\log\fmax)$ for squared TF, each combined with any IDF in Table~\ref{tab:tf_and_idf_weights_we_support}.
\end{lemma}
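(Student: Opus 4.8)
The plan is to reduce the count of active keys for each token to the number of \emph{distinct} hash values that the \kw{ICWS} scheme assigns to that token, and then to control that number through the step-function structure of \kw{ICWS}. Fix a token $t$ with frequency $f=\freq{t}{\textT}$, and write $\aset_t$ for the active keys whose token is $t$, so that $|\aset(\textT)|=\sum_{t\in\textT}|\aset_t|$. A key of $t$ spanning exactly $x$ occurrences has hash value $\hf(t,x)$, and there are exactly $f-x+1\le f$ such keys; the frequency $x$ contributes active keys only when $\hf(t,x)$ is active. Since an active frequency is one whose hash value strictly beats all earlier ones, two distinct active frequencies must carry distinct hash values (within a block of equal hash values only the first can be a record). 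Hence the number of active frequencies of $t$ is at most the number $d_t$ of distinct values taken by $\hf(t,\cdot)$ on $\{1,\dots,f\}$, and therefore $|\aset_t|\le f\cdot d_t$. It remains to bound $\E[d_t]$.

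Next I would make $d_t$ explicit through Algorithm~\ref{algo:icws}. For a fixed $t$ the variables $r_t,c_t,\beta_t$ are fixed, and $y=\exp\!\big(r_t(\lfloor \tfrac{\ln \wt{t}{x}}{r_t}+\beta_t\rfloor-\beta_t)\big)$ depends on $x$ only through the integer $\lfloor \tfrac{\ln \wt{t}{x}}{r_t}+\beta_t\rfloor$. Because two hash values of the same token coincide iff their $y$-components coincide, $d_t$ equals the number of distinct such integers as $x$ ranges over $\{1,\dots,f\}$. Under \kw{AoW} the weights $\wt{t}{1}\le\cdots\le\wt{t}{f}$ are monotone, so the floor is monotone in $x$ and $d_t$ is exactly its span:
\[
d_t\;\le\;\frac{\ln \wt{t}{f}-\ln \wt{t}{1}}{r_t}+1\;=\;\frac{\ln\big(\wt{t}{f}/\wt{t}{1}\big)}{r_t}+1 .
\]
A pleasant consequence is that, writing $\wf=\kw{tf}\cdot\kw{idf}$ where $\kw{idf}$ is a positive constant local to $t$, the idf factor cancels in the ratio $\wt{t}{f}/\wt{t}{1}$; thus $d_t$ is governed purely by the tf weight, which is exactly why every idf in Table~\ref{tab:tf_and_idf_weights_we_support} yields the same bound.

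Then I would take expectations. Since $r_t\sim\Gamma(2,1)$ has $\E[1/r_t]=\int_0^\infty \tfrac1r\,r e^{-r}\,dr=1$, and the tf-ratio is deterministic given $f$, we get $\E[d_t]=O\!\big(1+\ln(\wt{t}{f}/\wt{t}{1})\big)$. Substituting each tf function supplies the ratio: binary tf gives ratio $1$, so $\E[d_t]=O(1)$; raw count gives ratio $f$, so $\E[d_t]=O(1+\log f)$; logarithmic tf gives ratio $\log(f+1)/\log 2=O(\log f)$, so $\E[d_t]=O(1+\log\log f)$; squared tf gives ratio $f^2$, so again $\E[d_t]=O(1+\log f)$. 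Summing $|\aset_t|\le f\cdot d_t$ over all tokens and using $\sum_{t\in\textT}\freq{t}{\textT}=n$ together with $\log\freq{t}{\textT}\le\log\fmax$ collapses these to $O(n)$, $O(n+n\log\fmax)$, $O(n+n\log\log\fmax)$, and $O(n+n\log\fmax)$ respectively, matching the statement (the base $O(n)$ arising from the factor $f$ even when $\E[d_t]=O(1)$).

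The step I expect to be the crux is the second one: reading off from the \kw{ICWS} construction that $\hf(t,\cdot)$ is a monotone step function of $\ln\wt{t}{x}$ whose number of steps is controlled by $1/r_t$, and pairing this with the finiteness of $\E[1/r_t]$ under $\Gamma(2,1)$ — this is precisely where the shape parameter $2$ is essential, since an exponential $\Gamma(1,1)$ would make $\E[1/r_t]$ diverge and break the bound. Everything else (the reduction to $d_t$, the cancellation of idf, and the per-function arithmetic) is routine once this structural fact and the single moment computation are in hand.
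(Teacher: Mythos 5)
Your proof is correct, but it takes a genuinely different route from the paper's. The paper never opens up the ICWS construction: it works purely with the abstract uniformity and consistency properties, observing (via Lemma~\ref{lemma:whset}) that $\hf(t,i)$ is active iff $\hf(t,i)<\hf(t,i-1)$, which happens iff the uniform coordinate $y_i$ lands in $(\wt{t}{i-1},\wt{t}{i}]$; this gives the \emph{exact} activation probability $\frac{\wt{t}{i}-\wt{t}{i-1}}{\wt{t}{i}}$, after which the lemma follows by multiplying by the $O(\freq{t}{\textT})$ keys per frequency and evaluating the resulting sums (harmonic for raw count, roughly $\sum_i \frac{1}{i\log i}$ for logarithmic, and so on). You instead bound the number of active frequencies of a token $t$ with $f=\freq{t}{\textT}$ by the number $d_t$ of distinct hash values, read off from Algorithm~\ref{algo:icws} that $d_t$ is the number of distinct quantization integers $\lfloor \ln\wt{t}{x}/r_t+\beta_t\rfloor$, hence at most $\ln\left(\wt{t}{f}/\wt{t}{1}\right)/r_t+O(1)$, and finish with $\E[1/r_t]=1$ for the $\textnormal{Gamma}(2,1)$ variable $r_t$. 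The two arguments are quantitatively consistent, since the paper's sum telescopes into your bound via $\frac{\wt{t}{i}-\wt{t}{i-1}}{\wt{t}{i}}\le\ln\frac{\wt{t}{i}}{\wt{t}{i-1}}$, and both yield the stated bounds for the four TF functions. What each buys: the paper's argument is implementation-agnostic (it applies to any CWS scheme satisfying uniformity and consistency, with no moment condition on the sampler) and is exactly tight per frequency, which matters for weight functions with large jumps — for $\wt{t}{1}=1$, $\wt{t}{2}=M$ the paper gives $O(1)$ expected active frequencies while your log-ratio bound is $\Theta(\log M)$, loose though still a valid upper bound and harmless for the TF functions in the lemma. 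Your argument, by contrast, exposes structure the paper leaves implicit: the IDF factor visibly cancels in the ratio $\wt{t}{f}/\wt{t}{1}$ (explaining the ``any IDF'' clause), and the role of the Gamma shape parameter $2$, namely finiteness of $\E[1/r_t]$, becomes explicit. Two cosmetic slips that do not affect the asymptotics: the number of distinct values of the monotone floor sequence is \emph{at most}, not exactly, its span plus one, and the correct additive constant in your bound on $d_t$ is $2$ rather than $1$.
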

\begin{proof}
Given a text with $n$ tokens and a weight function $\wf$. Consider a token $t\in\textT$. Let $x=\freq{t}{\textT}$. Clearly, $\hf(t,1)$ must be an active hash value. For any $i\in[2,x]$, based on the proof of Lemma~\ref{lemma:whset}, $\hf(t,i)$ is active if and only if $\hf(t,i)<\hf(t,i-1)$. Let $\hf(t,i)=(t,y_i,a_i)$ and $\hf(t,i-1)=(t,y_{i-1},a_{i-1})$. $\hf(t,i)<\hf(t,i-1)$ if and only if $y_i\in (\wt{t}{i-1},\wt{t}{i}]$. By uniformity, $y_i$ distributed uniformly over $[0,\wt{t}{i}]$. Thus the probability that $\hf(t,i)$ is active is $\frac{\wt{t}{i}-\wt{t}{i-1}}{\wt{t}{i}}$. There are $x-i+1=O(x)$ keys corresponding to the hash value $\hf(t,i)$ (imagine a sliding window of size $i$ over a string of length $x$). Thus in expectation, the number of active keys in $\textT$ is 
$
\mathbb{E}[\aset(\textT)]=O(\sum_{t\in\textT}\freq{t}{\textT}\sum_{i=1}^{\freq{t}{\textT}}\frac{\wt{t}{i}-\wt{t}{i-1}}{\wt{t}{i}})
.$ Note that $\wt{t}{0}=0$.
Clearly, for $x\geq 1$, when $\wt{t}{x}=x\cdot\textsf{IDF}$, we have $\mathbb{E}[|\aset(\textT)|]=O(n+n\log\fmax)$. When $\wt{t}{x}=1\cdot\textsf{IDF}$,  $\mathbb{E}[|\aset(\textT)|]=O(n)$. For $\wt{t}{x}=\log(x+1)\cdot\textsf{IDF}$,  $\mathbb{E}[|\aset(\textT)|]=O(n+n\log\log\fmax)$ based on taylor expansion. For $\wt{t}{x}=x^2\cdot\textsf{IDF}$,  $\mathbb{E}[|\aset(\textT)|]=O(n+n\log\fmax)$.  
\end{proof}



A caveat is that, instead of breaking ties arbitrarily for keys with the same hash value when visiting the key set of a text, we need to order the keys firstly by their hash values and secondly by their frequencies. That is to say for two keys $(p,q)$ and $(p',q')$ with the same hash value, if $\freq{\textT[p]}{\textT[p,q]}>\freq{\textT[p']}{\textT[p',q']}$, we should visit $(p,q)$ before $(p',q')$. If $\freq{\textT[p]}{\textT[p,q]}=\freq{\textT[p']}{\textT[p',q']}$, we can break tie arbitrarily.

\begin{figure*}[htbp] \vspace{-5em}
    \centering
    \setlength{\subfigcapskip}{-5pt}
    \includegraphics[width=0.3\linewidth]{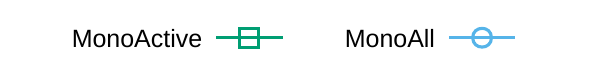}
    \vspace{-1em} \\
    \subfigure[\owt ($k=64$)]{
    \label{subfig:5a}
    \includegraphics[width=0.16\linewidth]{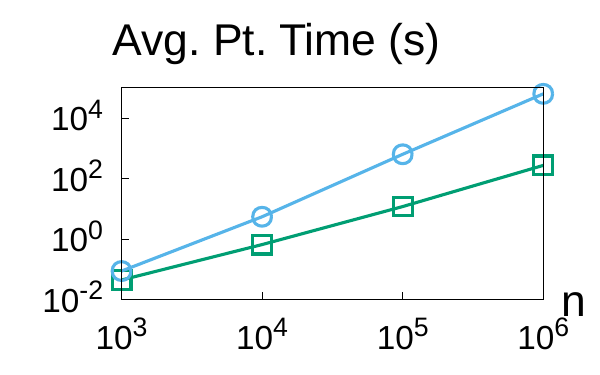}
    }
    \hspace*{-1em}
    \subfigure[\pan ($k=64$)]{
    \label{subfig:5b}
    \includegraphics[width=0.16\linewidth]{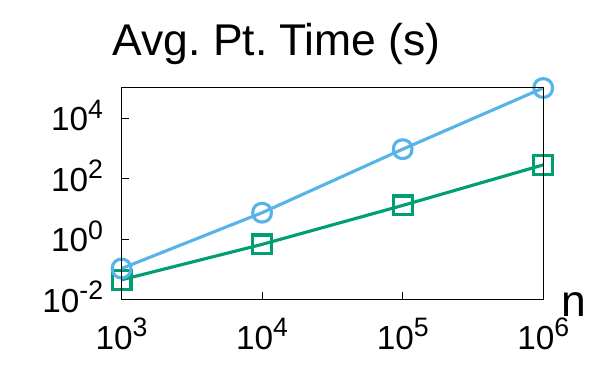}
    }
    \hspace*{-1em}
    \subfigure[\owt ($k=64,n=5000$)]{
    \label{subfig:5c}
    \includegraphics[width=0.16\linewidth]{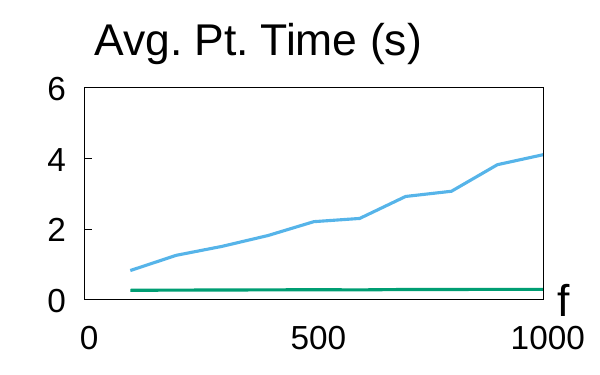}
    }
    \hspace*{-1em}
    \subfigure[\pan ($k=64,n=5000$)]{
    \label{subfig:5d}
    \includegraphics[width=0.16\linewidth]{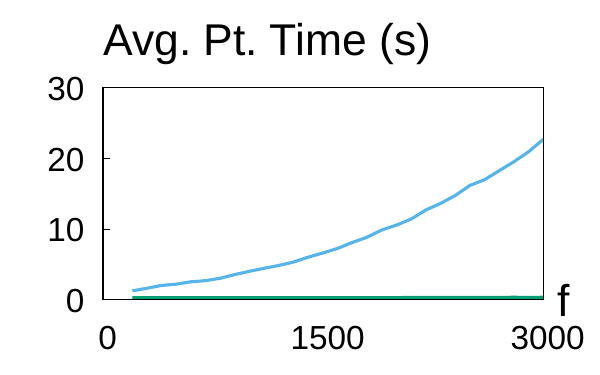}
    }
    \hspace*{-1em}
    \subfigure[\owt ($n=100000$)]{\label{subfig:5e}
    \includegraphics[width=0.16\linewidth]{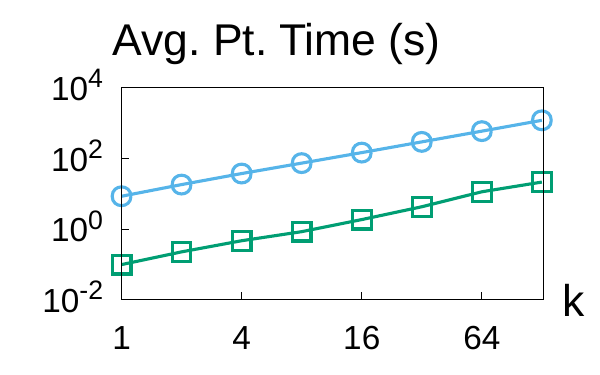}
    }
    \hspace*{-1em}
    \subfigure[\pan ($n=100000$)]{\label{subfig:5f}
    \includegraphics[width=0.16\linewidth]{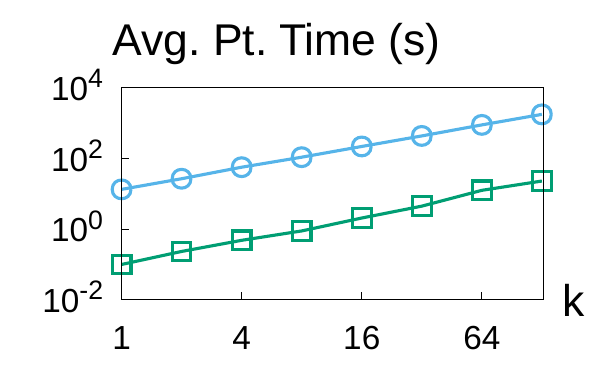}
    }
    \vspace{-1.25em} \\
    \caption{Evaluating the active hash optimization. $k$:  sketch size, $n$:  text length, $f$:  maximum token frequency.} \vspace{-1.25em}
    \label{exp:active}
\end{figure*}

\begin{figure*}[htbp]
    \centering
    \setlength{\subfigcapskip}{-5pt}
    \includegraphics[width=0.45\linewidth]{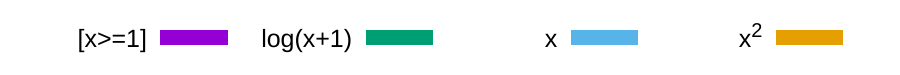}
    \includegraphics[width=0.45\linewidth]{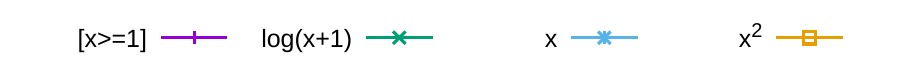}
    \vspace{-1em} \\
    \subfigure[\owt\eat{ ($k=64$)}]{
    \label{subfig:9a}
    \includegraphics[width=0.16\linewidth]{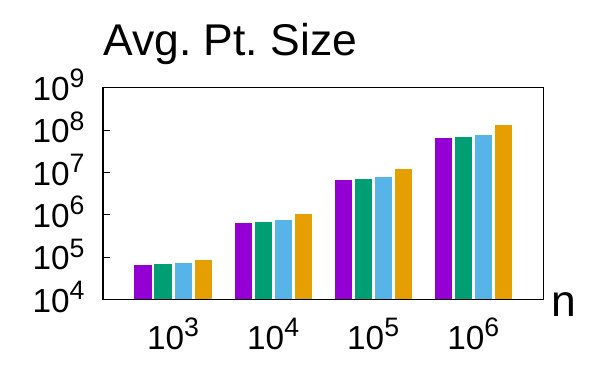}
    }
    \hspace*{-1em}
    \subfigure[\pan\eat{ ($k=64$)}]{
    \label{subfig:9b}
    \includegraphics[width=0.16\linewidth]{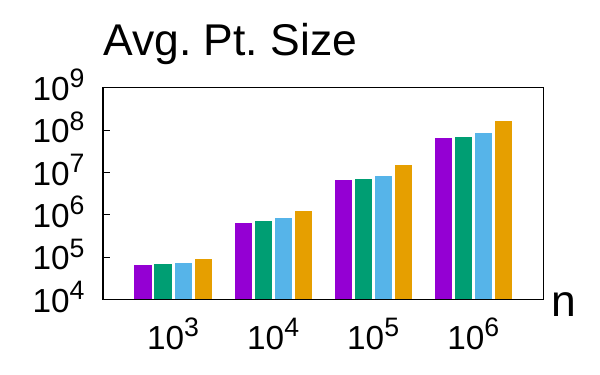}
    }
    \hspace*{-1em}
    \subfigure[\owt\eat{ ($k=64$)}]{
    \label{subfig:9c}
    \includegraphics[width=0.16\linewidth]{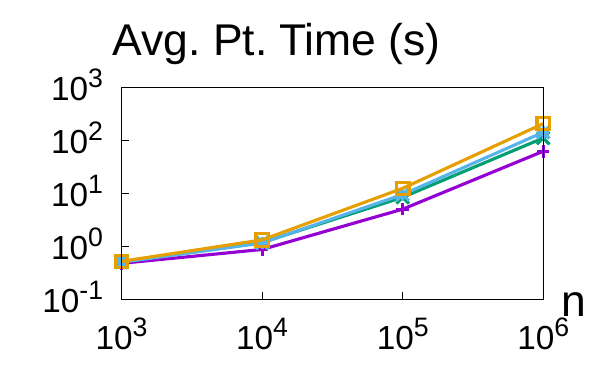}
    }
    \hspace*{-1em}
    \subfigure[\pan\eat{ ($k=64$)}]{
    \label{subfig:9d}
    \includegraphics[width=0.16\linewidth]{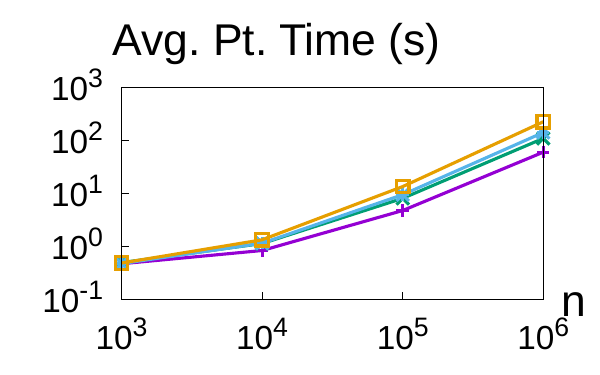}
    }
    \hspace*{-1em}
    \subfigure[\owt\eat{ ($k=64$)}]{
    \label{subfig:9e}
    \includegraphics[width=0.16\linewidth]{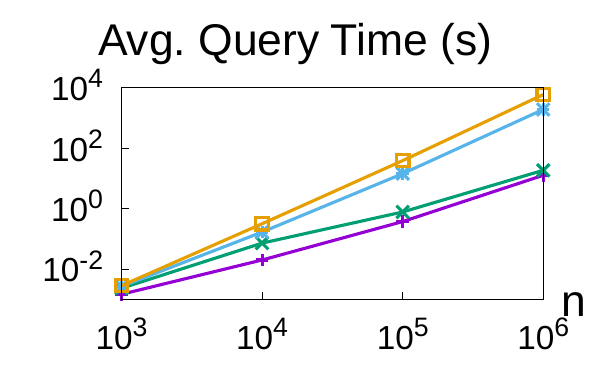}
    }
    \hspace*{-1em}
    \subfigure[\pan\eat{ ($k=64$)}]{
    \label{subfig:9f}
    \includegraphics[width=0.16\linewidth]{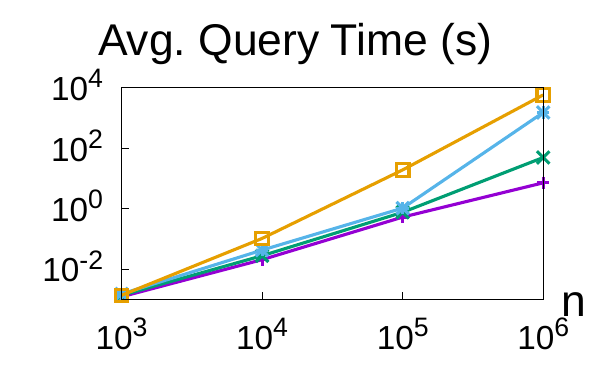}
    }
    \vspace{-1em} 
    \\
    \includegraphics[width=0.45\linewidth]{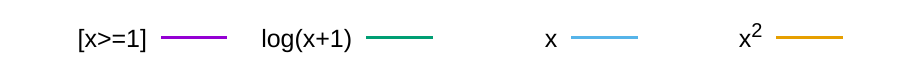}
    \vspace{-1em} \\
    \subfigure[\owt ($\eat{k=64,}n=5000$)]{
    \label{subfig:9g}
    \includegraphics[width=0.16\linewidth]{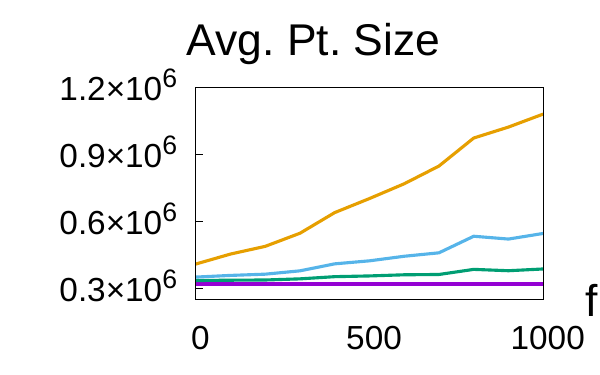}
    }
    \hspace*{-1em}
    \subfigure[\pan ($\eat{k=64,}n=5000$)]{
    \label{subfig:9h}
    \includegraphics[width=0.16\linewidth]{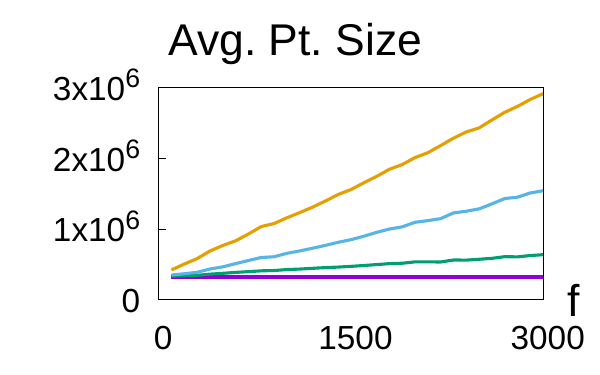}
    }
    \hspace*{-1em}
    \subfigure[\owt ($\eat{k=64,}n=5000$)]{
    \label{subfig:9i}
    \includegraphics[width=0.16\linewidth]{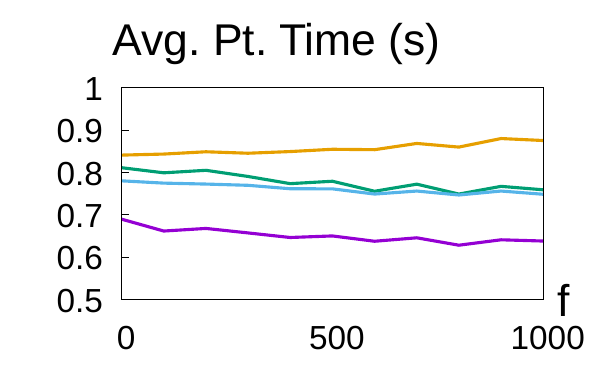}
    }
    \hspace*{-1em}
    \subfigure[\pan ($\eat{k=64,}n=5000$)]{
    \label{subfig:9j}
    \includegraphics[width=0.16\linewidth]{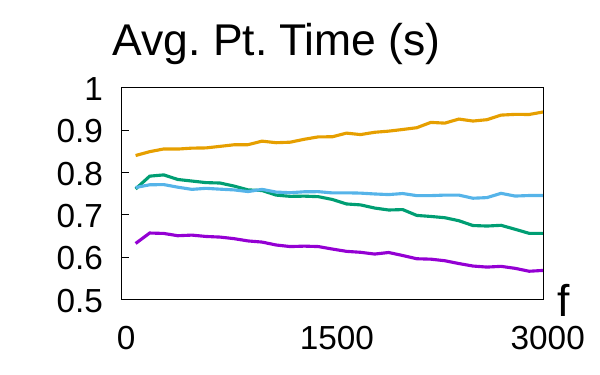}
    }
    \hspace*{-1em}
    \subfigure[\owt ($\eat{k=64,}n=5000$)]{
    \label{subfig:9k}
    \includegraphics[width=0.16\linewidth]{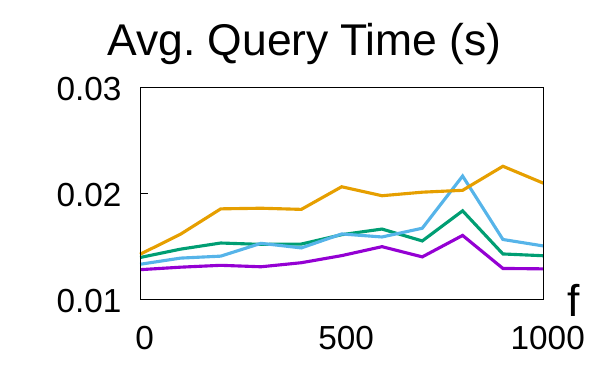}
    }
    \hspace*{-1em}
    \subfigure[\pan ($\eat{k=64,}n=5000$)]{
    \label{subfig:9l}
    \includegraphics[width=0.16\linewidth]{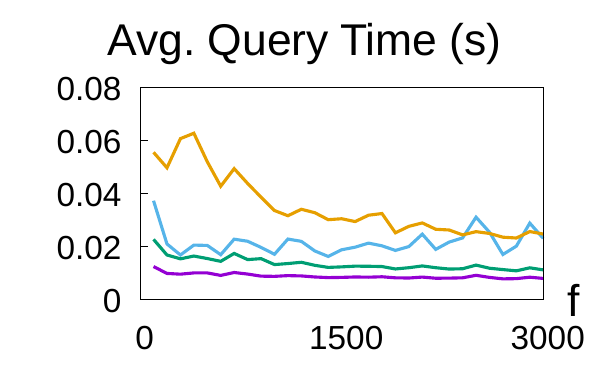}
    }
    \vspace{-1.25em} \\
    \caption{Evaluating weighted Jaccard similarity under various weight functions. $k=64$ in all experiments. \eat{ \miao{k:  sketch size, n:  text length, f:  maximum frequency.}}} \vspace{-1.25em}
    \label{exp:weighted}
\end{figure*}

\section{Experiment} \label{sec:experiment}

\stitle{Environment.} We implemented all the methods, including the baselines, using C++. The programs were compiled with GCC 11.4.0 and optimized using the -O3 flag. All experiments were conducted on a machine equipped with an Intel Xeon Gold 6212 CPU @ 2.40GHz and 1 TB of memory, running Ubuntu 18.04.5.

\stitle{Datasets.} We conducted experiments on two datasets, \pan~\cite{DBLP:conf/clef/PotthastEBSR11} and \owt~\cite{Gokaslan2019OpenWeb}. \pan (\url{https://pan.webis.de/data.html}) is a benchmark for external plagiarism detection~\cite{DBLP:conf/clef/PotthastEBSR11}. Each text in \pan is a book. Tokens are generated by splitting words based on whitespace. In total, there are 11,093 texts and 642,380,109 tokens in the dataset. \owt consists of 8 million web texts highly ranked on Reddit. Each token is a byte-pair-encoding generated by the GPT-2 tokenizer\cite{radford2019language}. The vocabulary size is 50,257. \eat{The token frequency distributions of the two datasets are shown in Figure~\todo{XXX add if you have time}.}

\stitle{Parameters and Settings.} There are three parameters, the text length $n$, the maximum token frequency $f$, and the min-hash sketch size $k$ (see Section~\ref{sec:multi-set}). We report the compact window generation time (\ie partition time) and the number of compact windows generated (\ie partition size) per text. To avoid text-dependent bias, for each configuration of parameters, we randomly choose $10$ text under the configuration and report the average partition time/size. 

To fix the text length $n$, texts with more than $n$ tokens were truncated to their first $n$ tokens. Texts with fewer than $n$ tokens were concatenated until they reached or exceeded $n$ tokens, after which they were truncated to the first $n$ tokens as a single text.

To evaluate the impact of the maximum token frequency $f$, we fixed the text length $n$ and used the first 100,000 texts in \owt and the first 10,000 texts in \pan. These texts were grouped based on their maximum token frequencies using intervals of 100. For example, texts with maximum token frequencies in the range $[1,100]$ formed one group, while those in $[101,200]$ formed another. 


\stitle{Compared Methods.} We consider 3 methods in our experiment.  

\begin{itemize}[leftmargin=0.5cm]
\item \allkeys: Our vanilla monotonic partitioning, Algorithm~\ref{algo:skyline}.
\item \mono: Our monotonic partitioning with active hash optimization (Section~\ref{sec:active}).  It calls Algorithm~\ref{algo:genactivekeys} to generate keys. 
\item  \ala: A greedy partitioning algorithm for multi-set Jaccard similarity~\cite{allign}. \ala generates compact windows in recursion. In each iteration, it takes a rectangle of the shape $[a,b]\times[a,c]$ as input and partitions all the subsequences $\textT[i,j]$ in this rectangle into a few compact windows and one or more smaller rectangles of that shape. The smaller rectangles are recursively partitioned until no rectangles left. At the beginning, the input rectangle is $[1,n]\times[1,n]$. However, \ala is greedy and its time and space complexities are unknown.
\end{itemize}

\subsection{Evaluating Active Hash Optimization}
This section evaluates our two methods \mono and \allkeys on two datasets \owt and \pan. Since the optimization in \mono does not change the generated compact windows, the partition size of \mono and \allkeys are the same,  we only consider the partition time. Recall that in Theorem~\ref{the:complexity}, the time complexities of \mono and \allkeys are respectively $O(n\log n \log f )$ and $O(nf\log n)$ when the maximum frequency $f>1$.

Figures~\ref{subfig:5a}-\ref{subfig:5b} show that the  partition time of two methods increases with the text length. \mono consistently outperforms \allkeys because \mono avoids unnecessary computations for non-active keys. The ratio between the partition time of \allkeys and that of \mono echoes our complexity analysis.



Figures~\ref{subfig:5c}-\ref{subfig:5d}  show that with $k=64$ and $n=5000$, as the maximum frequency $f$ increases, the runtime of \mono remained nearly constant whereas that of \allkeys grew almost linearly with $f$. For example, on \pan, when $f\in [901,1000]$, $f\in[1901,2000]$, and $f\in[2901,3000]$, \mono took 0.27, 0.29, 0.33 seconds for partition generation on average, while \allkeys took 3.98, 10.56, 22.59 seconds. When $f$ increased around threefold, the runtime of \mono and \allkeys respectively increased 1.2$\times$ and 5.7$\times$. This observation is consistent with our complexity analysis.

Figures~\ref{subfig:5e}-\ref{subfig:5f} show that both methods exhibit linear runtime growth with $k$. This is because the partition generation is independent across the $k$ random universal hash functions. Thus, hereinafter, we skip evaluating the impact of the parameter $k$.



\begin{figure*}[htbp]
    \centering
     \vspace{-5em}
    \setlength{\subfigcapskip}{-5pt}
    \includegraphics[width=0.3\linewidth]{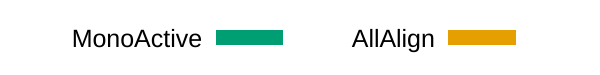}
    \includegraphics[width=0.3\linewidth]{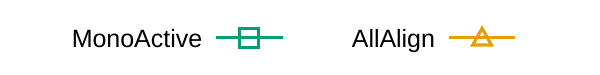}
    \includegraphics[width=0.3\linewidth]{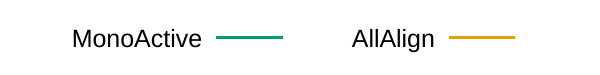}
    \vspace{-1.25em} \\
    
    \subfigure[\owt\eat{ ($k=64$)}]{
    \label{subfig:6a}
    \includegraphics[width=0.16\linewidth]{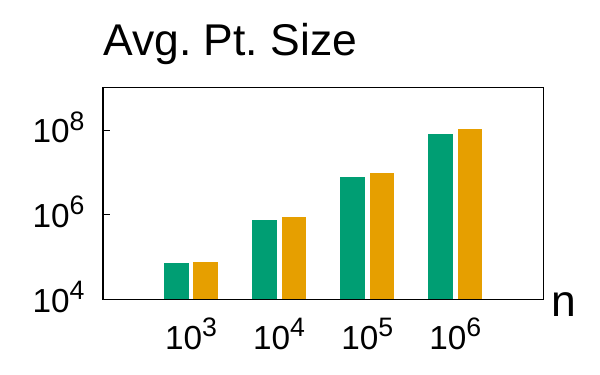}
    }
    \hspace*{-1em}
    \subfigure[\pan\eat{ ($k=64$)}]{
    \label{subfig:6b}
    \includegraphics[width=0.16\linewidth]{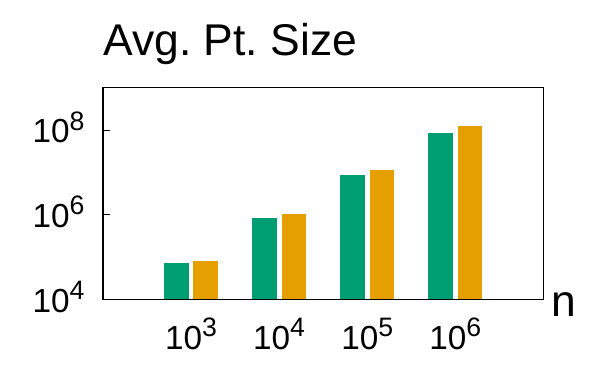}
    }
    \hspace*{-1em}
    \subfigure[\owt\eat{ ($k=64$)}]{
    \label{subfig:6c}
    \includegraphics[width=0.16\linewidth]{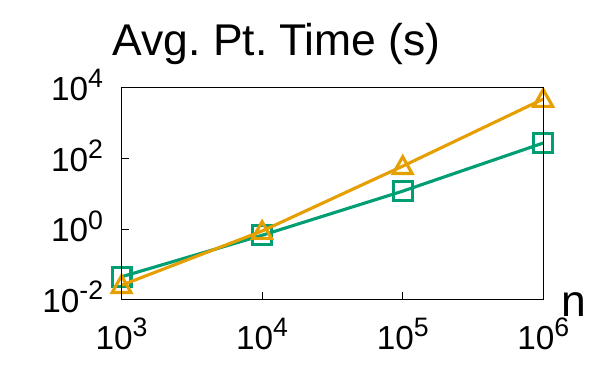}
    }
    \hspace*{-1em}
    \subfigure[\pan\eat{ ($k=64$)}]{
    \label{subfig:6d}
    \includegraphics[width=0.16\linewidth]{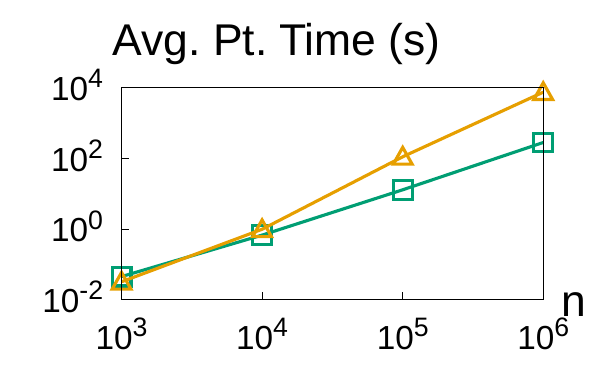}
    }
    \hspace*{-1em}
    \subfigure[\owt\eat{ ($k=64$)}]{
    \label{subfig:6e}
    \includegraphics[width=0.16\linewidth]{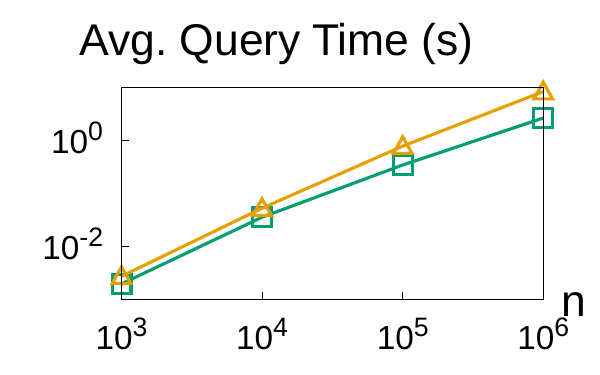}
    }
    \hspace*{-1em}
    \subfigure[\pan\eat{ ($k=64$)}]{
    \label{subfig:6f}
    \includegraphics[width=0.16\linewidth]{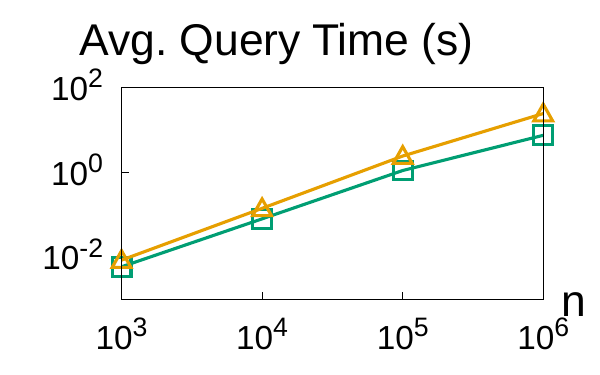}
    }
    \\ \vspace{-1.25em}
    \subfigure[\owt\eat{ ($k=64$)}]{
    \label{subfig:6g}
    \includegraphics[width=0.16\linewidth]{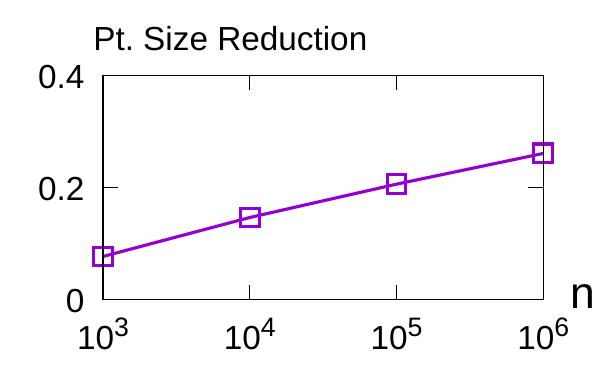}
    }
    \hspace*{-1em}
    \subfigure[\pan\eat{ ($k=64$)}]{
    \label{subfig:6h}
    \includegraphics[width=0.16\linewidth]{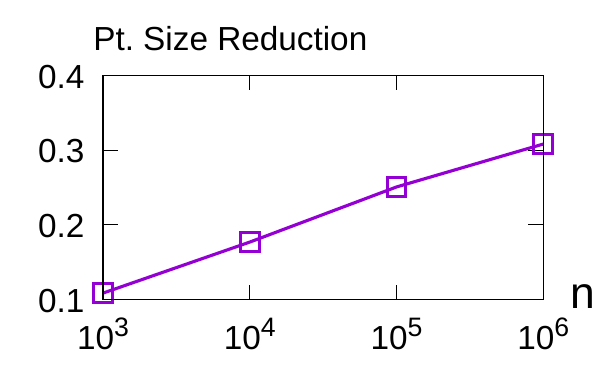}
    }
    \hspace*{-1em}
    \subfigure[\owt\eat{ ($k=64$)}]{
    \label{subfig:6i}
    \includegraphics[width=0.16\linewidth]{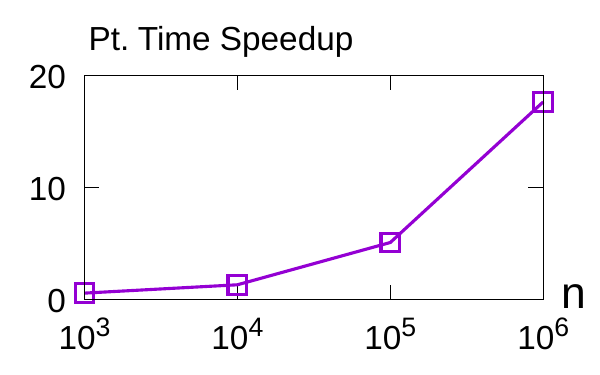}
    }
    \hspace*{-1em}
    \subfigure[\pan\eat{ ($k=64$)}]{
    \label{subfig:6j}
    \includegraphics[width=0.16\linewidth]{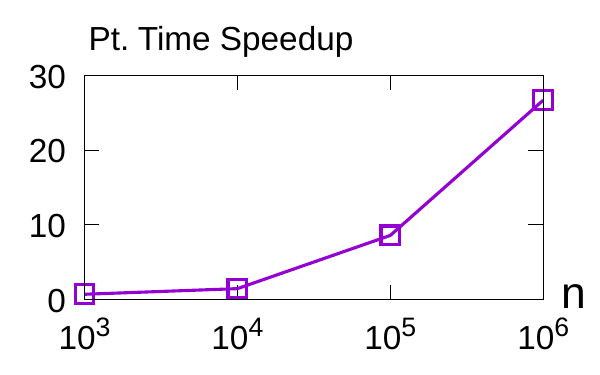}
    }
    \hspace*{-1em}
    \subfigure[\owt\eat{ ($k=64$)}]{
    \label{subfig:6k}
    \includegraphics[width=0.16\linewidth]{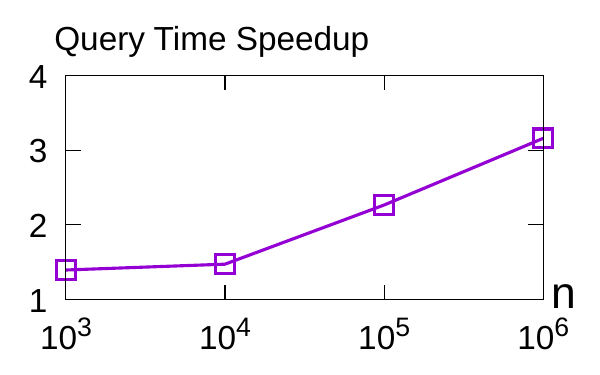}
    }
    \hspace*{-1em}
    \subfigure[\pan\eat{ ($k=64$)}]{
    \label{subfig:6l}
    \includegraphics[width=0.16\linewidth]{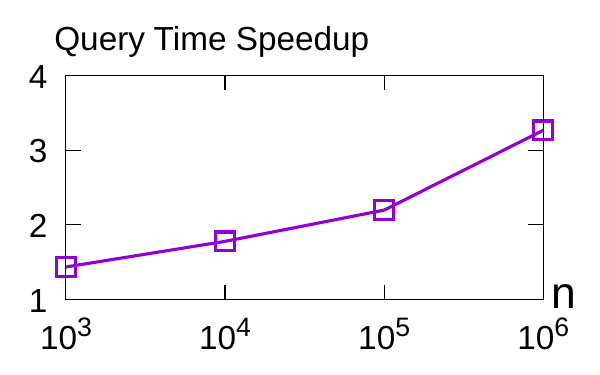}
    }\\
 \vspace{-1.25em}
    \subfigure[\owt ($\eat{k=64,}n=5000$)]{
    \label{subfig:7m}
    \includegraphics[width=0.16\linewidth]{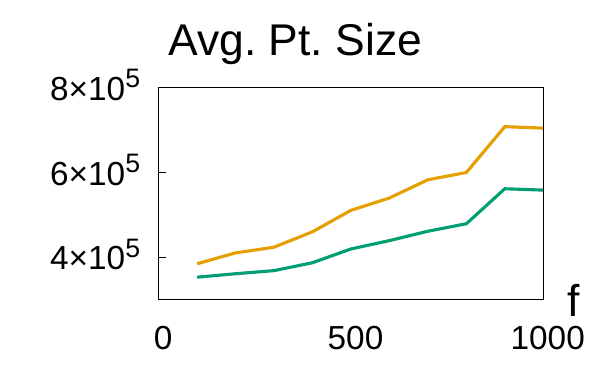}
    }
    \hspace*{-1em}
    \subfigure[\pan ($\eat{k=64,}n=5000$)]{
    \label{subfig:7n}
    \includegraphics[width=0.16\linewidth]{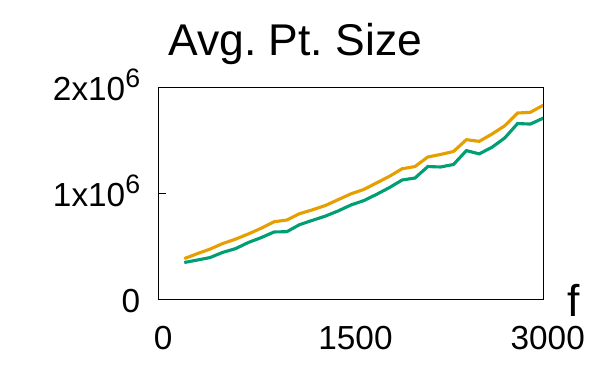}
    }
    \hspace*{-1em}
    \subfigure[\owt ($\eat{k=64,}n=5000$)]{
    \label{subfig:7o}
    \includegraphics[width=0.16\linewidth]{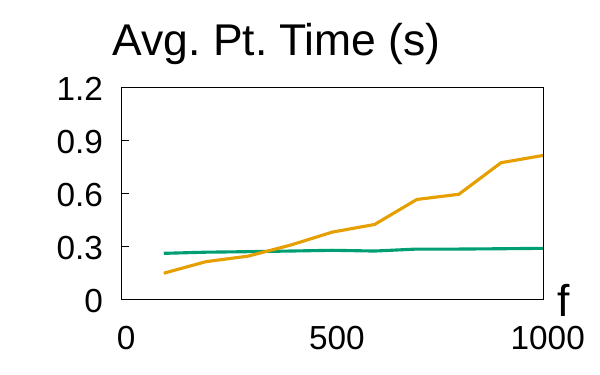}
    }
    \hspace*{-1em}
    \subfigure[\pan ($\eat{k=64,}n=5000$)]{
    \label{subfig:7p}
    \includegraphics[width=0.16\linewidth]{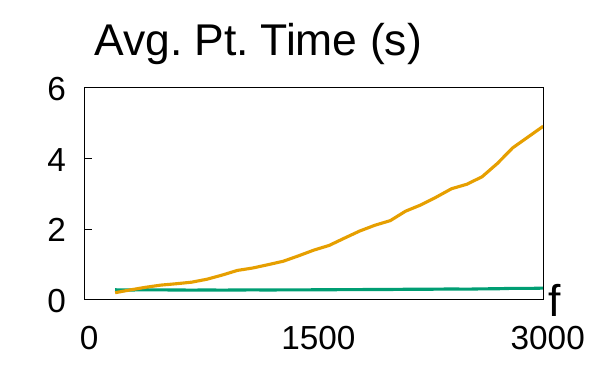}
    }
    \hspace*{-1em}
    \subfigure[\owt ($\eat{k=64,}n=5000$)]{
    \label{subfig:7q}
    \includegraphics[width=0.16\linewidth]{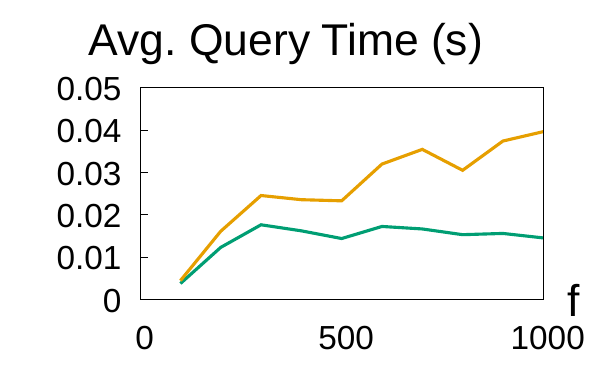}
    }
    \hspace*{-1em}
    \subfigure[\pan ($\eat{k=64,}n=5000$)]{
    \label{subfig:7r}
    \includegraphics[width=0.16\linewidth]{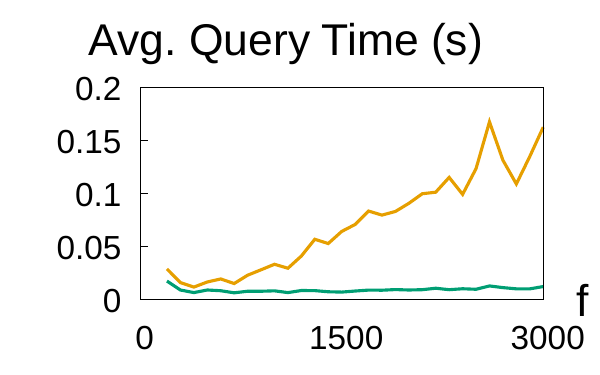}
    }
    \\ \vspace{-1em}
    \subfigure[\owt ($\eat{k=64,}n=5000$)]{
    \label{subfig:7s}
    \includegraphics[width=0.16\linewidth]{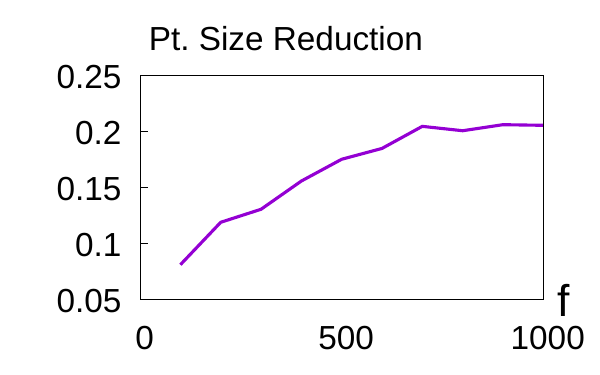}
    }
    \hspace*{-1em}
    \subfigure[\pan ($\eat{k=64,}n=5000$)]{
    \label{subfig:7t}
    \includegraphics[width=0.16\linewidth]{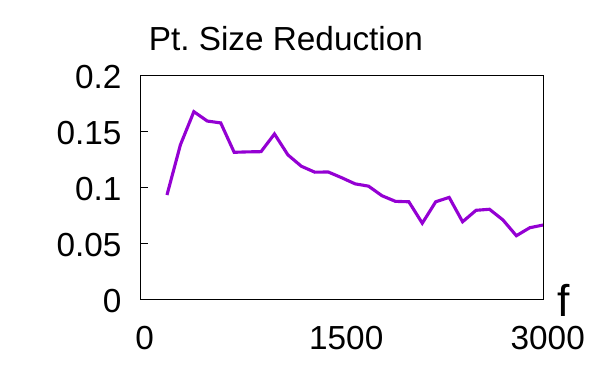}
    }
    \hspace*{-1em}
    \subfigure[\owt ($\eat{k=64,}n=5000$)]{
    \label{subfig:7u}
    \includegraphics[width=0.16\linewidth]{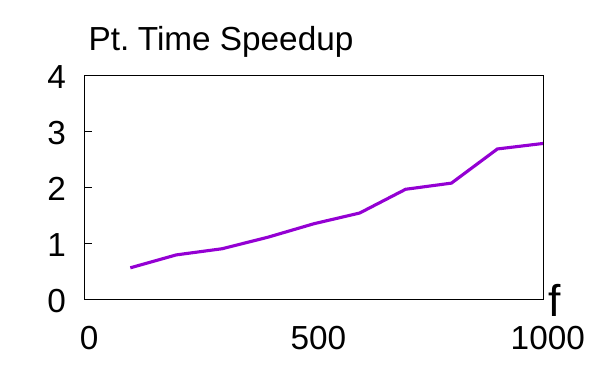}
    }
    \hspace*{-1em}
    \subfigure[\pan ($\eat{k=64,}n=5000$)]{
    \label{subfig:7v}
    \includegraphics[width=0.16\linewidth]{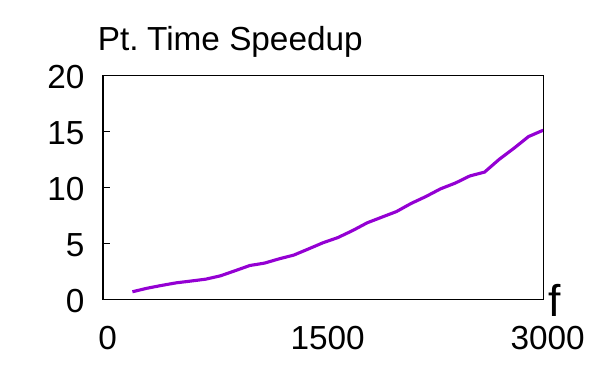}
    }
    \hspace*{-1em}
    \subfigure[\owt ($\eat{k=64,}n=5000$)]{
    \label{subfig:7w}
    \includegraphics[width=0.16\linewidth]{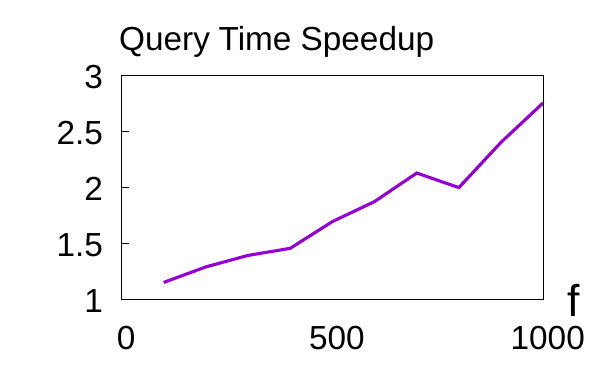}
    }
    \hspace*{-1em}
    \subfigure[\pan ($\eat{k=64,}n=5000$)]{
    \label{subfig:7x}
    \includegraphics[width=0.16\linewidth]{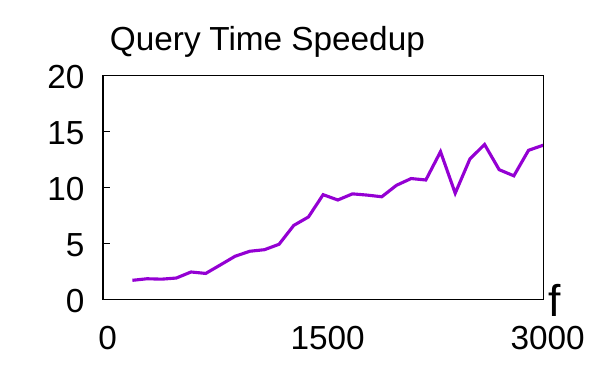}
    }
 \vspace{-1.5em}
    \caption{Comparing with the state-of-the-art for multi-set Jaccard similarity. $k=64$ in all experiments.\eat{ \miao{k:  sketch size, n:  text length, f:  maximum token frequency.}}} \vspace{-1.5em}
    \label{exp:allign}
\end{figure*}

\subsection{Evaluating Weighted Jaccard Similarity}


This section evaluates near-duplicate text alignment under weighted Jaccard similarity. Our algorithm \mono is the only method that supports this problem. We evaluated four token weight functions $\wt{t}{x}$: (1) binary TF weight $[x\geq 1]$; (2) logarithmic TF weight $\log(x+1)$; (3) raw count TF weight $x$; and (4) squared TF weight $x^2$; all combined with unary IDF weight.

Figures~\ref{subfig:9a} and~\ref{subfig:9b} report the partition sizes (\ie, the number of compact windows generated) as the text length $n$ varies from $10^3$ to $10^6$. As observed, the squared weight consistently yields the largest partitions, followed by the raw count weight, whereas the binary weight produces the smallest partitions. For all weight functions, the partition size increases approximately linearly with $n$. For instance, when $n$ increases from 1{,}000 to 1 million on \pan, the partition sizes under the four weight functions increase from 64K, 67K, 71K, and 89K to 64M, 68M, 82M, and 158M, respectively. This is because the partition sizes of the binary, logarithmic, raw count, and squared weights are respectively $O(n)$, $O(n+n\log\log f)$, $O(n+n\log f)$, and $O(n+n\log f)$, according to Lemma~\ref{lemma:weightfunctions}. Although squared and raw count weights have the same complexity, the constant factor of squared weight is larger than that of raw count weight.

Figures~\ref{subfig:9c} and~\ref{subfig:9d} present the average partitioning time (\ie, the compact window generation time) under the four weight functions as the text length $n$ increases from $10^3$ to $10^6$. Among them, the binary weight consistently achieved the fastest partitioning time, followed by the logarithmic weight, while the squared weight resulted in the slowest runtime. All weight functions exhibit quasilinear growth with respect to $n$. For example, when $n$ increased from $10^3$ to $10^6$ on \pan, the partitioning times under the four weights increased from 0.47, 0.50, 0.48, and 0.49 seconds to 60.96, 110.99, 143.43, and 229.02 seconds, respectively. This aligns with our complexity analysis, which shows that the runtime of \mono scales with $|\aset(\textT)| \log n$ for all weight functions.


Figures~\ref{subfig:9e} and~\ref{subfig:9f} show the average query latency when varying the text length $n$ from $10^3$ to $10^6$. The queries are sampled from the same dataset of the data texts but are disjoint from them. Each query is a text of length $100$, and the same set of queries is used across all values of $n$. Similarly, the squared weight had the largest query latency, followed by raw count weight, while the binary weight had the smallest latency. This is because query processing cost grows with the number of compact windows sharing the same hash values between the query and the data texts (\aka collided compact windows). Weight functions that generate more compact windows are more likely to incur a larger number of such collisions, thereby increasing query latency.


Figures~\ref{subfig:9g}–\ref{subfig:9h} show the average partition sizes generated under the four weight functions, with $n = 5{,}000$ and $k = 64$ fixed while varying the maximum token frequency $f$. As $f$ increased, the partition size under the binary weight remained constant. The partition size under logarithmic increased modestly, whereas those under raw count and squared weights grew sublinearly with $f$. For example, when $f$ increased from the range $[901, 1000]$ to $[2901, 3000]$, the partition sizes under the binary, logarithmic, raw count, and squared weights increased from 320K, 412K, 607K, and 1M to 320K, 625K, 1.5M, and 2.8M, respectively. These results are consistent with our theoretical analysis. The numbers of compact windows generated by binary, logarithmic, raw count, and squared weights scale with $n$, $n\log\log f$, $n\log f$, and $n\log f$, respectively.


Figures~\ref{subfig:9i}–\ref{subfig:9j} report the average partitioning time of the four weight functions under varying $f$. The partitioning time either remained nearly constant or decreased slightly across all four weight functions. This is because a significant part of \mono's runtime is spent on sorting the active hash values, the number of which is proportional to the number of distinct tokens in the text. As $f$ increases while $n$ remains fixed, the number of distinct tokens decreases, thereby reducing the sorting cost. Finally, Figures~\ref{subfig:9k}–\ref{subfig:9l} show the average query latency of the four weight functions under varying $f$. The two datasets showed different behaviors. This is because the query time depends heavily on the query text (which determines the number of collided compact windows).

\subsection{Comparing with the State-of-the-Art}

This section compares our method \mono with \ala, the state-of-the-art method for multi-set Jaccard similarity. 

Figures~\ref{subfig:6a} and~\ref{subfig:6b} show, when varying the text length $n$ from $10^3$ to $10^6$, the average number of compact windows generated by \mono and \ala (\ie partition sizes, which are proportional to their index sizes) on two datasets \owt and \pan, while Figures~\ref{subfig:6g} and~\ref{subfig:6h} show the reduction ratio (\ie $1-\frac{\mono}{\ala}$). As shown, \mono consistently generated less compact windows than \ala, up to 30.82\%. More importantly, the gap between the number of compact windows for \mono and \ala steadily widens as $n$ increases. For example, when $n$ increased from 1000 to 1 million, the gap grew from 10.86\% to 30.82\% on \pan and from 7.69\% to 26.08\% on \owt. This is because \ala's recursive approach divides the rectangles into multiple rectangular regions, imposing early boundaries that may split what is a single compact window in \mono into multiple. The number of compact windows generated by \mono grew linearly with $n$, which is consistent with our complexity analysis.




Figures~\ref{subfig:6c} and~\ref{subfig:6d} show, when varying the text length $n$ from $10^3$ to $10^6$, the average compact window generation time (\ie partition time, which is proportional to the index time) of $\ala$ and \mono, respectively, while Figures~\ref{subfig:6i} and~\ref{subfig:6j} show the speedup of \mono over \ala. \mono was significantly faster than \ala. For example, on the \pan dataset, when $n=10^6$, \ala took 7581s on average to generate the compact windows, whereas \mono only spent 284s, achieving a speedup of 26.7$\times$. This is because \ala produces more compact windows than \mono and \ala is recursive whereas \mono is non-recursive. Moreover, the speedup of \mono over \ala increased as $n$ grew. For example, on the \pan dataset, when $n$ increased from $10^4$ to $10^6$, the speedup improved from 1.46 to 26.7. A similar trend is observed on \owt.


Figures~\ref{subfig:6e} and~\ref{subfig:6f} show the average query latency of the two methods when varying the text length $n$ from $10^3$ to $10^6$, while Figures~\ref{subfig:6k} and~\ref{subfig:6l} show the speedup of \mono over \ala. The queries are generated in the same way as described earlier. The query latency of \mono was up to 3.15$\times$ faster than that of \ala, though the number of compact windows was only up to $30.82\%$ less. For example, on the \pan dataset, when $n$ increased from $10^3$ to $10^6$, the speedup improved from 1.4 to 3.15. A similar trend is observed on \owt. This is because, although the number of collided compact windows grows linearly with the number of generated compact windows in expectation, the cost of query processing scales quadratically with the number of collided windows in the worst case. Therefore, reducing the number of compact windows can lead to a significant speedup in query time.

We also evaluate the impact of the maximum token frequency $f$. Figures~\ref{subfig:7m}–\ref{subfig:7r} present the number of compact windows generated, the compact window generation time, and the query latency of the two methods, under varying $f$ with fixed text length $n = 5000$. Figures~\ref{subfig:7s}–\ref{subfig:7x} further report the relative improvements. As $f$ increases, the generation time of \mono remains largely stable, whereas that of \ala increases significantly. For example, on the \pan dataset, when $f$ increases from $600$ to $3000$, the speedup improved from 1.41 to 16.26. In addition, the gap in the number of compact windows produced by the two methods first grows and then decreases  with larger $f$. This trend arises because higher token frequencies lead to deeper recursion and more early boundaries in \ala, which fragment compact windows and increase computational cost. However, as $f$ approaches $n$, the text instances approach the worst-case scenario and both methods produce a large number of compact windows. The gap of the query latency increases with the increase of $f$. This is because the collided windows are dominated by tokens with the highest frequencies. In \ala, the high-frequency tokens trigger deeper recursion, resulting in more finely fragmented compact windows. These fragments increase the number of collided compact windows and the number of iterations in the inner loop. As a result, the query latency grows superlinearly with the number of collided compact windows.


\eetitle{Scalability.} As shown in Figures~\ref{subfig:6g}-\ref{subfig:6l}, the performance gain of \mono over \ala increased as the text length $n$ grew in terms of partition size, partition generation time, and query latency. Thus \mono scales better than \ala.  This is attributed to the complexity guarantees of our algorithm.

\section{Related Work}\label{sec:related}


\stitle{Near-Duplicate Text Alignment.} Most of existing methods for near-duplicate text alignment rely on rule-based heuristics and the ``seeding-extension-filtering'' framework~\cite{DBLP:conf/usenix/Manber94,DBLP:journals/cn/BroderGMZ97,DBLP:conf/www/HamidBCH09,DBLP:conf/sigmod/BrinDG95,DBLP:conf/sigir/SeoC08,DBLP:conf/www/KimCT09,DBLP:books/aw/Baeza-YatesR99,DBLP:journals/jasis/HoadZ03}. They first find ``seed matches'' between the data texts and the query. Various kinds of seeds have been proposed, such as fingerprints~\cite{DBLP:conf/clef/PotthastBESR10}, super-shingles~\cite{broder1997syntactic}, 0 mod p~\cite{DBLP:conf/usenix/Manber94}, fixed-length windows~\cite{DBLP:conf/sigmod/WangXQWZI16}, q-grams~\cite{DBLP:conf/sigmod/SchleimerWA03}, and sentences~\cite{DBLP:conf/clef/Sanchez-PerezGS15}. Then they extend the seed matches as far as possible to form candidates. Finally, candidates failing predefined criteria (e.g., length or overlap thresholds) are filtered. However, these heuristics are highly sensitive to the hard-to-tune hyper-parameters~\cite{DBLP:journals/csur/FoltynekMG20} such as the granularity of the seeds and various kinds of thresholds~\cite{DBLP:conf/clef/Sanchez-PerezGS15,DBLP:conf/semeval/AgirreBCDGMRW16}. They also lack accuracy guarantees.


A few recent works propose to use the min-hash techniques~\cite{minhash} for near-duplicate text alignment~\cite{allign,txtalign,llmalign,DBLP:journals/pacmmod/PengZD24}. They introduce the concept of ``compact windows'' to group nearby subsequences sharing the same min-hash. It has been shown that \miao{when duplicate tokens have the same hash value}, the $O(kn^2)$ min-hashes in a text with $n$ tokens can be compressed in compact windows using $O(kn)$ space and $O(kn \log n)$ time, where $k$ is the sketch size~\cite{allign}. \miao{Along this line,} an algorithm is developed to group the $O(n^2)$ bottom-$k$ sketches in a text with $n$ tokens into compact windows using $O(nk^2)$ space and $O(n\log n + nk)$ time~\cite{txtalign}. To further reduce the space cost, another algorithm is designed to group the $O(kn^2)$ one-permutation hashing~\cite{DBLP:conf/nips/0001OZ12} min-hash sketches into compact windows using $O(n+k)$ space and $O(n\log n +k)$ time~\cite{DBLP:journals/pacmmod/PengZD24}. These algorithms are used to evaluate the memorization behaviour in large language models (LLMs), revealing that up to 10\% of texts generated by GPT-2~\cite{radford2019language} had near-duplicates in its training data. It also shows that the min-hash sketches of all subsequences no shorter than $t$ tokens in a text with $n$ tokens can be grouped into $O(\frac{n}{t})$ compact windows on average~\cite{llmalign}. \miao{However, none of these works} can deal with weighted min-hash (\ie consistent weighted sampling~\cite{ioffe2010improved}).

\stitle{Min-Hash Sketch.} Min-hash was originally introduced in statistics for coordinated sampling~\cite{brewer1972selecting} and later adapted for database applications by Flajolet and Martin~\cite{flajolet1985probabilistic}. Broder~\cite{DBLP:journals/cn/BroderGMZ97} employed the min-hash sketch to detect near-duplicate web pages. Several variants of the min-hash sketch have been proposed to improve the sketching time of the classic min-hash sketch, including the bottom-$k$ sketch~\cite{DBLP:conf/stoc/Thorup13}, one-permutation hashing (OPH)~\cite{DBLP:conf/nips/0001OZ12}, and fast similarity sketch~\cite{dahlgaard2017fast}. The number of min-hashes generated by one-permutation hashing is not fixed. A few works propose techniques to address this issue~\cite{shrivastava2014densifying, shrivastava2014improved, shrivastava2017optimal}.

\stitle{Weighted Jaccard Similarity Estimation.} Many techniques have been proposed to estimate the weighted Jaccard similarity~\cite{ertl2018bagminhash,manasse2010consistent,ioffe2010improved,wu2018improved}. Specifically, 
\cite{gollapudi2006exploiting} extends the classic min-hash to estimate the multi-set Jaccard similarity. A method dealing with integer weights is mentioned in~\cite{charikar2002similarity}. It was extended by~\cite{chum2008near} to support a more general weight function. Consistent weighted sampling (CWS) is first proposed in~\cite{manasse2010consistent} to estimate weighted Jaccard similarity. Ioffe proposes improved consistent weighted sampling, which simplified CWS and guarantees worst-case constant time for each non-zero weight~\cite{ioffe2010improved}. Shrivastava proposes to use rejected sampling to estimate the weigthed Jaccard Similarity~\cite{shrivastava2016simple}.

\section{Conclusion}\label{sec:conclude}

In conclusion, this paper extends near-duplicate text alignment to support weighted Jaccard similarity by leveraging consistent weighted sampling. We introduce \mono, an efficient and theoretically optimal algorithm for grouping subsequences based on their consistent weighted samplings. Our analysis establishes tight bounds on the number of groups generated, and our experiments demonstrate substantial improvements over state-of-the-art methods in both index time and index size. These results highlight the practicality and scalability of our approach for real-world text alignment tasks where token weights matter.

\balance

\bibliographystyle{abbrv}

\newpage
\appendix
\section{Correctness of Monotonic Partitioning}\label{sec:proofs}

This section serves as a proof of Theorem~\ref{theorem:uppernlogf}. Let $n = |\textT|$. 

\stitle{Augmented Skyline.} Since every key $(p, q) \in \key(\textT)$ has $(p, q) \in [1, n] \times [1, n]$, the two guard keys $(0, 0)$ and $(n+1, n+1)$ neither dominate nor are dominated by any key in $\key(\textT)$. For any key set $\keys \subseteq \key(\textT)$, set $\{(0, 0), (n+1, n+1)\} \cup \sky(\keys)$ is a skyline -- no key in the set dominates another. We refer to this set as the augmented skyline of $\keys$. All the previous discussions about a skyline naturally apply to the corresponding augmented skyline. 

By Definition~\ref{def:rec}, $\rec(\{(0, 0), (n+1, n+1)\} \cup \sky(\keys)) = \rec(\sky(\keys)) \cup \rec(0,0) \cup \rec (n+1,n+1) = \rec(\sky(\keys))$. This is because $\rec(0,0)=[1,0]\times[0,n]=\phi$ and $\rec(n+1,n+1)=[1,n+1]\times[n+1,n]=\phi$.

Lemma~\ref{lemma:dom} tests the existence of any key in $\keys$ that dominates $(b,c)$. If not, we need to update the augmented skyline to include $(b,c)$ and, at the same time, generate compact windows according to  Lemma~\ref{lemma:dom2} below. Please see Figure~\ref{fig:combined}(b) as an illustration.

\begin{lemma} \label{lemma:dom2}
Given a key set $\keys\subseteq\key(\textT)$ and a key $(b,c)\not\in\keys$. Let $\sky$ be the augmented skyline of $\keys$ in coordinate order. Let
\begin{itemize}
\item $i$ be the largest index such that $\sky[i].y < c$ and 
\item $j$ be the smallest index such that $\sky[j].x > b$.
\end{itemize}
The two guard keys in $\sky$ ensure that $i$ and $j$ exist. We claim \kw{C1}$\&$\kw{C2}. 

\begin{enumerate}
\item[\kw{C1}] If $i+1 \leq j-1$, then the keys in $\sky$ dominated by $(b,c)$ are exclusively $\sky[i+1], \sky[i+2], \cdots, \sky[j-1]$; otherwise, there is no key in $\sky$ dominated by $(b,c)$.

\item[\kw{C2}] If $(b,c)$ is not dominated by any key in $\sky$, then 
\[
\small r_k \doteq
\begin{cases}
 \left[\sky[k].x+1,\,b] \times [c,\,\sky[k+1].y-1\right], & k = i, \\
\left[\sky[k].x+1,\,b] \times [\sky[k].y,\,\sky[k+1].y-1\right], & k \in [i+1,j-1].
\end{cases}
\]
\noindent are $j-i$ mutually disjoint rectangles that jointly cover the growing part of the skyline when including $(b,c)$, i.e.,  $$r_i\cup r_{i+1}\cup \cdots \cup r_{j-1}=\rec(b,c)\setminus\rec(\sky).$$ 
\end{enumerate}
\end{lemma}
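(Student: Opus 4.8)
The plan is to prove \kw{C1} and \kw{C2} separately, reducing everything to the monotonicity of the skyline coordinates (Lemma~\ref{lemma:orderinskyline2}) together with the two domination criteria that follow directly from $[p,q]\subset[p',q']$: namely ``$(b,c)$ dominates $\sky[k]$ iff $\sky[k].x \le b$ and $c \le \sky[k].y$'', and symmetrically ``$\sky[k]$ dominates $(b,c)$ iff $b \le \sky[k].x$ and $\sky[k].y \le c$''.

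First I would dispatch \kw{C1}. By Lemma~\ref{lemma:orderinskyline2}, both $\sky[\cdot].x$ and $\sky[\cdot].y$ are non-decreasing in the index. By the definition of $j$ as the smallest index with $\sky[j].x > b$, the condition $\sky[k].x \le b$ holds exactly for $k \le j-1$; by the definition of $i$ as the largest index with $\sky[i].y < c$, the condition $c \le \sky[k].y$ holds exactly for $k \ge i+1$. Hence $(b,c)$ dominates $\sky[k]$ iff $i+1 \le k \le j-1$ (properness is automatic, since $(b,c)\notin\keys$ and is neither guard, so $(b,c)\neq\sky[k]$). This is precisely \kw{C1}; the index window $[i+1,j-1]$ being empty gives the ``otherwise'' case. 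This step is routine and holds without any hypothesis on whether $(b,c)$ is itself dominated.

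Next, under the hypothesis of \kw{C2} that $(b,c)$ is dominated by no key in $\sky$, I would first establish $i < j$ so that the decomposition is well-posed. If instead $\sky[i].x > b$, then combining with $\sky[i].y < c$ and $\sky[i].x \le \sky[i].y$ yields $b < \sky[i].x \le \sky[i].y < c$, so $[\sky[i].x,\sky[i].y] \subsetneq [b,c]$ and $\sky[i]$ would dominate $(b,c)$, a contradiction; hence $\sky[i].x \le b$, which by the choice of $j$ forces $i \le j-1$ (here $\sky[i]$ is necessarily a genuine key, as the guard $(0,0)$ has $x=0\le b$). Disjointness of $r_i,\dots,r_{j-1}$ is then immediate from their second coordinates: the $y$-ranges are $[c,\sky[i+1].y-1]$ and $[\sky[k].y,\sky[k+1].y-1]$ for $k\in[i+1,j-1]$, which are consecutive (possibly empty) integer intervals tiling $[c,\sky[j].y-1]$, hence pairwise disjoint, forcing the rectangles to be disjoint regardless of their $x$-ranges.

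The main work, and the step I expect to be the real obstacle, is the coverage identity $\bigcup_{k=i}^{j-1} r_k = \rec(b,c)\setminus\rec(\sky)$. I would argue pointwise. A cell $(x,y)\in\rec(b,c)$ means $x\le b$ and $y\ge c$; and $(x,y)\in\rec(\sky)$ iff some $\sky[m]$ satisfies $x\le\sky[m].x$ and $y\ge\sky[m].y$. The key reduction, using that $\sky[\cdot].x$ is non-decreasing, is that among the indices $m$ with $\sky[m].y\le y$ the one with the largest $x$-coordinate is the largest such index $m^\star$; thus $(x,y)\in\rec(\sky)$ iff $x\le\sky[m^\star].x$. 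I would split on $y$: cells with $y\ge\sky[j].y$ lie in $\rec(\sky[j])$ (since $x\le b<\sky[j].x$) and are excluded, matching that the $r_k$ only reach up to $\sky[j].y-1$; for $c\le y<\sky[j].y$, identifying $m^\star=i$ when $y\in[c,\sky[i+1].y-1]$ and $m^\star=k$ when $y\in[\sky[k].y,\sky[k+1].y-1]$ shows that the uncovered cells of $\rec(b,c)$ at height $y$ are exactly those with $\sky[m^\star].x < x\le b$, i.e.\ precisely the $x$-slice of $r_{m^\star}$. Summing over $y$ gives both inclusions. The delicate points to get right are the off-by-one handling at the first step (lower bound $c$ rather than $\sky[i].y$) and at the top (strict versus non-strict comparison with $\sky[j].y$), together with the empty-interval cases when consecutive $\sky[k].y$ coincide; once $m^\star$ is pinned down by monotonicity, these are the only places the argument can slip. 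Finally, by the appendix's identity $\rec(\sky)=\rec(\sky(\keys))=\rec(\keys)$ (Lemma~\ref{lemma:exclude}), the stated difference is unambiguous.
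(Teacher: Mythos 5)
Your proof is correct and takes essentially the same route as the paper's: \kw{C1} rests on the coordinate monotonicity of the skyline (Lemma~\ref{lemma:orderinskyline2}), your well-posedness step ($i<j$, via the contradiction that otherwise $\sky[i]$ would dominate $(b,c)$) and the disjointness-by-$y$-ranges argument match the paper's, and your height-slicing via $m^\star$ is a repackaging of the paper's case analysis on $q$ (its Cases 1--3 correspond exactly to $m^\star=i$, $m^\star=k$, and $y\geq\sky[j].y$). The one substantive difference is that your pointwise characterization of the uncovered cells at each height yields \emph{both} inclusions of the claimed equality, whereas the paper's proof of \kw{C2} explicitly verifies only $\rec(b,c)\setminus\rec(\sky)\subseteq r_i\cup\cdots\cup r_{j-1}$, leaving the reverse containment implicit.
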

\begin{proof}
We prove \kw{C1} in two complementary cases. 

\textbf{Case $i + 1 > j - 1$.} Assume there is a key $\sky[k]$ that is dominated by $(b, c)$. Then $\sky[k].x \leq b <\sky[j].x$ and $\sky[i].y < c\leq \sky[k].y$. By Lemma~\ref{lemma:orderinskyline2}, $i < k$ and $k < j$. Hence $i \leq k - 1 \leq j - 2$, which contradicts $i+1>j-1$. Thus, no key in $\sky$ is dominated by $(b, c)$. 

\textbf{Case $i+1 \leq j-1$.} For any $k \notin [i+1, j-1]$, either $k \leq i$ or $k \geq j$. By the definition of $i$, $\sky[k].y < c$ for $k \leq i$, and by the definition of $j$, $\sky[k].x > b$ for $k \geq j$. For any $k \leq i$ or $k \geq j$, $(b, c)$ does not dominate $\sky[k]$. For any $k\in[i+1,j-1]$, by the definitions of $i$ and $j$, we have $\sky[k].x \leq b$ and $\sky[k].y \geq c$. Since $(b,c)\notin\keys$, we must have $(b,c)\neq \sky[k]$. Thus $[b,c]\subset [\sky[k].x,\sky[k].y]$. Thus, $(b,c)$ dominates $\sky[k]$ for $\forall k\in[i+1,j-1]$. This proves \kw{C1}. 




We prove \kw{C2} below by considering when $(b,c)$ is not dominated by any key in $\sky$. 
We first show $\sky[i+1]$ exists in the augmented skyline by showing $i<j$. If otherwise, \ie, $i\geq j$, by Lemma~\ref{lemma:orderinskyline2},  $\sky[i].x \geq \sky[j].x >b$, in addition, $\sky[i].y < c$ by definition, thus $\sky[i]$ dominates $(b,c)$, contradiction. $\sky[j]$ exists and thus $\sky[i+1]$ exists.

We then show $r_i, r_{i+1}, \cdots, r_{j-1}$ are mutually disjoint: the $y$-ranges of these rectangles, \ie $[c,\sky[i+1].y-1], [\sky[i+1].y, \sky[i+2].y],\cdots,$ $ [\sky[j-2].y, \sky[j-1].y], [\sky[j-1].y, \sky[j].y]$ are mutually disjoint. 

Next, consider any $(p,q)\in\rec(b,c)\setminus\rec(\sky)$, we show there exists $k\in [i,j-1]$ such that $(p,q)\in r_k$. Since $(p,q)\in\rec(b,c)$ and thus $q \in [c,n]$, $q$ will fall in exactly one of the 3 cases below.
\begin{itemize}
    \item[Case 1]: $q\in[c,\sky[i+1].y-1]$,
    \item[Case 2]: $q\in[\sky[k].y,\sky[k+1].y-1]$ for some $k\in[i+1,j-1]$,
    \item[Case 3]: $q\in[\sky[j].y, n]$.
\end{itemize}
\noindent In Case 1, the definition of $i$ ensures that $\sky[i].y<c$. Besides, for any $\sky[u]$ in the augmented skyline, $\sky[u].y-1\leq n$, and thus  $[c,\sky[i+1].y-1] \subseteq [\sky[i].y, n]$. $(p,q)$ is not dominated by any key in $\sky$ including $\sky[i]$ and $q\in[c,\sky[i+1].y-1]$, $p\not\in[1,\sky[i].x]$. As $(p,q) \in \rec(b,c)$, $p\in[1,b]$, and thus $p\in[\sky[i].x+1, b]$. Thus, $(p,q)\in r_i$.

In Case 2, consider $k\in [i+1,j-1]$ such that $q\in[\sky[k].y,\sky[k+1].y-1]$. As $(p,q)\not\in\rec(\sky[k])=[1,\sky[k].x]\times[\sky[k].y,n]$ and $p\in [\sky[k].y,\sky[k+1].y-1] \subseteq [\sky[k].y, n]$, $p\not\in[1,\sky[k].x]$. Together with $p\in[1,b]$, we have $p\in[\sky[k].x+1, b]$. Thus $(p,q)\in r_k$.

Lastly, we show Case 3 is impossible. If $\sky[j]=(n+1,n+1)$, then $[\sky[j].y,n]=\phi$; otherwise, since $(p,q)$ is not dominated by $\sky[j]$ and $q\in [\sky[j].y,n]$, 
we have $p\not\in [1,\sky[j].x]$. As $p\in [1,b]$,  $\sky[j].x>b\geq p > \sky[j].x$, contradiction. Thus, Case 3 is impossible.

In conclusion, $\sky[i+1]$ exists and thus the ranges of the three cases are well defined. Case 3 is impossible, thus $q$ can only fall in Case 1 or 2, and in either case, there is $k \in [i,j-1]$ s.t. $(p,q) \in r_k$. 
\end{proof}

Now we prove Theorem~\ref{theorem:uppernlogf}. Firstly, it is trivial to verify \textsc{GenerateKeys} produces $\key(\textT)$. Algorithm~\ref{algo:skyline} examines all keys in $\key(\textT)$ in the increasing order of their hash values. Let $(b,c)$ with hash value $v$ be the current visiting key and $\keys$ be the set of visited keys. We prove by induction that at the end of each for-loop of Algorithm~\ref{algo:skyline}, $\sky$ is the augmented skyline of $\keys\cup\{(b,c)\}$ and the set of compact windows $\pt(b,c)$ generated during the loop constitutes a partition of $\rec(b,c) \setminus \rec(\sky)$. Thus the union of all generated compact windows forms a partition of $\rec(\sky)$. Note that a guard key $(p,q)$ has $\rec(p,q) = \phi$, thus $\rec(\sky) = \rec(\sky(\keys))$.

We assume $\sky$ is the augmented skyline of $\keys$ at the beginning of each for loop. This is true at the beginning of the first loop, when $\keys=\phi$ and $\sky=\{(0,0),(n+1,n+1)\}$. Clearly, $\sky$ is the augmented skyline of $\keys$. Next, we consider two cases for each loop.

In the case there exists a key in $\sky$ dominates $(b,c)$, by Lemma~\ref{lemma:dom}, Line~\ref{algo:skyline:5} evaluates to true and the loop ends\footnote{When using Lemma~\ref{lemma:dom} on $\sky(\keys)$, $j$ does not exist only if Line~\ref{algo:skyline:5} on augmented skyline $\sky$ has $\sky[j']$ taking the guard key $(0,0)$ -- Line~\ref{algo:skyline:5} also tests false; otherwise, the same key is returned, \ie $\sky(\keys)[j]=\sky[j']$. Thus, Line~\ref{algo:skyline:5} correctly facilitates Lemma~\ref{lemma:dom}.}. $\sky$ remains unchanged. The augmented skyline of $\keys\cup\{(b,c)\}$ is $\sky$ as $(b,c)$ is dominated.  $\pt(b,c)=\phi$ is a partition of $\rec(b,c)\setminus\rec(\sky) =  \phi$. 

In the case $(b,c)$ is not dominated by any key in $\sky(\keys)$, by Lemma~\ref{lemma:dom}, Line~\ref{algo:skyline:5} evaluates to false. Furthermore, by Lemma~\ref{lemma:dom2}(\kw{C1}), all keys dominated by $(b,c)$ in $\sky$ (if there is any) are precisely $\sky[i+1],\cdots,\sky[j-1]$. By the definition of skyline, $\sky\cup\{(b,c)\}\setminus\{\sky[i+1],\cdots,\sky[j-1]\}$ must be the augmented skyline of $\keys\cup\{(b,c)\}$. Next, we show that $\pt(b,c)$ generated during the loop is a partition of $\rec(b,c)\setminus\rec(\keys)$. 

Consider Lemma~\ref{lemma:dom2} (\kw{C2}) and determine $i$,$j$ and $r_k$ (denoted as $ [a_k,b_k]\times[c_k,d_k]$), $k \in [i,j-1]$ accordingly. Note that $r_k \neq \phi$ only if $a_k \leq b_k$ and $c_k\leq d_k$. The tuples added to $\pt(\textT,\hf)$ in Lines~\ref{algo:skyline:9}-\ref{algo:skyline:13} are \[ \pt(b,c)=\{W_k\doteq\langle T,h,v,a_k,b_k,c_k,d_k\rangle| k \in [i,j-1], r_k\neq \emptyset\}.\]

First, any $W_k\in \pt(b,c)$ is a compact window because by Lemma~\ref{lemma:iffcontain} and Lemma~\ref{lemma:exclude}, all subsequences in $\rec(b,c)\setminus\rec(\sky(\keys))$ have min-hash $v$. Secondly, all the compact windows $W_k$ in $\pt(b,c)$ constitute a partition of $\rec(b,c)\setminus\rec(\sky)$ because by Lemma~\ref{lemma:dom2} (\kw{C2}), the rectangles $\{r_k|k \in [i,j-1]\}$, form a partition of $\rec(b,c)\setminus\rec(\sky)$.

Since along the loop, the increments $\rec(b,c)\setminus\rec(\sky)$ are disjoint, the union of all generated $\pt(b,c)$ forms a partition of $\rec(\sky(\keys))$.

The resulting $\pt(\textT,\hf)$ satisfies the coverage condition in Definition~\ref{def:partition}: for every subsequence $\textT[x,y]$ of $\textT$, $(x,y)$ is in $\rec(\sky(\keys))$ and is thus covered by exactly one compact window. Denote by $v$ the hash value of subsequence $\textT[x,y]$. By Lemma~\ref{lemma:iffcontain}, $\textT[x,y]$ must contain a key, say $(b,c)$, with hash value $v$. Thus, $(x,y) \in \rec(b,c) \subseteq \rec(\sky(\keys))$. This completes the proof of Theorem~\ref{theorem:uppernlogf}.
\begin{lemma} \label{lemma:disjointcomplete}
Algorithm~\ref{algo:skyline} with \textnormal{\textsc{GenerateActiveKeys}} produces a partition of $\textT$. 
\end{lemma}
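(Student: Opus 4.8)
The plan is to reduce this to Theorem~\ref{theorem:uppernlogf}, which already establishes that Algorithm~\ref{algo:skyline} fed with the full key set $\key(\textT)$ (via \textsc{GenerateKeys}) produces a partition of $\textT$. The only difference with \textsc{GenerateActiveKeys} is the \emph{input stream of keys} — from all of $\key(\textT)$ down to the active subset $\aset(\textT)$. So I would argue that dropping the non-active keys alters neither the evolution of the skyline $\sky$ nor the set of compact windows emitted, whence the output coincides with a valid run of the vanilla algorithm and is therefore a partition by Theorem~\ref{theorem:uppernlogf}.

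First I would verify that \textsc{GenerateActiveKeys} (Algorithm~\ref{algo:genactivekeys}) emits exactly $\aset(\textT)$ with correct hash values: for each token $t$ with position array $A$ it scans $x=1,\dots,|A|$, keeps the running minimum $minkey$ of $\hf(t,1),\dots,\hf(t,x)$, and adds the keys $(A[i],A[i+x-1],\hf(t,x))$ over all length-$x$ windows precisely when $\hf(t,x)<minkey$, i.e.\ precisely when $\hf(t,x)$ is active by Definition~\ref{def:activehash}; these windows are exactly the keys of $t$ with hash value $\hf(t,x)$. Under the no-collision assumption two keys share a hash value iff they share the same $(t,x)$, and activeness is a property of $(t,x)$; hence keys with a common hash value are uniformly active or uniformly non-active, so no hash value is ever split between the two procedures and the active keys appear in both key streams in the same relative order.

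The core step is an inductive coupling of two runs: run $R_1$ of Algorithm~\ref{algo:skyline} on all of $\key(\textT)$ and run $R_2$ on $\aset(\textT)$, both visiting keys in ascending hash-value order with the same (arbitrary) tie-break among keys of equal hash value. I would maintain the invariant that immediately before each active key is processed, $R_1$ and $R_2$ hold the identical skyline and have emitted the identical set of compact windows. The base case is immediate, as both start from $\sky=\{(0,0),(n+1,n+1)\}$ with no windows. For the inductive step, the only keys $R_1$ processes that $R_2$ skips are non-active; by Lemma~\ref{lemma:activehashingnumber1} each such key never enters the skyline, and more precisely (combining Lemma~\ref{lemma:exclude} with Lemma~\ref{lemma:dom}) each is dominated by a key already on the skyline, so Line~\ref{algo:skyline:5} fires and the iteration \textbf{continue}s — leaving $\sky$ unchanged and generating no window. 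Thus the state $R_1$ reaches just before the next active key equals the state $R_2$ carries over, and since both then execute Lines~\ref{algo:skyline:4}--\ref{algo:skyline:15} on the same key against the same skyline, the binary searches, the pruning of $\sky[i+1],\dots,\sky[j-1]$, the insertion of $(b,c)$, and the emitted windows all coincide, preserving the invariant.

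I expect the inductive coupling to be the main obstacle, and within it the delicate point is confirming that every skipped non-active key is a genuine no-op. This requires upgrading Lemma~\ref{lemma:activehashingnumber1} from ``never enters the skyline'' to ``is dominated by a current skyline key and hence skipped at Line~\ref{algo:skyline:5}'' — exactly the dominating-key (and transitivity) argument in that lemma's proof, invoked through Lemma~\ref{lemma:dom}. Once the invariant holds, $R_2$ terminates with the same $\pt(\textT,\hf)$ as $R_1$, which is a partition of $\textT$ by Theorem~\ref{theorem:uppernlogf}, completing the proof.
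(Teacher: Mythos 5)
Your proposal is correct and follows essentially the same route as the paper's own proof: the paper likewise observes that \textsc{GenerateActiveKeys} emits exactly $\aset(\textT)$, invokes Lemma~\ref{lemma:activehashingnumber1} to conclude that every non-active key is skipped at Line~\ref{algo:skyline:5} (hence contributes no compact window and leaves the skyline unchanged), and then appeals to the proof of Theorem~\ref{theorem:uppernlogf}. Your explicit run-coupling invariant and the tie-breaking remark merely spell out in detail what the paper's three-sentence proof leaves implicit.
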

\begin{proof}
It is easy to verify \textsc{GenerateActiveKeys} produces all the active keys $\aset(\textT)$. Based on Lemma~\ref{lemma:activehashingnumber1}, for any non-active key $(b,c)$, it is skipped in our algorithm, which means $\pt(b,c)=\phi$. Based on the proof of Theorem~\ref{theorem:uppernlogf}, Algorithm~\ref{algo:skyline} with \textnormal{\textsc{GenerateActiveKeys}} produces a partitioning of $\textT$. 
\end{proof}


\section{Lower Bound Analysis} \label{sec:lowerbound}

This section analyzes the lower bound of the partition generation problem. A hard case is when all the tokens are the same, \ie there is only one token $t$ in the text $\textT$. Next we show every  partition $\pt(\textT,\hf)$ has $\Omega(n\log n)$ compact windows in expectation in the worst case. Formally, we define two subsequences are \textit{mergeable} as below.

\begin{definition}[Mergeable]\label{def:merge}
Given text $\textT$ and hash function $\hf$, two subsequences $\textT[i,j]$ and $\textT[p,q]$ are mergeable if they can be represented by the same compact window $\cw$, \ie $i,p\in [a,b]$, $b \leq c$,  $j,q \in [c,d]$, and $\hf(\textT[x,y]) = v, \forall x\in [a,b], y \in [c,d]$.
\end{definition}

Clearly, for a set of subsequences from the same text that are mutually not mergeable, it needs at least one compact window in the partition for each subsequence in the set. Thus we analyze the lower bound of the partition generation problem by constructing a set of subsequences that are mutually not mergeable.



\begin{lemma}\label{lemma:nlogn}
For a text $\textT$ of length $n$ of one token $t$, \ie all tokens in $\textT$ are duplicate, any partition $\pt(\textT,\hf)$ of $\textT$ contains \miao{$\Omega(n + n\log n)$} compact windows in expectation. The expectation is introduced by the randomness of the hash function as opposed to any other assumption.
\end{lemma}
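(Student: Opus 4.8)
The plan is to exhibit, in expectation, $\Omega(n + n\log n)$ subsequences that are pairwise non-mergeable (Definition~\ref{def:merge}); since any single compact window can only represent mutually mergeable subsequences, this directly lower-bounds $|\pt(\textT,\hf)|$. First I would exploit the single-token structure: because every token of $\textT$ equals $t$, the hash value set of $\textT[i,j]$ is $\{\hf(t,1),\dots,\hf(t,\ell)\}$ with $\ell=j-i+1$, so the min-hash of $\textT[i,j]$ depends only on its \emph{length}, namely $m(\ell)\doteq\min\{\hf(t,1),\dots,\hf(t,\ell)\}$, a non-increasing function of $\ell$. Call a length $\ell$ a \emph{record length} if $m(\ell)<m(\ell-1)$ (with $m(0)=+\infty$); equivalently $\hf(t,\ell)$ is an active hash value (Definition~\ref{def:activehash}). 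The function $m$ is constant exactly on the maximal intervals (``bands'') between consecutive record lengths, and each band contains precisely one record length, its left endpoint.

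The crux is the following non-mergeability claim, which I would prove directly from Definition~\ref{def:cw}: for any record length $\ell$, no two distinct length-$\ell$ subsequences $\textT[i,i+\ell-1]$ and $\textT[p,p+\ell-1]$ (say $i<p$) can lie in a common compact window. Indeed, a window $\cw$ containing both forces $p\le b\le c\le i+\ell-1$, hence $p-i\le\ell-1$; then its ``lower-left'' cell $\textT[p,\,i+\ell-1]$ also lies in the window but has length $\ell-(p-i)\in[1,\ell-1]$. Since $\ell$ is a record length, $m$ of any length strictly below $\ell$ is strictly larger than $m(\ell)$, so this cell carries a different min-hash, contradicting the single-value requirement of a compact window. (When $p-i\ge\ell$ the constraint $b\le c$ already fails, so no window exists either way; this also covers $\ell=1$.)

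Next I would turn this local fact into a global count. Every compact window lies inside a single band, so it can contain length-$\ell'$ cells only for $\ell'$ equal to that band's unique record length, and by the claim it contains at most one such cell. Because the partition covers every subsequence exactly once, summing over all record lengths $\ell$ the number of length-$\ell$ subsequences (there are $n-\ell+1$ of them) charges each contributing window at most once, giving
\[
|\pt(\textT,\hf)|\ \ge\ \sum_{\ell:\ \ell\text{ is a record length}} (n-\ell+1).
\]
Finally I would take expectations over the randomness of $\hf$. Mirroring the analysis behind Lemma~\ref{lemma:expectednumberofeffkeys}, length $\ell$ is a record with probability $1/\ell$, so by linearity
\[
\mathbb{E}\big[|\pt(\textT,\hf)|\big]\ \ge\ \sum_{\ell=1}^{n}\frac{n-\ell+1}{\ell}\ =\ (n+1)H_n-n\ =\ \Omega(n+n\log n),
\]
where $H_n$ is the $n$-th harmonic number.

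The main obstacle I anticipate is the non-mergeability claim together with its bookkeeping: I must argue carefully that the interior cell produced by a hypothetical merge really lands in a strictly lower band (invoking that $\ell$ is a \emph{record} length, not merely that two lengths differ), and that no window is double-counted across distinct record lengths — which is handled by the observation that a window never leaves its band. The harmonic-sum evaluation is routine; the only modeling point worth flagging is that, exactly as in Lemma~\ref{lemma:expectednumberofeffkeys}, I rely on $\Pr[\hf(t,\ell)\text{ is active}]=1/\ell$.
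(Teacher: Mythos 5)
Your proof is correct and takes essentially the same route as the paper's: both restrict attention to the subsequences whose lengths are active (your ``record'') lengths, prove pairwise non-mergeability by the interior-cell contradiction for equal lengths and by distinct min-hash values across lengths, and evaluate the same harmonic sum $(n+1)H_n - n = \Omega(n + n\log n)$. Your band/charging formalism is merely a repackaging of the paper's statement that this set of subsequences is mutually non-mergeable, so each member forces its own compact window.
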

\begin{proof}
 Let $\hf(t,x_1), \hf(t,x_2), \cdots, \hf(t,x_m)$ be all the active hash of the token $t$ where $1\leq x_1 < x_2 < \cdots < x_m \leq n=|\textT|$. That means, $\hf(t,x_j)>\hf(t,x_i)$ for every $1\leq j < i\leq m$. Construct a set of subsequences $S=S_1\cup S_2\cup \cdots \cup S_m$ where $S_i=\{\textT[p,q]\mid q-p+1=x_{i}\}$. That is to say, $S_i$ is the set of all subsequences in $\textT$ containing exactly $x_i$ tokens. Consider two sets $S_i$ and $S_j$ where $1\leq i\neq j\leq m$. First, they must be disjoint as the lengths of their subsequences are different. Furthermore, based on the definition of active hash, the min-hash of subsequences in $S_i$ and $S_j$ are respectively $\hf(t,x_i)$ and $\hf(t,x_j)$. Since $x_i\neq x_j$, the subsequences from $S_i$ and $S_j$ are not mergeable based on Definition~\ref{def:merge}. 

Assume that there are two distinct subsequences $\textT[p_1,q_1]$ and $\textT[p_2,q_2]$ from the same set $S_i$ that are mergeable. As they all of the same length and are distinct, either $p_1<p_2$ or $p_1>p_2$. Without loss of generality, assume $p_1<p_2$. As the two subsequences are mergeable, there exists a compact window \cw representing both subsequences. Thus $p_1<p_2\leq b\leq c\leq q_1 $. The subsequence $\textT[b,c]$ then contains less than $x_i$ tokens because $c-b+1\leq q_1-p_2+1 = (q_1-p_1+1)+p_1-p_2=x_i + p_1-p_2 < x_i$. Thus $\hf(\textT[b,c])>x_i$ since $\hf(t,x_i)$ is active hash, contradicting the definition of a compact window, i.e.,  $\hf(\textT[b,c])=\hf(\textT[p_1,q_1])=\hf(t,x_i)$.

Based on the discussion above, the subsequences in $S$ are mutually not mergeable. Next, we calculate the expected size of $S$. There are $n-x_i+1$ subsequences in $S_i$. Thus $|S|=\sum_{i=1}^m (n-x_{i}+1)$. Let us consider an arbitrary index $j\in[1,n]$. We observe that $j=x_i$ for some $i\in[1,m]$ if and only if $\hf(t,j)$ is the smallest among $\hf(t,1), \hf(t,2), \cdots, \hf(t,j)$. The probability is $\frac{1}{j}$. Thus we have 
\begin{equation}
\begin{split}
    \mathbb{E}[|S|]&=\mathbb{E}[\sum_{i=1}^m (n-x_{i}+1)] = \sum_{i=1}^{n}(n-i+1)\frac{1}{i} \\
& = (n+1)\sum_{i=1}^n \frac{1}{i} -n \geq (n+1)\ln n - n = \Omega(n\log n).
\end{split}
\end{equation}
\miao{Note that when $ n = 1$, $\mathbb{E}[|S|] = 1$, besides, $n \leq n \log n$ when $n \geq 2$. So $\mathbb{E}[|S|] = \Omega(n\log n + n)$. }
The subsequences in $S$ are mutually not mergeable. One compact window is needed to represent each of the subsequence in $S$. Thus any partition $\pt(\textT,\hf)$ contains $\Omega(n\log n + n)$ compact windows in expectation.
\end{proof}

Lemma~\ref{lemma:nlogn} shows a hard case  in terms of partition size. \miao{The reason we keep factor $n$ in the lower bound is for the special case when $n = 1$. This will be useful when we conduct frequency-aware analysis below.} 








\stitle{When $\fmax\ll n$.} Next we consider the cases when we know the length and the maximum token frequency $\fmax$ of a text.

\begin{lemma} \label{lemma:nlogf}

Consider integer $n>0$ and $\fmax \in [1,n]$, there is a text $\textT$ of length $n$ and maximum frquency $\fmax$ such that any partition $\pt(\textT,\hf)$ of $\textT$ contains $\Omega(n + n\log \fmax)$ compact windows in expectation. 
\end{lemma}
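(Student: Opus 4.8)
The plan is to reduce to the single-token lower bound already established in Lemma~\ref{lemma:nlogn} by constructing a \emph{blocked} witness text. First I would assume for simplicity that $\fmax$ divides $n$ (the remainder can be absorbed into one extra partial block, changing only constant factors) and set $m = n/\fmax$. Choose $m$ distinct tokens $t_1,\dots,t_m$ and let $\textT = t_1^{\fmax} t_2^{\fmax}\cdots t_m^{\fmax}$, so that block $b$ is the contiguous run of $\fmax$ copies of $t_b$. Every token then has frequency exactly $\fmax$, hence the maximum frequency is $\fmax$ and $|\textT| = n$, as required.

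Next, within each block $b$ I would apply the construction from the proof of Lemma~\ref{lemma:nlogn} to the length-$\fmax$ single-token subtext formed by that block, obtaining a set $S^{(b)}$ of subsequences that lie entirely inside block $b$ and are mutually not mergeable, with $\mathbb{E}[|S^{(b)}|]=\Omega(\fmax + \fmax\log\fmax)$. The observation that makes this local argument valid in the full text is that a compact window witnessing the mergeability of two subsequences $\textT[i_1,j_1],\textT[i_2,j_2]$ inside block $b$ would force some $\textT[b',c']$ with $\max(i_1,i_2)\le b'\le c'\le\min(j_1,j_2)$ to share their min-hash; since $b'$ and $c'$ are then pinned inside block $b$, $\textT[b',c']$ is still a single-token subsequence and the active-hash contradiction of Lemma~\ref{lemma:nlogn} applies verbatim.

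I would then argue that subsequences drawn from different blocks are never mergeable either: a subsequence lying inside block $b$ has min-hash equal to an active hash value of $t_b$, and by the no-collision assumption these values are distinct across the distinct tokens $t_1,\dots,t_m$; since every compact window represents subsequences of a single min-hash value $v$ (Definition~\ref{def:cw}), two subsequences with different min-hash cannot be merged. Hence $S=\bigcup_{b=1}^{m}S^{(b)}$ is a collection of pairwise non-mergeable subsequences of $\textT$, and each must be covered by its own compact window in any partition $\pt(\textT,\hf)$.

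Finally, since the sets $S^{(b)}$ are disjoint (their subsequences occupy disjoint index ranges), linearity of expectation gives
\[
\mathbb{E}[|S|]=\sum_{b=1}^{m}\mathbb{E}[|S^{(b)}|]=\frac{n}{\fmax}\cdot\Omega(\fmax+\fmax\log\fmax)=\Omega(n+n\log\fmax),
\]
which lower-bounds the expected number of compact windows in any partition. The main obstacle I anticipate is not the counting but the two non-mergeability arguments—in particular verifying cleanly that a hypothetical intra-block or cross-block compact window reduces to a single-token window confined to one block, so that Lemma~\ref{lemma:nlogn} can be invoked without re-deriving its active-hash machinery. The boundary cases $\fmax=1$ (all tokens distinct, yielding the $\Omega(n)$ term) and $\fmax=n$ (a single block, recovering Lemma~\ref{lemma:nlogn} directly) serve as sanity checks.
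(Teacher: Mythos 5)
Your proposal is correct and follows essentially the same route as the paper's proof: the identical blocked witness text $t_1^{\fmax}t_2^{\fmax}\cdots t_m^{\fmax}$ with $m=n/\fmax$ distinct tokens, per-block invocation of the Lemma~\ref{lemma:nlogn} construction, cross-block non-mergeability via the no-collision assumption, and summation over blocks. Your explicit check that the witness subsequence $\textT[b',c']$ of any intra-block merge stays confined to the block is a detail the paper glosses over ("as shown earlier"), but it is the same argument.
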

\begin{proof}


Without loss of generality, assume that $n$ is a multiple of $\fmax$. Let $m$ be $\frac{n}{\fmax}$. Let $\textT=t_1 t_1\cdots t_1 t_2 t_2\cdots t_2\cdots t_2 t_m\cdots t_m$ where each $t_i$ (where $1\leq i \leq m$) appears $\fmax$ times next to each other in $\textT$. 

We  construct a set of subsequences $S=S^1\cup S^2\cup \cdots \cup S^m$ where $S^i$ is constructed in the same way as in the proof in Lemma~\ref{lemma:nlogn} for $t_i$. The subsequences within each $S^i$ are mutually not mergeable as shown earlier. Next we show any two subsequences $\textT[p_1,q_1]$ and $\textT[p_2,q_2]$ from $S^i$ and $S^j$ (where $i\neq j)$ are not mergeable. This is obvious as $\hf(\textT[p_1,q_1])=\hf(t_i,x)$ for some $x\in[1,f_i]$ and $\hf(\textT[p_2,q_2])=\hf(t_j,y)$ for some $y\in[1,f_y]$. However, $\hf(t_i,x)\neq\hf(t_j,y)$  assuming the universal hash function $\hf$ has no collision. 

Based on the discussion above, it needs $\Omega(\fmax(1+\log \fmax))$ compact windows to represent all the subsequences $\textT[p,q]$ where $\textT[p] =\text[q] = t_i$ for each $1\leq i\leq m$. Thus in total, it needs $\Omega(\sum_{i\in[m]} \fmax(1+\log \fmax)) = \Omega(n + n\log \fmax)$ compact windows to represent all the subsequences in $\textT$. Thus every partition $\pt(\textT,\hf)$ has $\Theta(n + n\log\fmax)$ compact windows.   
\end{proof}


Now we prove Theorem~\ref{theorem:timecomplexity}. Theorem~\ref{theorem:uppernlogf} shows that the expected number of compact windows of Algorithm~\ref{algo:skyline} is $O(n + n\log \fmax)$ while Lemma~\ref{lemma:nlogf} shows a case of $\textT$, given $n$ and $\fmax$, such that no partition can have the number of compact windows smaller than $\Omega(n + n\log \fmax)$ in expectation. Thus, given $n$ and $\fmax$, Algorithm~\ref{algo:skyline} is optimal in the worst case.

\end{document}